\documentclass[a4paper,11pt]{article}
\usepackage[utf8]{inputenc}
\pdfoutput=1 %for arxiv
\usepackage[margin=3cm]{geometry}

\usepackage{amsfonts}
\usepackage{amsmath}
\usepackage{amssymb}
\usepackage{amsthm}
\usepackage{float}
\usepackage{authblk}
\allowdisplaybreaks

\usepackage{microtype}
\usepackage{hyperref}
\usepackage[svgnames]{xcolor}
\hypersetup{colorlinks={true},linkcolor={DarkBlue},citecolor={DarkGreen}}

\usepackage[capitalise,,nameinlink]{cleveref}

\usepackage{tikz}
\usetikzlibrary{positioning,arrows.meta}
\usepackage{subcaption}

\theoremstyle{definition}
\newtheorem{definition}{Definition}

\theoremstyle{plain}
\newtheorem{theorem}{Theorem}
\newtheorem{lemma}[theorem]{Lemma}
\newtheorem{corollary}[theorem]{Corollary}
\newtheorem{proposition}[theorem]{Proposition}

\theoremstyle{remark}

\newtheorem*{remark*}{Remark}
\newtheorem{claim}{Claim}
\newtheorem*{claim*}{Claim}

\renewcommand*\qedsymbol{$\blacksquare$}
\newenvironment{claimproof}[1][\proofname]{\begin{proof}[#1]\renewcommand*{\qedsymbol}{$\square$}}{\end{proof}}

\DeclareMathOperator*{\argmin}{argmin}

\title{The Hairy Ball problem is PPAD-complete\thanks{A preliminary version of this paper appeared in the proceedings of the 46th International Colloquium on Automata, Languages, and Programming (ICALP 2019).}}
\author[1]{Paul W. Goldberg}
\author[2]{Alexandros Hollender}
\affil[1]{Department of Computer Science, University of Oxford \ \ \ \ \ \ \ \  
	{\tt Paul.Goldberg@cs.ox.ac.uk}}
\affil[2]{Department of Computer Science, University of Oxford \ \ \ \ \ \ \ \ {\tt alexandros.hollender@cs.ox.ac.uk}}

\begin{document}

\maketitle

\begin{abstract}
The Hairy Ball Theorem states that every continuous tangent vector field on an even-dimensional sphere must have a zero. We prove that the associated computational problem of (a) computing an approximate zero is PPAD-complete, and (b) computing an exact zero is FIXP-hard. We also consider the Hairy Ball Theorem on toroidal instead of spherical domains and show that the approximate problem remains PPAD-complete. On a conceptual level, our PPAD-membership results are particularly interesting, because they heavily rely on the investigation of multiple-source variants of \textsc{End-of-Line}, the canonical PPAD-complete problem. Our results on these new \textsc{End-of-Line} variants are of independent interest and provide new tools for showing membership in PPAD. In particular, we use them to provide the first full proof of PPAD-completeness for the \textsc{Imbalance} problem defined by Beame et al.\ in 1998.
\end{abstract}

Keywords: Computational Complexity, TFNP, PPAD, End-of-Line

\newpage

\tableofcontents

\newpage

\section{Introduction}

The Hairy Ball Theorem (HBT) is a well-known topological theorem stating that there is no non-vanishing continuous tangent vector field on an even-dimensional $k$-sphere. It has various informal statements such as ``you can't comb a hairy ball flat without creating a cowlick''\footnote{\url{https://en.wikipedia.org/wiki/Hairy_ball_theorem}}, or ``there is a point on the surface of the earth with zero horizontal wind velocity''. The HBT is superficially reminiscent of the Borsuk-Ulam Theorem, stating that given any continuous mapping from the 2-sphere to the plane, there are two antipodal points that map to the same value. (Informally, ``there are two antipodal points on the surface of the earth where the temperature and pressure are the same''). As we shall see, the present paper highlights a fundamental difference between the two, in terms of the complexity class naturally associated with each of them.

The HBT was first proved in 1885 by Poincar{\'e}~\cite{poincare1885courbes} for the case $k=2$. The theorem as stated for all even $k$ was proved in 1912 by Brouwer~\cite{brouwer1912abbildung}. Accordingly, this result is sometimes also called the Poincar{\'e}-Brouwer theorem.
In fact, the result proved by Poincar{\'e}~\cite{poincare1885courbes} is stronger than stated above. It follows from it that for any (sufficiently well-behaved) 2-dimensional manifold with genus $g \neq 1$, any continuous tangent vector field must have a zero. In particular, this means that the HBT also holds for the torus of genus $g$ for $g \geq 2$, i.e.\ the 2-dimensional torus with $g$ holes. It is easy to see that it does not hold for the standard single-hole torus.

Over the years, various papers in the {\em American Mathematical Monthly} have presented alternative proofs of the Hairy Ball Theorem and variants, for example \cite{jarvis2004hairy,milnor1978hairyball,boothby1971hairyball,eisenberg1979hairyball,mcgrath2016hairyball,curtin2018hairyball}.

\smallskip
Topological existence results (such as the HBT, Borsuk-Ulam, and the Brouwer and Banach fixpoint theorems) have a very interesting relationship with complexity classes of search problems in which any instance has a guaranteed solution. Any such theorem has a corresponding computational challenge, of searching for such a solution, given a circuit that computes an appropriate function. The assumption that these complexity classes are distinct from each other (the ones of main interest here being PPAD and PPA, discussed below in more detail) provides a taxonomy of these theorems.
Our results highlight a fundamental distinction between the HBT and Borsuk-Ulam, by showing that the corresponding search problem for the HBT is characterised by the complexity class PPAD, in contrast to Borsuk-Ulam, which is characterised by PPA~\cite{aisenberg20152}. The complexity-theoretic analysis of topological search problems provides a well-defined sense in which the HBT is ``Brouwer-like'' rather than ``Borsuk-Ulam-like''. It has previously been noted that the HBT may be used to prove Brouwer's fixed point theorem~\cite{milnor1978hairyball}, but not the other way around. Indeed, the existing proof of HBT using Sperner's Lemma~\cite{jarvis2004hairy} actually uses a generalisation of Sperner's Lemma, which was not known to be equivalent to Brouwer's fixed point theorem prior to our work.

\subsection{Background on NP total search and PPAD}

The complexity class TFNP is the set of all {\em total} function computation problems in NP: functions where {\em every} input has an efficiently-checkable solution (in \cref{sec:tfnp} we give a precise definition). Many problems in TFNP appear to be computationally difficult, notably \textsc{Factoring}, the problem of computing a prime factorisation of a given number, also \textsc{Nash}, the problem of computing a Nash equilibrium of a game. However, such problems are unlikely to be NP-hard, due to the 1991 result of Megiddo and Papadimitriou~\cite{megiddo1991total} showing that TFNP problems cannot be NP-hard unless NP is equal to co-NP. This basic fact, that hard TFNP problems are in a very strong sense ``NP-intermediate'', provides TFNP's strong theoretical appeal. This has led to the classification of these problems in terms of certain syntactic subclasses of TFNP, whose problems are shown to be total due to some basic combinatorial principle. The best-known of these classes are PLS, PPP, PPAD, and PPA, identified by Papadimitriou in 1994~\cite{papadimitriou1994complexity}.
\begin{itemize}
\item PPAD consists of problems whose totality is based on the principle that given a source in a directed graph whose vertices have in-degree and out-degree at most 1, there exists another degree-1 vertex. Its canonical problem \textsc{End-of-Line} consists of an exponentially-large graph of this kind, presented concisely via a circuit.
\item PPA differs from PPAD in that the graph need not be directed; being a more general principle, PPA is thus a superset of PPAD. Its canonical problem \textsc{Leaf} is similar, only the graph is undirected.
\end{itemize}
Subsequently, many TFNP problems of interest were shown PPAD-complete~\cite{daskalakis2009complexity,chen2009complexity,KPRST13,DQS12cake}, while more recently others were shown PPA-complete~\cite{FRGstoc2018,FRGstoc2019}. Despite their similar definitions, PPAD and PPA are usually conjectured to be different, and (along with other syntactic TFNP subclasses) are separated by oracles~\cite{beame1998relative}.

\subsection{Our results and their significance}

Given the long-standing interest in the Hairy Ball Theorem, it is natural to study the corresponding computational search problem. In this paper, we prove that computing an approximate zero of a Hairy Ball vector field is PPAD-complete. While many PPAD-completeness results already exist, a noteworthy novelty of our results is that we find that computing HBT solutions corresponds with {\em multiple-source} variants of the \textsc{End-of-Line} problem: given a large directed graph implicitly represented by a circuit, suppose you are shown several sources and told to find another degree-1 vertex. This is in contrast with previous PPAD-complete problems that naturally reduce to standard single-source \textsc{End-of-Line}.

In \cref{sec:EOL} we prove that these multiple-source \textsc{End-of-Line} variants are PPAD-complete (membership of PPAD being the tricky aspect). Our results make progress on the general question (studied in~\cite{goldberg2018towards}) of whether there exist combinatorial principles indicating totality of search problems, that are fundamentally different from the known ones that give rise to complexity classes such as PPAD. In particular, in \cref{sec:imbalance}, we note that a proof of PPAD-completeness for the \textsc{Imbalance} problem by Beame et al.~\cite{beame1998relative} is incomplete and provide a full proof using our results.

The generalisation of Poincar{\'e}'s result to higher dimensions is called the Poincar{\'e}-Hopf theorem (see e.g.~\cite{guillemin1974differential}). This theorem relates the number and types of zeros of a vector field on a manifold with its Euler characteristic, a topological invariant. In particular, if the Euler characteristic of a manifold is not $0$, then any continuous tangent vector field on the surface must have a zero. The Euler characteristic of even-dimensional spheres is $2$, while it is $2(1-g)$ for 2-dimensional toruses of genus $g \geq 2$. For odd-dimensional spheres it is $0$.

We believe that the reduction to multiple-source \textsc{End-of-Line} is not an artefact of our techniques, but instead intrinsically related to the Euler characteristic of the domain. Indeed, the reduction from the HBT problem on even-dimensional spheres to \textsc{End-of-Line} yields $2$ sources (\cref{sec:HBinPPAD}). On the other hand, if we consider the HBT problem on the 2-dimensional torus of genus $g \geq 2$, then we obtain $2(g-1)$ sources (\cref{sec:gTorus}). The connection between HBT and directed graph problems has previously only appeared in a proof for the 2-dimensional sphere case~\cite{jarvis2004hairy}.

Finally, we note that PPAD-hardness is obtained by constructing a HBT vector field from multiple copies of a discrete Brouwer fixpoint problem. The usage of multiple copies is a new conceptual feature, closely related to the multi-source aspect. Using the same high-level idea, we also provide a FIXP-hardness result for the problem of computing an \emph{exact} solution (\cref{sec:HBFIXPhard}).

\subsection{Other related work}

Banach's Fixed Point Theorem~\cite{banach1922} says that a {\em contraction map} has a unique fixpoint. Its corresponding computational problem {\sc Contraction}, is to find a fixed point of a given contraction map. Some versions of {\sc Contraction} have been shown complete for CLS, a subclass\footnote{A very recent result~\cite{FearnleyGHS-gradient} shows that, in fact, CLS = PPAD $\cap$ PLS.} of PPAD \cite{daskalakis2011continuous,daskalakis2018converse,fearnley2017cls}.
The search for Brouwer fixpoints (including discretised versions of Brouwer functions) is PPAD-complete for most variants of the problem~\cite{papadimitriou1994complexity,chen2009complexity}, which is why we say the HBT is ``Brouwer-like''.
Finally --- in contrast --- the computational problem of searching for a Borsuk-Ulam solution is PPA-complete~\cite{aisenberg20152}. Other topological existence results that have PPA-complete search problems include the Hobby-Rice theorem~\cite{FRGstoc2018} and the Ham Sandwich Theorem~\cite{FRGstoc2019}.

\subsection{Future research directions}

We have obtained a satisfying answer to the question of the computational complexity of the Hairy Ball Theorem, if we are looking for an approximate solution. For other solution concepts related to exact solutions, we have provided a FIXP-hardness result. This leaves open the question of whether the problem is FIXP-complete in this case. A first step in that direction would be to try to reduce Hairy Ball to Borsuk-Ulam, even though no such (fully constructive) mathematical proof seems to be known.

Our reduction from approximate Hairy Ball to approximate Brouwer cannot be used on the exact versions of these problems, because the step where we reduce from multi-source \textsc{End-of-Line} to standard \textsc{End-of-Line} corresponds to a \emph{discontinuous} mapping between the corresponding topological structures. This raises the question of whether continuous mappings are an important subclass of reductions and motivates further study on this topic.

Our results on multiple-source variants of the \textsc{End-of-Line} problem open the way for two new research directions. First, they provide a new tool for showing membership of PPAD, which can be used to put further problems in this class. It seems very unlikely that the Hairy Ball Theorem should be the only ``natural'' application of these results. Furthermore, a second interesting research direction is investigating the complexity of \textsc{End-of-Line} with a super-polynomial number of known sources (implicitly given in the input).

\section{Preliminaries}

Let $k$ be a positive integer. For $x \in \mathbb{R}^k$, $\|x\|_2$, $\|x\|_1$ and $\|x\|_\infty$ denote the standard $\ell_2$-norm, $\ell_1$-norm and $\ell_\infty$-norm respectively. For $x,y \in \mathbb{R}^k$, $\langle x,y \rangle := \sum_{i=1}^k x_i y_i$ denotes the inner product.

The $k$-dimensional unit sphere in $\mathbb{R}^{k+1}$ (or $k$-sphere) is denoted $S^k = \{x \in \mathbb{R}^{k+1} : \|x\|_2=1\}$. A continuous tangent vector field on $S^k$ is a continuous function $f: S^k \to \mathbb{R}^{k+1}$ such that for all $x \in S^k$ we have $\langle f(x), x \rangle = 0$. The Hairy Ball Theorem can be stated as follows:

\begin{theorem}[Poincar{\'e}~\cite{poincare1885courbes}--Brouwer~\cite{brouwer1912abbildung}]
If $k \geq 2$ is even, then for any continuous tangent vector field $f: S^k \to \mathbb{R}^{k+1}$, there exists $x \in S^k$ such that $f(x)=0$.
\end{theorem}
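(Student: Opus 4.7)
The plan is to argue by contradiction. Suppose $f: S^k \to \mathbb{R}^{k+1}$ is a continuous tangent vector field with no zero. Normalising, $v(x) := f(x)/\|f(x)\|_2$ is a continuous unit tangent vector field satisfying $\|v(x)\|_2 = 1$ and $\langle v(x), x\rangle = 0$ for every $x \in S^k$.

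The central idea is to use $v$ to build an explicit homotopy on $S^k$ between the identity map and the antipodal map $x \mapsto -x$, and then to derive a contradiction from the fact that these two self-maps have different Brouwer degrees when $k$ is even. Define $H: S^k \times [0,1] \to \mathbb{R}^{k+1}$ by $H(x,t) = \cos(\pi t)\, x + \sin(\pi t)\, v(x)$. The orthogonality of $x$ and $v(x)$, together with $\|x\|_2 = \|v(x)\|_2 = 1$, yields $\|H(x,t)\|_2^2 = \cos^2(\pi t) + \sin^2(\pi t) = 1$, so $H$ takes values in $S^k$ and is continuous. At $t=0$ it equals the identity, and at $t=1$ it equals the antipodal map, so these two maps of $S^k$ to itself are homotopic.

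To finish, one invokes the Brouwer degree, a homotopy invariant of continuous self-maps of $S^k$. The identity has degree $+1$. The antipodal map is a composition of $k+1$ coordinate reflections, each of degree $-1$, so its degree is $(-1)^{k+1}$. For even $k$ this equals $-1 \neq +1$, yielding the contradiction. (Note that for odd $k$ the two degrees agree, consistent with the existence of non-vanishing tangent fields on odd-dimensional spheres.)

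The main obstacle is that the elementary development of Brouwer degree via counting signed preimages of regular values requires smoothness of the map, while the $v$ produced above is only continuous. The standard remedy, in the spirit of Milnor's argument~\cite{milnor1978hairyball}, is to first approximate $v$ uniformly by a polynomial vector field using Stone--Weierstrass, then project pointwise onto the tangent space of $S^k$ to recover tangency, and renormalise; the resulting $H$ is smooth and the signed-preimage argument applies verbatim. Alternatively one may bypass smoothness altogether by defining degree via induced maps on singular homology $H_k(S^k) \cong \mathbb{Z}$, in which case the homotopy argument above goes through directly for continuous $v$.
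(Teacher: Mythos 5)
Your proof is correct and is the classical degree-theoretic argument: assuming no zero, normalise to a unit tangent field $v$, use $H(x,t)=\cos(\pi t)\,x+\sin(\pi t)\,v(x)$ to exhibit a homotopy on $S^k$ between the identity and the antipodal map, and then contradict the homotopy invariance of degree since for even $k$ these maps have degrees $+1$ and $-1$ respectively. The treatment of the smoothness issue (Stone--Weierstrass plus projection and renormalisation, or working directly in $H_k(S^k)$) is the standard remedy and closes the argument.

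The paper, however, does not prove this theorem; it is cited to Poincar\'e and Brouwer and used as a black box. The closest thing to a proof in the paper is the observation, in the overview preceding \cref{thm:HBinPPAD}, that the PPAD-membership reduction yields an alternative and ``fairly cumbersome'' proof route: stereographically unfold the sphere, set up a Sperner colouring on a triangulated cross-polytope, show that the boundary conditions force exactly two oriented Sperner entrances, conclude by a path-following / Brouwer-type argument that a panchromatic simplex (hence an $\varepsilon$-approximate zero) exists for every $\varepsilon$, and finally take a convergent subsequence on the compact sphere. So the two approaches are genuinely different in flavour: yours is the short, global, homotopy/degree argument (the standard Milnor-style proof), which is clean but relies on either degree theory or singular homology; the paper's implicit route is combinatorial and effective, proceeding through Sperner's lemma and Brouwer's fixed point theorem, which is longer but is exactly what underlies the complexity-theoretic result that the approximate problem lies in PPAD. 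Your proof establishes existence directly but says nothing algorithmic; the paper's route is constructive at each finite resolution, which is precisely what the paper needs.
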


\subsection{Model of Computation}

We work in the standard Turing machine model. All numbers appearing in computations are rational numbers where the numerator and denominator are integers represented in binary. For a rational number $x$, $\text{size}(x)$ denotes the size of the representation of $x$, i.e.\ the sum of the representation length of its numerator and denominator in binary. For an arithmetic circuit $F$, $\text{size}(F)$ denotes the number of gates in the circuit added to the representation length of any rational constants used by the circuit.

\subsection{Formal definition of TFNP}\label{sec:tfnp}

A computational search problem is given by a binary relation $R \subseteq \{0,1\}^* \times \{0,1\}^*$, interpreted as follows: $y \in \{0,1\}^*$ is a solution to instance $x \in \{0,1\}^*$, if and only if $(x,y) \in R$. The search problem $R$ is in FNP (\emph{Functions in NP}), if $R$ is polynomial-time computable (i.e.\ $(x,y) \in R$ can be decided in polynomial time in $|x|+|y|$) and there exists some polynomial $p$ such that $(x,y) \in R \implies |y| \leq p(|x|)$. Here $\{0,1\}^*$ denotes all finite length bit-strings and $|x|$ is the length of bit-string $x$.

The class TFNP (\emph{Total Functions in NP}~\cite{megiddo1991total}) contains all search problems $R$ that are in FNP and are \emph{total}, i.e.\ every instance has at least one solution. Formally, this corresponds to requiring that for every $x \in \{0,1\}^*$ there exists $y \in \{0,1\}^*$ such that $(x,y) \in R$.

Let $R$ and $S$ be total search problems in TFNP. We say that $R$ (many-one) reduces to $S$, if there exist polynomial-time computable functions $f,g$ such that
$$(f(x),y) \in S \implies (x,g(x,y)) \in R.$$
Note that if $S$ is polynomial-time solvable, then so is $R$. We say that two problems $R$ and $S$ are (polynomial-time) equivalent, if $R$ reduces to $S$ and $S$ reduces to $R$.

\subsection{The End-of-Line problem}

To be PPAD-complete, a problem must be equivalent to \textsc{End-of-Line}. The \textsc{End-of-Line} problem is informally defined as follows: given a directed graph where each vertex has in- and out-degree at most $1$ and given a known source of this graph, find a sink or another source. The problem is computationally challenging, because the graph is not given explicitly in the input. Instead, we are given an implicit concise representation of the graph through circuits that compute the predecessor and successor  of a vertex in the graph. In what follows, we sometimes interpret the input and output of the circuits, which are elements in $\{0,1\}^n$, as the numbers $\{0,1, \dots, 2^n-1\}$.

\begin{definition}[\textsc{End-of-Line}~\cite{daskalakis2009complexity}]\label{def:endofline}
The \textsc{End-of-Line} problem is defined as: given Boolean circuits $S,P$ with $n$ input bits and $n$ output bits and such that $P(0) = 0 \neq S(0)$, find $x$ such that $P(S(x)) \neq x$ or $S(P(x)) \neq x \neq 0$.
\end{definition}
The circuits define a graph as follows. There is a directed edge from vertex $x$ to $y$ ($x \neq y$), if and only if $S(x)=y$ and $P(y)=x$. Note that any badly defined edge, i.e. $S(x) = y$ and $P(y) \neq x$, or $P(y) = x$ and $S(x) \neq y$, qualifies as a solution of \textsc{End-of-Line} as defined above (because $P(S(x)) \neq x$ or $S(P(x)) \neq x$ respectively). Note that $0$ is a source of the graph, unless $P(S(0)) \neq 0$, in which case $0$ is a valid solution to the problem as stated above.

It is easy to check that this formal definition of the problem is computationally equivalent to the informal description given above. By definition, \textsc{End-of-Line} is PPAD-complete~\cite{papadimitriou1994complexity}. Furthermore, reduction from \textsc{End-of-Line} is a very common technique to show PPAD-hardness (e.g.~\cite{daskalakis2009complexity,chen2009complexity}). In \cref{sec:EOL} we show that the multiple-source version \textsc{MS-EoL} (\cref{def:mulsourceeol}) is equivalent to \textsc{End-of-Line}.

\section{The Hairy-Ball Problem}

\subsection{The \texorpdfstring{$k$}{k}D-Hairy-Ball problem}

The Hairy Ball Theorem naturally yields a corresponding computational problem. We are given a continuous tangent vector field $f$ on the unit sphere and have to find a point where it is zero. In trying to formalise this, some issues need to be addressed. First, one has to decide how the vector field should be represented in the input. Here we take the usual approach of assuming that it is represented as an arithmetic circuit.

Before we discuss the types of gates that we want to allow in the circuit, let us briefly handle the second issue: the vector field might not have a rational zero. Indeed, consider the following example: at $x \in S^2$ the vector field is simply the vector $(1,1,1)$ projected onto the tangent space of $S^2$ at $x$. In this case, the only solutions are $\pm (1/\sqrt{3},1/\sqrt{3},1/\sqrt{3})$. Thus, we cannot hope to always output an exact solution. We bypass this problem by asking for an \emph{approximate} solution instead, i.e.\ a point $x \in S^2$ such that $\|f(x)\|_\infty \leq \varepsilon$ for some $\varepsilon > 0$ provided in the input. This notion of approximate solution is the standard one used when studying topological existence theorems in the context of TFNP (e.g.\ Brouwer's fixed point theorem or the Borsuk-Ulam theorem).

As mentioned above, the vector field will be represented as an arithmetic circuit. In the case of $S^2$, the circuit will have three input gates and three output gates. The arithmetic circuit will be allowed to use gates $\{+,-,\times \zeta, \max, \min\}$ and rational constants. All the gates have fan-in 2, except $\times \zeta$ which has fan-in 1 and corresponds to multiplication by a rational constant $\zeta$. Note that such a circuit is polynomially equivalent to a circuit only using gates $\{+, \times \zeta, \max\}$ and rational constants, since the other gates can be efficiently simulated using these. These circuits correspond to LINEAR-FIXP-type circuits that are known to be sufficient to obtain PPAD-hardness of \textsc{Brouwer}~\cite{etessami2010fixp}. A discussion about why we don't use more powerful gates in our definition can be found in the next section.

This type of circuit yields piece-wise affine functions that are continuous. Furthermore, it has the following nice property: for any such arithmetic circuit $F$, and any rational $x$, we can compute $F(x)$ exactly in polynomial time in $\text{size}(F)$ and $\text{size}(x)$. One potential issue is that $F$ might not be \emph{tangent} to the sphere, but this is easy to fix by simply considering the vector field given by the projection of $F$ onto the corresponding tangent space to the sphere. Thus, we define the computational problem as follows:

\begin{definition}[\textsc{$k$D-Hairy-Ball}]\label{def:linearHB}
Let $k \geq 2$ be even. The \textsc{$k$D-Hairy-Ball} problem is defined as: given $\varepsilon > 0$ and an arithmetic circuit $F$ with $k+1$ inputs and outputs, using gates $\{+, \times \zeta, \max \}$ and rational constants, find $x \in S^k$ such that $\left\| P_x [F(x)] \right\|_\infty \leq \varepsilon$.
\end{definition}

Here $P_x[ \cdot ]$ denotes the projection onto the tangent space to the sphere $S^k$ at $x \in S^k$. Note that for any $v \in \mathbb{R}^{k+1}$, we have $P_x[v] = v - \langle v,x \rangle x$, because $\|x\|_2 = 1$. Thus, the projection of any rational vector $v$ onto the tangent space at rational $x \in S^k$ can be computed exactly in polynomial time in $\text{size}(v)$ and $\text{size}(x)$. Note that we are looking for a solution with respect to the $\ell_\infty$-norm, but we could also have used the $\ell_2$- or $\ell_1$-norm, since all these versions are computationally equivalent.

\textsc{$k$D-Hairy-Ball} lies in TFNP. Clearly, any solution can be checked in polynomial time. Totality of \textsc{$k$D-Hairy-Ball} will immediately follow when we prove that it lies in PPAD (\cref{cor:linearHBinPPAD}).

\begin{lemma}\label{lem:linearHBlipschitz}
Let $k \geq 2$ be even. Let $F$ be an arithmetic circuit with $k+1$ inputs and outputs, using gates $\{+, \times \zeta, \max \}$ and rational constants. Then, the function $S^k \to \mathbb{R}^{k+1}$, $x \mapsto P_x[F(x)]$ is Lipschitz-continuous with Lipschitz constant $L = k \cdot 2^{\textup{size}(F)^2+3}$ (w.r.t. $\ell_\infty$-norm).
\end{lemma}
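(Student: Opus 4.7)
The plan is to decompose $P_x[F(x)] = F(x) - \langle F(x), x\rangle\, x$ and bound the Lipschitz constants of the two pieces separately. Set $s = \textup{size}(F)$, let $L_F$ denote the Lipschitz constant of $F$ as a map from $(S^k,\|\cdot\|_\infty)$ to $(\mathbb{R}^{k+1},\|\cdot\|_\infty)$, and let $M_F := \sup_{x \in S^k}\|F(x)\|_\infty$. The bulk of the work is to establish $L_F, M_F \leq 2^{s^2}$; the final bound then falls out of combining these with a product-rule style identity and absorbing the factor of $k$ and the $+3$ in the exponent into the constants.

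I would bound $L_F$ and $M_F$ by a single induction over the gates in topological order, tracking, for each scalar function $g_i$ computed at gate $i$, an upper bound on both its Lipschitz constant and its sup-norm on $S^k$. An input gate contributes Lipschitz constant $1$ and sup-norm at most $1$; a constant gate contributes Lipschitz constant $0$ and sup-norm at most $2^s$, since its representation length is at most $s$. A $+$ gate at worst doubles both quantities, a $\max$ gate preserves them up to a max, and a $\times\zeta$ gate multiplies both by $|\zeta| \leq 2^s$. Thus each gate inflates the relevant quantity by a multiplicative factor of at most $2^s$, and with at most $s$ gates this compounds to give $L_F, M_F \leq 2^{s^2}$.

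For the combination step, I would use the algebraic identity
\[
\langle F(x),x\rangle\, x - \langle F(y),y\rangle\, y = \langle F(x)-F(y),\, x\rangle\, x + \langle F(y),\, x-y\rangle\, x + \langle F(y),\, y\rangle\, (x-y),
\]
and bound each summand via $|\langle u,v\rangle| \leq (k+1)\|u\|_\infty\|v\|_\infty$ together with $\|x\|_\infty,\|y\|_\infty \leq 1$, which yields a contribution of at most $(k+1)(2L_F + M_F)\|x-y\|_\infty$ from this bracket. Adding the trivial contribution $\|F(x)-F(y)\|_\infty \leq L_F\|x-y\|_\infty$ from the other piece and substituting $L_F, M_F \leq 2^{s^2}$ produces an overall Lipschitz constant of at most $\bigl(3(k+1)+1\bigr)\cdot 2^{s^2} \leq k \cdot 2^{s^2+3}$ (using $k \geq 2$), which is the claimed bound. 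The only real obstacle is arithmetic bookkeeping: collecting the $k+1$, the combinatorial constants from the triangle inequality, and the inductive constants into the clean expression in the statement; the analytic content and the inductive step are entirely routine.
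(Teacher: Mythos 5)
Your proposal is correct and follows essentially the same route as the paper: bound the Lipschitz constant and sup-norm of $F$ by induction over the gates (getting $2^{\mathrm{size}(F)^2}$ for both), then combine via the product rule applied to $v \mapsto v - \langle v,x\rangle x$ to obtain the $(3k+4)\cdot 2^{\mathrm{size}(F)^2} \leq k\cdot 2^{\mathrm{size}(F)^2+3}$ bound. The only (immaterial) slip is that your telescoping identity actually yields $(k+1)(L_F + 2M_F)$ rather than $(k+1)(2L_F + M_F)$, but since both $L_F$ and $M_F$ are bounded by $2^{\mathrm{size}(F)^2}$ this does not affect the final constant.
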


\begin{proof}
It is easy to see that if functions $f_1$ and $f_2$ are $L_1$- and $L_2$-Lipschitz respectively, then $f_1+f_2$ is $(L_1+L_2)$-Lipschitz and $\max\{f_1,f_2\}$ is $\max\{L_1,L_2\}$-Lipschitz. Since any rational constant used in $F$ has absolute value bounded by $2^{\text{size}(F)}$, it follows that $f_1 \times \zeta$ is $L_1 2^{\text{size}(F)}$-Lipschitz. Using the fact that each input gate corresponds to a $1$-Lipschitz function, it follows that $F$ must be $2^{\text{size}(F)^2}$-Lipschitz.

Using the same kind of argument it is easy to show that for $x \in S^k$ we always have $\|F(x)\|_\infty \leq 2^{\text{size}(F)^2}$. Using the fact that $f_1 \times f_2$ is $(\|f_1\|_\infty L_2 + L_1 \|f_2\|_\infty)$-Lipschitz and the definition $P_x[v] = v - \langle v,x \rangle x$, we obtain that $x \mapsto P_x[F(x)]$ is Lipschitz-continuous on $S^k$ with constant $L \leq (3k+4)2^{\text{size}(F)^2} \leq k \cdot 2^{\text{size}(F)^2+3}$.
\end{proof}

Our main result is \cref{thm:HBPPADcomplete}. Containment in PPAD, which turns out to be the most challenging part of this result, is presented in \cref{sec:HBinPPAD} (using the multiple-source \textsc{End-of-Line} results of \cref{sec:EOL}). PPAD-hardness is presented in \cref{sec:HBhardness}.
\begin{theorem}\label{thm:HBPPADcomplete}
For all even $k \geq 2$, \textsc{$k$D-Hairy-Ball} is \textup{PPAD}-complete.
\end{theorem}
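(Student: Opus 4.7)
The plan has two components matching the two directions of the theorem. For membership in PPAD, I would reduce an arbitrary instance of \textsc{$k$D-Hairy-Ball} to the multi-source \textsc{End-of-Line} problem (\textsc{MS-EoL}) announced in \cref{sec:EOL}; for hardness, I would reduce from a canonical PPAD-hard discrete Brouwer problem by embedding multiple copies on the sphere.

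For the membership direction I would first focus on $k=2$ and then extend to higher even $k$ by the same combinatorial argument (or by a straightforward embedding reduction). The plan is to triangulate $S^k$ at a scale $\delta = \Theta(\varepsilon/L)$, with $L$ the Lipschitz constant from \cref{lem:linearHBlipschitz}, so that if a simplex contains no $\varepsilon$-approximate zero then the tangential field has an unambiguous direction throughout. On each simplex I would read off a Sperner-like label from the approximate direction of $P_x[F(x)]$ expressed in a local frame of the tangent space, and use those labels to define a directed pairing on adjacent simplices: the standard trick in which ``matching facets'' of a distinguished label type give a graph in which every vertex has in- and out-degree at most one. The subtlety on the sphere is that no continuous tangent frame exists globally, so one must cover $S^k$ by two charts (e.g., stereographic projections from the two poles) and stitch the labels along the overlap. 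Around each of the two chart centres, a single distinguished source vertex arises; by \cref{lem:linearHBlipschitz} no other source can be created, so the resulting graph has exactly $\chi(S^k)=2$ sources, matching the Hairy Ball/Poincaré–Hopf count. Feeding this graph with its two prescribed sources into \textsc{MS-EoL} yields a polynomial-time reduction, and any non-source degree-$1$ simplex returned by \textsc{MS-EoL} can be translated in polynomial time to a point $x\in S^k$ with $\|P_x[F(x)]\|_\infty \leq \varepsilon$.

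For PPAD-hardness I would reduce from a discrete $2$D-Brouwer problem known to be PPAD-complete~\cite{chen2009complexity}, expressed in LINEAR-FIXP gates so as to fit the circuit model of \cref{def:linearHB}. A naive one-copy embedding of Brouwer on $S^k$ fails because any continuous tangent field on $S^k$ must vanish at least twice (with total index $2$), so ``spurious'' zeros unrelated to the Brouwer instance would be unavoidable. The fix is to place \emph{two} copies of the Brouwer instance on $S^k$ — for instance on two opposite spherical caps — and define the tangential field on each cap so that its zeros correspond to fixpoints of the original Brouwer function, while on the intermediate equatorial band the field is an explicit nonvanishing vector field easy to analyse. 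Any solution to the resulting \textsc{$k$D-Hairy-Ball} instance must lie in one of the two caps, and a Brouwer fixpoint of the original instance can then be extracted in polynomial time. For even $k>2$ I would spread the same two-copy construction over a great $2$-sphere inside $S^k$ and extend trivially in the remaining coordinates.

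The main obstacle is clearly the membership direction, and specifically the clean realisation of ``exactly two sources, no more'' in the directed graph on simplices. Concretely, the difficulty is designing local labels so that (i) they can be computed from $F$ in polynomial time, (ii) the induced graph has in- and out-degree bounded by $1$ even across the chart boundary, and (iii) the sources are exactly the two chart-centre simplices. This is precisely where the multiple-source \textsc{End-of-Line} machinery of \cref{sec:EOL} is indispensable: it relieves us from the usual requirement that all topologically mandated sources be merged into one, and lets the natural $\chi(S^k)=2$ appear untouched in the reduction. The other ingredients — the Lipschitz bound, the exact projection computation onto the tangent space, and the reduction from Brouwer — are comparatively routine.
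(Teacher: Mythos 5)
Your high-level architecture matches the paper's — reduce membership to \textsc{MS-EoL} via a Sperner-like argument, and obtain hardness by embedding two copies of a discrete Brouwer instance on opposite caps — but the concrete realisation you propose for the membership direction has a genuine gap exactly where you flag the difficulty.

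You propose covering $S^k$ by \emph{two} stereographic charts, defining Sperner labels separately in each chart's local frame, and then ``stitching the labels along the overlap.'' This stitching is precisely the hard topological content: the two local frames differ by a rotation that winds nontrivially around the overlap, so the two colourings genuinely disagree there, and it is not clear how to combine them into a single coherent directed graph whose in/out-degrees are bounded by $1$ across the seam. The paper avoids this obstacle entirely by using a \emph{single} stereographic chart from one pole $p$, so the unfolded field $g:\mathbb{R}^k\to\mathbb{R}^k$ is defined on all of $\mathbb{R}^k$ with one globally consistent colouring. The two known sources both arise near $p$: the boundary of a large cross-polytope $C_m$ has door-facets only on its intersection with the all-positive and all-negative orthants, $\partial C_m^+$ and $\partial C_m^-$, and the standard ``artificial start'' trick exposes exactly one entrance in each, giving two known sources with no chart-gluing at all. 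Your statement ``around each of the two chart centres, a single distinguished source vertex arises'' is geometrically off: in the paper's construction both known sources are near the single chart pole, and the Euler-characteristic heuristic controls the net signed count of degree-$1$ vertices, not the absolute number of sources (there can be arbitrarily many sources and sinks coming from approximate zeros; you only get to \emph{know} two of them).

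You also omit the crucial orientation argument — the paper's Claim about the determinant of $z\mapsto -z$ being $(-1)^k=1$ — which is the sole place evenness of $k$ enters. Without it there is no reason the two known entrances are both sources rather than a source and a sink; in the latter case they could be the two ends of the same line and the reduction would produce no solution. Your sketch says ``by \cref{lem:linearHBlipschitz} no other source can be created,'' but the Lipschitz bound cannot do this work: it bounds how fast the field can turn, not the parity of the entrances. On the hardness side your plan is essentially the paper's (two antipodal copies of Brouwer with an explicit nonvanishing field on the equatorial band), though the paper routes through the intermediate discrete \textsc{2D-Hairy-Cube} problem and an averaging trick to cope with inexact bit extraction in LINEAR-FIXP circuits; these technicalities are real but do not change the high-level picture you describe.
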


\subsection{About the power of the arithmetic circuit}

The main disadvantage of defining \textsc{$k$D-Hairy-Ball} as in \cref{def:linearHB} is that the type of circuit used seems quite restrictive. Clearly, it would be natural to also allow $\times$-gates that compute the product of two intermediate outputs (as opposed to just multiplication by a fixed constant). Unfortunately, allowing the $\times$-gate leads to technical complications in the definition of the problem because of repeated squaring (see below). These technicalities would make \cref{def:linearHB} more cumbersome and less intuitive overall. In the interest of simplicity and clarity of exposition we have thus decided to use the simpler definition that only allows $\{+,\times \zeta, \max\}$ gates.

From the standpoint of computational complexity, this restriction turns out to be irrelevant. Indeed, in \cref{sec:HBinPPAD} we prove that a very abstract formulation of the Hairy-Ball problem lies in PPAD. Namely, we only require that the tangent vector field be polynomially computable and polynomially continuous (\cref{def:polycont-polycomp}). These two assumptions are very natural and desirable in any TFNP-style definition of the problem. In particular, properly defining the problem with the additional $\times$-gates would also yield a problem that lies in PPAD. Furthermore, in \cref{sec:HBhardness} we prove that \textsc{$k$D-Hairy-Ball} is PPAD-hard. It immediately follows that formulations with more powerful circuits are also PPAD-hard. Thus, it turns out that allowing $\times$-gates yields a problem that is polynomially equivalent to this more restricted version that we use.

Before we close this section, we give some details about the complication that arises when allowing the $\times$-gates and how this can be circumvented in order to define a TFNP-problem. The main issue is that if the circuit is allowed to use $\times$-gates, then we might not be able to evaluate the function it represents efficiently. For example, consider the arithmetic circuit that does repeated squaring, i.e.\ it has a sequence of $n$ $\times$-gates that multiply the output of the previous gate by itself. On input $2$, this circuit outputs $2^{2^n}$, which is doubly exponential in the size of the circuit and input to the circuit. Even putting aside the fact that we cannot efficiently represent this number in our model, this also causes the Lipschitz-constant to be doubly exponential. In order to solve this we can enforce an upper bound $M$ on all computations of the circuit. This means that every gate would have an output in $[-M,M]$. $M$ would have representation size polynomial in the size of the circuit. Without loss of generality we can take $M=1$. However, we still might not be able to compute the output of the (bounded) circuit exactly. Indeed, the same circuit as earlier, on input $1/2$, would output $2^{-2^n}$. We would need an exponential (w.r.t.~the number of bits of the input and the size of the circuit) number of bits to represent this.

As a result, we would have to settle for approximate computation of the circuit's output. Let $F$ be a circuit with computation bounded in $[-1,1]$. For any $m \in \mathbb{N}$ and any $x \in S^k$, we can compute $F_m(x)$, which is $F(x)$ up to error $2^{-m}$, in time polynomial in the size of the circuit, $m$ and $\text{size}(x)$. This can be achieved by computing the output of every gate up to some error. Thus, when defining the Hairy-Ball problem, a solution would be required to satisfy $\|F_m(x)\|_\infty \leq \varepsilon$ for $m=\lceil \log_2 (2/\varepsilon) \rceil$. Note that this implies that $\|F(x)\|_\infty \leq 3\varepsilon/2$.

One nice property of this kind of circuit is that the condition that $F$ be tangential to $S^k$ can be enforced syntactically. Indeed, we can extend the circuit to output $F(x) - \langle F(x),x \rangle x$, instead of $F(x)$. Some care is required here, because the computations of the circuit are bounded in $[-1,1]$, but this can be solved by multiplying $F(x)$ by $1/k$ before projecting and using $\varepsilon/k$ instead of $\varepsilon$. Thus, we could always assume that the circuit given in the input describes a tangential vector field on the sphere.

As mentioned above, we do not use this kind of circuit because it introduces unnecessary complications and clutter.\footnote{Another more elegant option would be to consider \emph{well-behaved} circuits, as defined in \cite{FearnleyGHS-gradient}.} However, a closer inspection of the proof of \cref{thm:HBinPPAD} yields that this problem also lies in PPAD and thus turns out to be equivalent to \textsc{$k$D-Hairy-Ball}.

\section{The Hairy-Ball Problem is in PPAD}\label{sec:HBinPPAD}

In this section we present our main result: the problem of computing an approximate Hairy Ball solution reduces to \textsc{End-of-Line}, the canonical PPAD-complete problem.

From a purely mathematical standpoint, our proof can be used to provide a (fairly cumbersome) proof of the Hairy Ball theorem by using Brouwer's fixed point theorem. Indeed, it is known~\cite{papadimitriou1994complexity} that \textsc{End-of-Line} reduces to \textsc{Brouwer} (in fact, even \textsc{2D-Brouwer}~\cite{chen2009complexity}). Thus, given a Hairy Ball function $f$, using our reduction and Brouwer's fixed point theorem, one can prove the existence of a point $x_k$ such that $\|f(x_k)\| \leq 1/2^k$ for any $k$ (using the fact that $f$ must be uniformly continuous since the sphere is compact). Then, since any sequence in a compact set must have a converging subsequence it follows that there must exist $x$ such that $f(x)=0$. Finding a more direct way to deduce the Hairy Ball theorem from Brouwer's fixed point theorem is an interesting open question.

\subsection{A general version of the Hairy-Ball problem}

Our goal is to prove that the Hairy Ball problem lies in PPAD in a setting that is as general and encompassing as possible. The way the function is represented, as a circuit or otherwise, should not play a role. Thus, we are only going to make two assumptions about the tangent vector field: that it can be evaluated in polynomial time and that it is polynomially continuous in some well-defined sense. The first assumption is very natural: if we are given a Hairy Ball function, we expect to be able to evaluate it efficiently. The motivation for the second assumption is that if we omit it, then there is no guarantee that there will exist an approximate solution with representation size that is polynomial in the input size.

We now define these assumptions formally, following the analogous definitions by Etessami and Yannakakis~\cite{etessami2010fixp} for Brouwer fixed point problems. Let $\mathcal{F}$ be a class of Hairy Ball functions $f : S^k \to \mathbb{R}^{k+1}$ (i.e.\ continuous tangent vector fields) with $k \geq 2$ even. Note that here $k$ is not fixed for all $f \in \mathcal{F}$, but we assume that $k \leq \text{size}(f)$. For any $f \in \mathcal{F}$, $\text{size}(f)$ denotes the length of the representation of $f$ in $\mathcal{F}$. In the case of \textsc{$k$D-Hairy-Ball}, $k$ is fixed and $\mathcal{F}$ is the class of all such functions represented using arithmetic circuits with gates $\{+, \times \zeta, \max\}$ (with the projection onto the tangent space at the end). In that case, $\text{size}(f)$ is the size of the circuit representing $f$. Recall that for rational vector $x$, $\textup{size}(x)$ is the length of the representation of $x$.

\begin{definition}[\cite{etessami2010fixp}]\label{def:polycont-polycomp}
Let $\mathcal{F}$ be a class of Hairy Ball functions.
\begin{itemize}
    \item $\mathcal{F}$ is \emph{polynomially computable}, if there exists some polynomial $p$ such that for any $f \in \mathcal{F}$ and any rational input $x \in S^k$, $f(x)$ can be computed in time $p(\text{size}(f) + \text{size}(x))$.
    
    \item $\mathcal{F}$ is \emph{polynomially continuous}, if there exists some polynomial $q$ such that for any $f \in \mathcal{F}$ and any rational $\varepsilon > 0$, there exists a rational $\delta > 0$ with $\text{size}(\delta) \leq q(\text{size}(f)+\text{size}(\varepsilon))$ such that for all $x,y \in S^k$ we have $\|x-y\|_\infty \leq \delta \implies \|f(x)-f(y)\|_\infty \leq \varepsilon$.
\end{itemize}
\end{definition}

Note that \textsc{$k$D-Hairy-Ball} yields a class $\mathcal{F}$ that is both polynomially computable and polynomially continuous (by \cref{lem:linearHBlipschitz}).

\begin{definition}
Let $\mathcal{F}$ be a class of Hairy Ball functions. The problem \textsc{Hairy-Ball($\mathcal{F}$)} is defined as: given $f \in \mathcal{F}$ and $\varepsilon > 0$, find $x \in S^k$ such that $\|f(x)\|_\infty \leq \varepsilon$.
\end{definition}

For simplicity we assume that we can recognise whether some string in $\{0,1\}^*$ represents an element $f \in \mathcal{F}$ in polynomial time. If this does not hold, then \textsc{Hairy-Ball($\mathcal{F}$)} has to be studied as a \emph{promise} problem. The reduction to \textsc{End-of-Line} given in the proof below still holds. However, this does not imply that the problem lies in PPAD, because TFNP requires the problem to be total without any promise.

\subsection{The general problem lies in PPAD}

\begin{theorem}\label{thm:HBinPPAD}
Let $\mathcal{F}$ be a class of Hairy Ball functions that is polynomially computable and polynomially continuous. Then, \textsc{Hairy-Ball($\mathcal{F}$)} lies in \textup{PPAD}.
\end{theorem}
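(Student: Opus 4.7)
The plan is to reduce \textsc{Hairy-Ball($\mathcal{F}$)} to a 2-source End-of-Line problem and then invoke the multi-source \textsc{End-of-Line} result of \cref{sec:EOL} to conclude membership in PPAD. As a first step, using polynomial continuity, I would pick a rational $\delta > 0$ of size polynomial in $\text{size}(f) + \text{size}(\varepsilon)$ such that $\|x-y\|_\infty \leq \delta$ on $S^k$ implies $\|f(x)-f(y)\|_\infty \leq \varepsilon'$ for a suitable $\varepsilon' = \Theta(\varepsilon/k)$. This $\delta$ sets the scale of a triangulation of $S^k$ that is manipulated only implicitly.

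Next, I would fix a succinct triangulation of $S^k$ of mesh at most $\delta$, for instance by viewing $S^k$ as the boundary of a cross-polytope and recursively subdividing each facet at scale $\delta$; in this way each simplex has a polynomial-size index, and the coordinates of its vertices and of its adjacent simplices are computable in polynomial time. At each vertex $v$ I evaluate $f(v)$ exactly using polynomial computability, and label $v$ by one of $k+1$ direction classes indexed by $\{0,1,\dots,k\}$: concretely, the class is the index $i$ maximising $\langle f(v), d_i \rangle$, where $d_0,\dots,d_k$ is a canonical $(k+1)$-tuple of tangent vectors at $v$ with $\sum_i d_i = 0$, with a fixed tie-breaking rule. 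Call a simplex \emph{panchromatic} if all $k+1$ classes appear on its vertices. A short calculation using $\sum_i d_i = 0$ together with the bound $\|f(v_i) - f(v_j)\|_\infty \leq \varepsilon'$ on every simplex then shows that any panchromatic simplex contains a vertex $v$ with $\|f(v)\|_\infty \leq \varepsilon$, so panchromatic simplices are valid solutions.

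The reduction is then obtained by defining a graph on (flags of) simplices whose edges encode a Sperner-style pairing between facets and direction classes, analogously to the combinatorial proofs of the Poincar\'e--Hopf index theorem. Standard Sperner-type arguments ensure that every vertex of this graph has degree at most $2$, and that its only non-panchromatic degree-$1$ vertices form a small explicitly identifiable set of canonical configurations. The crucial topological input is that $\chi(S^k) = 2$ for even $k$: this manifests combinatorially as exactly two unavoidable canonical sources, which I plant in advance (informally, at a ``north-pole'' and a ``south-pole'' simplex of the triangulation). Thus we obtain a polynomial-size 2-source \textsc{End-of-Line} instance whose non-source degree-$1$ vertices are panchromatic simplices, and hence yield \textsc{Hairy-Ball($\mathcal{F}$)} solutions; membership in PPAD then follows from the multi-source \textsc{End-of-Line} theorem of \cref{sec:EOL}.

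The main obstacle is to design the direction classes and the pairing-based graph so that (i) the maximum-degree-$2$ and orientation-consistency properties hold globally, and (ii) the only canonical sources are exactly the two planted ones, despite the fact that $S^k$ admits no single continuous tangent frame and the local labellings across different coordinate patches do not agree on overlaps. This is where the topology of $S^k$ genuinely enters: the construction must stitch together local Sperner-style labellings so that the frame mismatches on overlaps contribute precisely two canonical defects, rather than an uncontrolled number of parity errors. Once this piece is correctly in place, the remaining polynomial-time bookkeeping and the verification that every non-source degree-$1$ vertex of the constructed graph is panchromatic are routine.
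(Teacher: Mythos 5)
Your high-level plan matches the paper's proof overview: colour a fine triangulation by a Sperner-style rule, deduce that panchromatic simplices are approximate zeros, read off a 2-source \textsc{End-of-Line} instance, and invoke \cref{thm:mulsourceeol}. Your colouring rule (with $k+1$ directions $d_0,\dots,d_k$ summing to $0$) and your observation that panchromaticity forces $\|f\|$ small on the simplex are both sound. But there is a genuine gap precisely where you flag ``the main obstacle,'' and it is not a routine stitching problem that one can defer: it is the entire content of the proof, and as written your proposal does not resolve it.

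Concretely, you propose to colour \emph{directly on $S^k$} using tangent frames $d_0,\dots,d_k$ at each vertex, while acknowledging that no continuous tangent frame exists on $S^k$ (which is itself the theorem being proved) and that frame mismatches across patches could contribute ``an uncontrolled number of parity errors.'' You then assert that these mismatches contribute ``precisely two canonical defects,'' which you ``plant in advance'' at a north-pole and south-pole simplex. This is circular: the number and orientation of the canonical defects is exactly what must be \emph{derived}, not declared, and nothing in your construction controls the defects arising at the seams between patches. You also never use the hypothesis that $k$ is even, which must enter somewhere, since the entire statement is false for odd $k$. The paper avoids all of this by first applying a stereographic projection from a single pole $p$, transporting $f$ to a function $g:\mathbb{R}^k \to \mathbb{R}^k$ expressed in a single global tangent frame; the ``missing'' pole's contribution is then captured as a boundary condition on a large cross-polytope $C_m \subset \mathbb{R}^k$. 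The boundary behaviour is controlled by comparing $g$ to an explicit ``ideal colouring'' near infinity, and the two Sperner entrances arise on the two opposite orthant faces $\partial C_m^{+}$ and $\partial C_m^{-}$. The fact that both are \emph{sources} (rather than a source and a sink) comes from \cref{clm:inPPAD:orientation}: the antipodal map $y \mapsto -y$ has determinant $(-1)^k = 1$ when $k$ is even, so the two faces have the same orientation. That is the single place the evenness of $k$ is used, and it is what guarantees the 2-source (rather than source-and-sink) structure. Until you supply an analogue of this argument --- a concrete scheme for the tangent frames, a proof that frame transitions produce no stray degree-1 vertices, and a parity argument pinning down exactly two coherently-oriented defects --- the reduction to 2-source \textsc{End-of-Line} is not established.
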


\begin{corollary}\label{cor:linearHBinPPAD}
For all even $k \geq 2$, \textsc{$k$D-Hairy-Ball} lies in \textup{PPAD}.
\end{corollary}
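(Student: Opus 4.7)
The plan is to reduce \textsc{Hairy-Ball}($\mathcal{F}$) to the $2$-source variant of \textsc{End-of-Line}, which \cref{sec:EOL} establishes is PPAD-complete. The number $2$ is not an artefact of the construction but reflects the Euler characteristic $\chi(S^k) = 2$ for even $k$, which is the topological root cause of the Hairy Ball Theorem itself.

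Given an instance $(f, \varepsilon)$, first use polynomial continuity of $\mathcal{F}$ to choose a mesh parameter $\delta > 0$ of polynomial bit-size such that $\|x-y\|_\infty \leq \delta$ implies $\|f(x)-f(y)\|_\infty \leq \varepsilon/c$ for a sufficiently large constant $c$ depending only on $k$. Next, fix a reference continuous tangent vector field $v_0$ on $S^k$ that has exactly two explicitly-known zeros at a known antipodal pair; for even $k$ such a field can be written down in closed form (for example by extending planar rotations so the only zeros occur at the poles). Form the linear homotopy $f_t := (1-t) v_0 + t\, P_x[f(x)]$ for $t \in [0,1]$, a continuous family of tangent vector fields interpolating $v_0$ and the projected field. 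Triangulate the cylinder $S^k \times [0,1]$ at mesh scale $O(\delta)$ and, at each vertex $(x,t)$, assign a Sperner-style label derived from the sign pattern of $f_t(x)$ expressed in a locally defined orthonormal tangent frame at $x$.

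The \textsc{MS-EoL} graph is then built by the standard Sperner path-following recipe: vertices are simplices carrying a distinguished labelled face, and two vertices are adjacent whenever these faces coincide inside a shared simplex. Paths of this graph trace discrete approximations of the $1$-dimensional set $\{(x,t) : f_t(x) = 0\}$ inside the cylinder. The two degree-$1$ endpoints at $t=0$ are precisely the two zeros of $v_0$ and are designated as the sources; every other degree-$1$ vertex lies at $t=1$ and localises a simplex of diameter $O(\delta)$ containing an approximate zero of $P_x[f(x)]$, which by the choice of $\delta$ and polynomial continuity is an $\varepsilon$-zero of $f$ itself. The construction is polynomial-time in $\text{size}(f)+\text{size}(\varepsilon)$, so composing with the result of \cref{sec:EOL} that \textsc{MS-EoL} lies in PPAD yields the theorem.

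The principal obstacle will be defining the Sperner labeling coherently across the sphere. Since $S^k$ is not parallelizable for even $k$ (itself equivalent to the Hairy Ball Theorem), no global continuous tangent frame exists; one must work with local frames on overlapping charts and manage the label transitions so that exactly the two intended sources arise at the zeros of $v_0$, with no spurious extras introduced by chart overlaps. A related subtlety is ruling out spurious path endpoints at intermediate $t \in (0,1)$, which is handled by the standard cylinder argument that Sperner paths in $S^k \times [0,1]$ can only terminate on the two boundary caps $t=0$ and $t=1$; combined with the parity of degree-$1$ vertices, this already forces at least one non-source sink to exist whenever $f$ has any zero, as guaranteed by HBT.
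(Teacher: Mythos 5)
Your homotopy-cylinder route is a genuinely different construction from the paper's, but it has an unresolved gap at its centre, and the gap is not a detail — it is exactly the point the paper's argument is engineered to avoid.

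You correctly identify the obstacle yourself: to produce a Sperner labelling from the sign pattern of $f_t(x)$, you must express each tangent vector in $\mathbb{R}^k$ coordinates, and there is no global continuous tangent frame on $S^k$ for even $k$. You propose to ``work with local frames on overlapping charts and manage the label transitions,'' but you do not explain how. This is not routine: the Sperner label is the argmin of the coordinates of $f_t(x)$ in the chosen frame, and that label is \emph{not} invariant even under orientation-preserving changes of frame. If two adjacent triangulation vertices lie in different charts with different frames, a single nonzero vector can receive two different labels, and a $\delta$-fine simplex straddling the chart overlap can be panchromatic without $f_t$ being anywhere near zero there. Such simplices are spurious sources or sinks of the path-following graph. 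Unless the overlaps are controlled very carefully, they can occur anywhere in the interior of the cylinder, which also invalidates your appeal to ``the standard cylinder argument that Sperner paths can only terminate on the boundary caps.'' That argument presumes the labelling is defined by a single continuous vector field; it does not hold automatically for a patchwork of chart-dependent labellings. The paper sidesteps all of this by using the stereographic projection to get a \emph{single} global tangent frame $b_1,\dots,b_k$ on $S^k \setminus \{p\}$, transferring the whole problem into $\mathbb{R}^k$, and then confining attention to a bounded cross-polytope $C_m$; the payoff is that the only boundary analysis needed is on $\partial C_m$, where \cref{clm:inPPAD:ideal-actual-col} shows the unfolded field is close to an explicit model field whose Sperner behaviour can be computed directly, yielding exactly the two sources on $\partial C_m^+$ and $\partial C_m^-$ and using $k$ even only once (\cref{clm:inPPAD:orientation}).

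Two smaller points. First, your claim that $v_0 = e_1 - \langle e_1,x\rangle x$ (or any other choice) has ``exactly two explicitly-known zeros'' is fine, but the labelling at $t=0$ near those zeros must still be shown to produce exactly one Sperner source each rather than, say, two, which again requires a controlled frame near the poles. Second, constructing a polynomial-time, polynomial-size $\delta$-fine triangulation of $S^k \times [0,1]$ with efficiently computable adjacency is not free: $S^k$ is not a polytope. The paper's unfolding to $\mathbb{R}^k$ and triangulation of the cross-polytope is precisely what makes this tractable; if you unfold first anyway, your cylinder becomes $C_m \times [0,1]$ with a third boundary component $\partial C_m \times [0,1]$ whose behaviour you would still have to analyse, and at that point the homotopy adds a dimension without removing any work. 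So I would encourage you either to carry out the chart-management argument concretely, or to adopt the stereographic unfolding, which dissolves the difficulty at the source.
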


\begin{proof}[Proof Overview for~\cref{thm:HBinPPAD}]\renewcommand{\qedsymbol}{}
The proof can be subdivided into two parts. In the first part, we reduce \textsc{$k$D-Hairy-Ball} to a \emph{2-source} \textsc{End-of-Line} problem. In the second part, we show that the 2-source version reduces to the standard version of \textsc{End-of-Line}, where a single source is known. Surprisingly, the reduction from 2 sources to 1 is non-trivial. The proof for this is presented separately in~\cref{sec:EOL}. In fact, we prove the more general result: as long as the number of known sources in an \textsc{End-of-Line} instance is polynomial, we can reduce to standard \textsc{End-of-Line}. Various implications of this result are also presented in~\cref{sec:EOL}.

We now give some details about the first part of the proof, in which we reduce \textsc{$k$D-Hairy-Ball} to 2-source \textsc{End-of-Line} through a Sperner argument. The inspiration for this comes from a proof of the 2-dimensional Hairy Ball Theorem via a generalisation of Sperner's Lemma, given by Jarvis and Tanton~\cite{jarvis2004hairy}. Our contribution here is two-fold: we extend their proof to any higher (even) dimension and we turn it into a polynomial-time reduction. Even though Sperner's Lemma is known to be PPAD-complete~\cite{papadimitriou1994complexity}, this result does not apply to the general version of Sperner's Lemma that is used here. Indeed, by using standard techniques, this general version can only be reduced to a \emph{2-source} instance of \textsc{End-of-Line}. This is why we then require our technical results about multiple-source \textsc{End-of-Line} from \cref{sec:EOL}.

In order to turn the ideas of Jarvis and Tanton into a polynomial-time reduction, instead of working directly on the sphere, we use a stereographic projection to ``unfold'' the sphere $S^k$ ($\subset \mathbb{R}^{k+1}$) into the space $\mathbb{R}^k$, along with the vector field. Then, we consider a sufficiently large cross-polytope $C$ of $\mathbb{R}^k$ and prove that the ``unfolded'' vector field satisfies certain boundary conditions. In the case $k=2$, this corresponds to the vector field making two full rotations when we move along the boundary of $C$ (see \cref{fig:HBinPPAD-2D-boundary}). Next, we pick an efficient triangulation of $C$ and a suitable colouring of its nodes. The last step then requires us to prove that this colouring yields exactly two starting points (on the boundary) for Sperner paths (see \cref{fig:HBinPPAD-2D-triangulated}) that lead to \emph{panchromatic} simplices. Using standard Sperner-arguments this yields a 2-source \textsc{End-of-Line} instance. The full proof for any even $k$ can be found below. Note that, as expected, the proof does not work if $k$ is odd. Indeed, the construction then yields a starting point and an ending point on the boundary, instead of two starting points.
\end{proof}

\begin{figure}[h]
\begin{subfigure}[t]{0.5\textwidth}
\centering
\begin{tikzpicture}
\fill[white] (-2.5,-2.5) rectangle ++ (5,5);
\fill[lightgray] (0,2) -- (2,0) -- (0,-2) -- (-2,0) --cycle;
\draw[thick,->] (2,0) --++ (-0.3535,0.3535);
\draw[thick,->] (3/2,1/2) --++ (-0.495,0.07);
\draw[thick,->] (1,1) --++ (-0.3535,-0.3535);
\draw[thick,->] (1/2,3/2) --++ (0.07,-0.495);
\draw[thick,->] (0,2) --++ (0.3535,-0.3535);
\draw[thick,->] (-1/2,3/2) --++ (0.495,0.07);
\draw[thick,->] (-1,1) --++ (0.3535,0.3535);
\draw[thick,->] (-3/2,1/2) --++ (-0.07,0.495);
\draw[thick,->] (-2,0) --++ (-0.3535,0.3535);
\draw[thick,->] (-3/2,-1/2) --++ (-0.495,0.07);
\draw[thick,->] (-1,-1) --++ (-0.3535,-0.3535);
\draw[thick,->] (-1/2,-3/2) --++ (0.07,-0.495);
\draw[thick,->] (0,-2) --++ (0.3535,-0.3535);
\draw[thick,->] (1/2,-3/2) --++ (0.495,0.07);
\draw[thick,->] (1,-1) --++ (0.3535,0.3535);
\draw[thick,->] (3/2,-1/2) --++ (-0.07,0.495);
\end{tikzpicture}
\caption{Boundary conditions after ``unfolding'' ...}\label{fig:HBinPPAD-2D-boundary}
\end{subfigure}\hfill
\begin{subfigure}[t]{0.5\textwidth}
\centering
\begin{tikzpicture}
\fill[white] (-2.5,-2.5) rectangle ++ (5,5);
\filldraw[fill=lightgray, draw=black] (0,2) -- (2,0) -- (0,-2) -- (-2,0) --cycle;
\draw (0,2) --++ (0,-4);
\draw (2,0) --++ (-4,0);
\draw (1,1) --++ (0,-2);
\draw (-1,1) --++ (0,-2);
\draw (-1,1) --++ (2,0);
\draw (-1,-1) --++ (2,0);
\draw (0,1) -- (-1,0);
\draw (1,0) -- (0,-1);
\draw (1,0) -- (0,1);
\draw (-1,0) -- (0,-1);

\draw[thick, ->] (7/4,3/4) -- (5/4,1/4);
\draw[thick, ->] (5/4,1/4) -- (3/4,3/4);
\draw[thick, ->] (3/4,3/4) -- (1/4,1/4);
\draw[thick, ->] (1/4,1/4) -- (1/4,-1/4);
\draw[thick, ->] (-7/4,-3/4) -- (-5/4,-1/4);
\draw[thick, ->] (-5/4,-1/4) -- (-5/4,1/4);

\node at (0,2.2) {2};
\node at (0,-2.2) {2};
\node at (2.2,0) {1};
\node at (-2.2,0) {1};
\node at (1.15,1.15) {2};
\node at (-1.15,-1.15) {2};
\node at (1.15,-1.15) {0};
\node at (-1.15,1.15) {0};
\node at (-0.15,1.2) {2};
\node at (-0.15,-1.2) {0};
\node at (-0.15,0.2) {2};
\node at (1.15,-0.2) {1};
\node at (-0.85,-0.4) {2};

\end{tikzpicture}
\caption{... yielding a Sperner instance with two sources}\label{fig:HBinPPAD-2D-triangulated}
\end{subfigure}
\caption{An example for the proof of~\cref{thm:HBinPPAD} in the case $k=2$. The region $C$ is represented in grey.}
\end{figure}

\begin{proof}[Proof of \cref{thm:HBinPPAD}]

Let $(f, \varepsilon)$ be an instance of \textsc{Hairy-Ball($\mathcal{F}$)} with $f: S^k \to \mathbb{R}^{k+1}$.

\medskip

\noindent \textbf{Stereographic projection.} Consider the unit sphere $S^k$ in $\mathbb{R}^{k+1}$. For convenience, we let the coordinate index start at $0$ in $\mathbb{R}^{k+1}$, i.e.\ $x = (x_0,x_1, \dots, x_k)$. Let $p = (1, 0, \dots, 0) \in S^k$. The stereographic projection with respect to the pole $p$ is defined as
$$SP : \quad S^k \setminus \{p\} \to \mathbb{R}^k, \quad (x_0, x_1, \dots, x_k) \mapsto \frac{1}{1-x_0} \left( x_1, \dots, x_k \right)$$
and its inverse is
$$SP^{-1} : \quad \mathbb{R}^k \to S^k \setminus \{p\}, \quad (z_1, \dots, z_k) \mapsto p +\frac{2}{1 + \sum_{i=1}^k z_i^2} (-1, z_1, \dots, z_k).$$
Note that the stereographic projection and its inverse can be computed exactly in polynomial time in the representation size of the rational input point.
In particular, rational points are always mapped to rational points.
Furthermore, it is easy to check that the inverse stereographic projection $SP^{-1}$ is $4\sqrt{k}$-Lipschitz continuous (w.r.t.\ $\ell_\infty$-norm).

\medskip

\noindent \textbf{Unfolding: changing the domain and range of $\boldsymbol{f}$.} We want to ``transform'' the function $f: S^k \to \mathbb{R}^{k+1}$ into a function $g: \mathbb{R}^k \to \mathbb{R}^k$, which is more convenient. Changing the domain of $f$ is easy: the stereographic projection ``unfolds'' $S^k \setminus \{p\}$ into $\mathbb{R}^k$. To change the range of $f$ we would like to also ``unfold'' the tangent vector field so that it now outputs a vector in $\mathbb{R}^k$ instead of $\mathbb{R}^{k+1}$. One way to achieve this is to find continuous vector fields $b_i : S^k \setminus \{p\} \to \mathbb{R}^{k+1}$, $i = 1, \dots, k$, such that for every $x \in S^k \setminus \{p\}$, $b_1(x), \dots, b_k(x)$ is a basis of the tangent space of $S^k$ at $x$. Expressing $f(x)$ in this local basis then yields an element in $\mathbb{R}^k$, as desired. We can explicitly construct such vector fields by using $SP$ and $SP^{-1}$ to map the standard basis of $\mathbb{R}^k$ into the tangent space at each $x \in S^k \setminus \{p\}$. We obtain
\begin{equation}\label{eq:basis-field}
\left[b_i(x)\right]_j = \left\{ \begin{tabular}{ll}
    $x_i$ & if $j = 0$ \\
    $1-x_i^2/(1-x_0)$ & if $j = i$ \\
    $-x_j x_i/(1-x_0)$ & otherwise
\end{tabular} \right.
\end{equation}
where $[\cdot]_j$ indicates the $j$th coordinate for $j \in \{0, 1, \dots, k\}$. It is straightforward to check that $b_1, \dots, b_k$ are continuous tangent vector fields that yield an orthonormal basis of the tangent space of $S^k$ at every $x \in S^k \setminus \{p\}$.

``Unfolding'' $f$ yields $g : \mathbb{R}^k \to \mathbb{R}^k$ which is defined as follows for $i=1,\dots, k$
$$[g(z)]_i = \langle f(x(z)), b_i(x(z)) \rangle$$
where we define $x(z) := SP^{-1}(z)$ for convenience. Intuitively, $g$ corresponds to the function that first maps $z \in \mathbb{R}^k$ to a point $x$ on the sphere using the inverse stereographic projection, computes $f(x)$ and then expresses $f(x)$ in the local basis $(b_1(x), \dots, b_k(x))$. Note that $g(z)$ can be computed in polynomial time in $\text{size}(f)$ and $\text{size}(z)$.

Since the $b_i$ always form an orthonormal basis, it follows that we always have $\|g(z)\|_2 = \|f(SP^{-1}(z))\|_2$. Thus, in order to find some $x \in S^k$ with $\|f(x)\|_\infty \leq \varepsilon$, it suffices to find some $z \in \mathbb{R}^k$ with $\|g(z)\|_\infty \leq \varepsilon/\sqrt{k}$.

\medskip

\noindent \textbf{Continuity of $\boldsymbol{g}$.} Clearly, $g$ is continuous. Moreover, since $\mathcal{F}$ is polynomially continuous, we can extend this to $g$ in the following sense.

\begin{claim}\label{clm:inPPAD:g-cont}
There exists a polynomial $r$ (that only depends on $\mathcal{F}$) such that for any $\widehat{\varepsilon} > 0$, there exists $\widehat{\delta}$ with $\text{size}(\widehat{\delta}) \leq r(\text{size}(f) + \text{size}(\widehat{\varepsilon}))$ such that for any $z,z' \in \mathbb{R}^k$ we have $\|z-z'\|_\infty \leq \widehat{\delta}$ $\implies$ $\|g(z)-g(z')\|_\infty \leq \widehat{\varepsilon}$.
\end{claim}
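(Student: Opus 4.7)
The plan is to decompose, for each coordinate $i$,
\[
[g(z)]_i - [g(z')]_i = \langle f(x) - f(x'),\, b_i(x)\rangle + \langle f(x'),\, b_i(x) - b_i(x')\rangle,
\]
where $x := SP^{-1}(z)$ and $x' := SP^{-1}(z')$, and to bound each summand by $\widehat{\varepsilon}/2$. Since $b_i(x)$ is a unit vector, Cauchy--Schwarz bounds the first summand by $\sqrt{k+1}\,\|f(x)-f(x')\|_\infty$. The $4\sqrt{k}$-Lipschitz continuity of $SP^{-1}$ recorded above, combined with polynomial continuity of $\mathcal{F}$ applied at target accuracy $\widetilde{\varepsilon} := \widehat{\varepsilon}/(2\sqrt{k+1})$, will then yield a rational $\widetilde{\delta}_f$ of bit-size polynomial in $\text{size}(f)+\text{size}(\widehat{\varepsilon})$ such that $\|z-z'\|_\infty \leq \widetilde{\delta}_f/(4\sqrt{k})$ suffices to control the first summand.

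For the second summand, I would first observe that substituting the explicit formulas $x_0(z) = 1 - 2/(1+\|z\|^2)$ and $x_j(z) = 2z_j/(1+\|z\|^2)$ into the definition of $b_i(x)$ simplifies each coordinate of $z \mapsto b_i(SP^{-1}(z))$ to a smooth rational function with denominator $1+\|z\|^2$. A direct gradient calculation shows that every partial derivative is bounded by a polynomial in $k$, so $b_i \circ SP^{-1}$ is globally $L_b$-Lipschitz on $\mathbb{R}^k$ with $L_b = \text{poly}(k)$. Cauchy--Schwarz then bounds the second summand by $L_b \sqrt{k}\, \|f(x')\|_2\, \|z-z'\|_\infty$, so it suffices to take $\|z-z'\|_\infty \leq \widehat{\varepsilon}/(2BL_b\sqrt{k})$, where $B \geq \sup_{x \in S^k}\|f(x)\|_2$ is some explicit upper bound.

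The main obstacle will be producing such a $B$ of polynomial bit-size, since polynomial continuity does not directly bound $\|f\|$. To handle this, I would apply polynomial continuity once with $\widetilde{\varepsilon} = 1$ to obtain $\delta_1$ of polynomial bit-size in $\text{size}(f)$, fix the rational reference point $x^\ast := (0,\dots,0,1) \in S^k$, and compute $\|f(x^\ast)\|_\infty$ exactly in polynomial time using polynomial computability. Then, given any $x \in S^k$, I would connect $x^\ast$ to $x$ by sampling a great-circle arc at $N := \lceil \pi/\delta_1 \rceil$ points whose consecutive $\ell_2$-distances (and hence $\ell_\infty$-distances) are at most $\delta_1$; telescoping over the chain yields $\|f(x)\|_\infty \leq \|f(x^\ast)\|_\infty + N \leq 2^{p(\text{size}(f))}$ for a suitable polynomial $p$, which gives the required $B$.

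To finish, I would take $\widehat{\delta}$ to be any rational of polynomial bit-size below $\min\bigl\{\widetilde{\delta}_f/(4\sqrt{k}),\, \widehat{\varepsilon}/(2BL_b\sqrt{k})\bigr\}$; both arguments of the minimum are at least $2^{-\text{poly}(\text{size}(f)+\text{size}(\widehat{\varepsilon}))}$, so such a $\widehat{\delta}$ exists and satisfies the claim. Apart from the chaining argument used to obtain $B$, every step is a routine computation, so the chaining is really the only non-trivial ingredient.
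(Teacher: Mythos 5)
Your proof takes essentially the same approach as the paper: the same two-term decomposition of $[g(z)]_i - [g(z')]_i$, the same observation that $z \mapsto b_i(SP^{-1}(z))$ is Lipschitz with constant $\mathrm{poly}(k)$, and the same chaining argument along a great-circle arc to bound $\sup_{x \in S^k}\|f(x)\|_\infty$ by a quantity of polynomial bit-size. The only small difference is in the anchor of the chain: the paper uses an exact zero $x^*$ of $f$ (whose existence is guaranteed by the Hairy Ball Theorem, so $\|f(x^*)\|_\infty = 0$), whereas you anchor at the fixed rational point $(0,\dots,0,1)$ and invoke polynomial computability to bound $\|f(x^*)\|_\infty$ — your variant is slightly more self-contained but otherwise equivalent.
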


\begin{claimproof}
First, let us prove a bound on $\|f(x)\|_\infty$ for all $x \in S^k$. Let $\Bar{\delta} > 0$ be such that $\|f(x)-f(y)\|_\infty \leq 1$ if $\|x-y\|_\infty \leq \Bar{\delta}$. Recall that $\text{size}(\Bar{\delta})$ is polynomially bounded in $\text{size}(f)$. Since $f$ is continuous on the sphere $S^k$, there exists some $x^* \in S^k$ such that $f(x^*) = 0$. The arc distance between $x$ and $x^*$ is at most $\pi$ and we pick points $y^{(0)}, y^{(1)}, \dots, y^{(m)} \in S^k$ along the arc such that $y^{(0)}=x, y^{(m)}=x^*$ and $\|y^{(i)}-y^{(i+1)}\|_\infty \leq \|y^{(i)}-y^{(i+1)}\|_2 \leq \Bar{\delta}$. We can take $m \leq \lceil \pi / \Bar{\delta} \rceil \leq 4 / \Bar{\delta}$ (assuming $\Bar{\delta} \leq 1/2$). Then, we have
$$\|f(x)\|_\infty = \|f(x)-f(x^*)\|_\infty \leq \sum_{i=0}^{m-1} \|f(y^{(i)}) - f(y^{(i+1)})\|_\infty \leq \frac{4}{\Bar{\delta}}.$$
For any $z,z' \in \mathbb{R}^k$ we have for all $i$
\begin{equation*}\begin{split}
    |[g(z)]_i - [g(z')]_i| &= |\langle f(x(z)), b_i(x(z)) \rangle - \langle f(x(z')), b_i(x(z')) \rangle|\\
    &\leq |\langle f(x(z)) - f(x(z')), b_i(x(z)) \rangle| + |\langle f(x(z')), b_i(x(z)) - b_i(x(z')) \rangle|\\
    &\leq \|f(x(z)) - f(x(z'))\|_2 + \|f(x(z'))\|_2 \|b_i(x(z)) - b_i(x(z'))\|_2\\
\end{split}\end{equation*}
Note that for all $i$, $z \mapsto b_i(x(z))$ is Lipschitz-continuous with a Lipschitz constant of the form $m \sqrt{k}$ for some constant $m$ ($m = 20$ is enough). This can be checked by direct computation. Recalling that $x(z)$ is short for $SP^{-1}(z)$, the fact that $SP^{-1}$ is $4 \sqrt{k}$-Lipschitz and $f$ is polynomially continuous, the claim then follows.
\end{claimproof}

\medskip

\noindent \textbf{The colouring.} The function $g$ induces a colouring on $\mathbb{R}^k$. Every $z \in \mathbb{R}^k$ is assigned a colour as follows. First, compute $u := g(z) \in \mathbb{R}^k$. If $u_i \geq 0$ for all $i$, then assign colour $0$. Otherwise assign colour $j = \argmin_{i} u_i$ (break ties by picking the smallest such index).

Pick $\delta > 0$ so that $\|z-z'\|_\infty \leq \delta$ implies
\begin{equation}\label{eq:HBinPPAD-delta-cont}
\|g(z)-g(z')\|_\infty \leq \frac{\varepsilon}{8 \sqrt{k}}
\end{equation}
By \cref{clm:inPPAD:g-cont}, it is possible to pick such $\delta$ with $\text{size}(\delta) \leq \text{poly}(\text{size}(f),\text{size}(\varepsilon))$.

A panchromatic $\delta$-fine $k$-simplex in $\mathbb{R}^k$ is a $k$-simplex $z^{(0)}, \dots, z^{(k)}$ in $\mathbb{R}^k$, such that $z^{(i)}$ has colour $i$ and $\|z^{(i)}-z^{(j)}\|_\infty \leq \delta$ for all $i,j$.
\begin{claim}\label{clm:inPPAD:panchromatic}
Any panchromatic $k$-simplex in $\mathbb{R}^k$ yields a solution.
\end{claim}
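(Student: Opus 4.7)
The plan is to show that the vertex of colour $0$ in the panchromatic $k$-simplex, when mapped back to $S^k$ via the inverse stereographic projection, is an approximate Hairy Ball zero. Let $z^{(0)}, z^{(1)}, \dots, z^{(k)}$ be the panchromatic $\delta$-fine simplex, with $z^{(i)}$ of colour $i$. I will output $x := SP^{-1}(z^{(0)}) \in S^k \setminus \{p\}$ and verify $\|f(x)\|_\infty \leq \varepsilon$.

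The key observation is the asymmetry between colour $0$ and the other colours. Unpacking the colouring rule, I first note that $z^{(0)}$ having colour $0$ means $[g(z^{(0)})]_i \geq 0$ for every $i \in \{1,\dots,k\}$. On the other hand, for each $j \in \{1,\dots,k\}$, the vertex $z^{(j)}$ has colour $j \neq 0$, which forces some component of $g(z^{(j)})$ to be strictly negative, and the argmin rule then gives
\[
[g(z^{(j)})]_j \;=\; \min_i [g(z^{(j)})]_i \;<\; 0.
\]

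Now I combine these with the continuity bound built into $\delta$. Since $\|z^{(0)} - z^{(j)}\|_\infty \leq \delta$, inequality~\eqref{eq:HBinPPAD-delta-cont} gives $\bigl|[g(z^{(0)})]_j - [g(z^{(j)})]_j\bigr| \leq \varepsilon/(8\sqrt{k})$ for each $j$. Chaining these, for each $j \in \{1,\dots,k\}$,
\[
0 \;\leq\; [g(z^{(0)})]_j \;\leq\; [g(z^{(j)})]_j + \frac{\varepsilon}{8\sqrt{k}} \;<\; \frac{\varepsilon}{8\sqrt{k}}.
\]
Taking the maximum over $j$, I obtain $\|g(z^{(0)})\|_\infty \leq \varepsilon/(8\sqrt{k}) \leq \varepsilon/\sqrt{k}$.

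Finally, I translate back to the sphere. Using $\|g(z)\|_2 = \|f(SP^{-1}(z))\|_2$ and $\|\cdot\|_\infty \leq \|\cdot\|_2 \leq \sqrt{k}\,\|\cdot\|_\infty$ in $\mathbb{R}^k$, the bound above gives
\[
\|f(x)\|_\infty \;\leq\; \|f(x)\|_2 \;=\; \|g(z^{(0)})\|_2 \;\leq\; \sqrt{k}\,\|g(z^{(0)})\|_\infty \;\leq\; \varepsilon/8 \;\leq\; \varepsilon,
\]
as required. There is no real obstacle here: the argument is essentially the standard ``colour-$0$ vertex is an approximate zero'' trick for Sperner-style colourings, and the only things to be careful about are (i) reading off the correct sign information from the argmin-based colour, and (ii) keeping track of the $\sqrt{k}$ factors introduced by switching between $\ell_2$ and $\ell_\infty$ norms when passing between $f$ and $g$. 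Both are already accounted for in the choice of $\varepsilon/(8\sqrt{k})$ in the continuity modulus~\eqref{eq:HBinPPAD-delta-cont}.
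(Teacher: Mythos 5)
Your proof is correct and follows essentially the same approach as the paper's: both exploit that colour $0$ forces all coordinates of $g(z^{(0)})$ to be nonnegative while colour $j$ forces $[g(z^{(j)})]_j < 0$, then invoke the $\delta$-continuity bound~\eqref{eq:HBinPPAD-delta-cont}. The paper singles out the coordinate achieving $\|g(z^{(0)})\|_\infty$ and bounds only that one, whereas you bound every coordinate and then take the max; this is a cosmetic difference and the conclusion and norm bookkeeping ($\ell_\infty \le \ell_2 \le \sqrt{k}\,\ell_\infty$) are identical.
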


\begin{claimproof}
Let the $\delta$-fine $k$-simplex $z^{(0)}, \dots, z^{(k)}$ be panchromatic. Since $[g(z^{(0)})]_j \geq 0$ for all $j$, there exists $i$ such that $[g(z^{(0)})]_i = \|g(z^{(0)})\|_\infty$. However, it must hold that $[g(z^{(i)})]_i < 0$ and thus $\|g(z^{(0)})\|_\infty \leq |[g(z^{(0)})]_i- [g(z^{(i)})]_i|$. Since $\|z^{(0)} - z^{(i)}\|_\infty \leq \delta$, it follows by \eqref{eq:HBinPPAD-delta-cont} that  $\|g(z^{(0)})\|_\infty \leq \varepsilon/8\sqrt{k} \leq \varepsilon/\sqrt{k}$, i.e.\ $z^{(0)}$ yields a solution.
\end{claimproof}

\medskip

\noindent \textbf{Sperner.} The next step is to show that we can locate a panchromatic $k$-simplex by using a Sperner-like argument. We explain the intuition in the case $k=2$. If $\|f(p)\|_\infty \leq \varepsilon$, then we have found a solution. If this is not the case, then we consider a sufficiently large region $C$ in $\mathbb{R}^k$ centred at $0$. We can show that on the boundary of $C$, $g$ (approximately) behaves as in \cref{fig:HBinPPAD-2D-boundary}, which yields a colouring similar to \cref{fig:HBinPPAD-2D-triangulated}. The colouring on the boundary of the domain in \cref{fig:HBinPPAD-2D-triangulated} has a very special structure. First of all, it is antipodally symmetric, namely, points lying on opposite sides have the same colour. Furthermore, if we start from any point on the boundary and ``move'' along the boundary in anti-clockwise direction, then we will see colour 1 change into colour 2 exactly twice. However, we will never see colour 2 change into colour 1; there will always be colour 0 in-between.

Intuitively, we want to pick a $\delta$-fine triangulation of the ball and construct a directed graph on the triangles. There is a directed edge between two triangles if they share a facet coloured $1,2$ and the edge is directed such that it crosses the facet with the colour $1$ on its left-hand side and the colour $2$ is on its right-hand side. Then, because of the structure of the colouring on the boundary, we obtain two known sources and any sink or other source of this directed graph corresponds to a panchromatic simplex, i.e.\ a solution.

This intuition is formalised in the following lemma, which is proved below.

\begin{lemma}\label{lem:HBinPPAD:reduce2sEOL}
The problem of finding a $\delta$-fine panchromatic $k$-simplex in $\mathbb{R}^k$ reduces to $2$-source \textsc{End-of-Line}.
\end{lemma}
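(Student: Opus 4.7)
The plan is to reduce the search for a $\delta$-fine panchromatic $k$-simplex to $2$-source \textsc{End-of-Line} via the standard Sperner-to-EoL graph walk, carried out on a $\delta$-fine triangulation of a large cross-polytope $C \subset \mathbb{R}^k$ centred at the origin, with the two known sources located on $\partial C$.

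First, if $\|f(p)\|_\infty \leq \varepsilon$ I can output a trivial EoL instance whose solution decodes to $p$, so from now on I assume $f(p) \neq 0$. I then pick a radius $R$ with $\text{size}(R) \leq \text{poly}(\text{size}(f),\text{size}(\varepsilon))$ so large that whenever $\|z\|_\infty \geq R/2$ the point $x(z) = SP^{-1}(z)$ is close enough to the pole $p$ that $g(z)$ agrees within $\varepsilon/(8\sqrt k)$ of the reference field $\tilde g(z)$ obtained from $g$ by freezing $f(x(z))$ to the constant value $f(p)$. Such an $R$ exists because $\|x(z) - p\|_2 = 2/\sqrt{1+\|z\|_2^2}$, and $f$ is polynomially continuous with the $L^\infty$ bound $\|f\|_\infty \leq 4/\bar\delta$ established in the proof of \cref{clm:inPPAD:g-cont}. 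I set $C := \{z \in \mathbb{R}^k : \|z\|_1 \leq R\}$ and use an efficient Kuhn-type $\delta$-fine triangulation of $C$ whose simplices and facets can be indexed and manipulated in polynomial time.

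Second, I build the directed graph in the classical Sperner-to-EoL style. Its vertices are the $k$-simplices of the triangulation; a $(k-1)$-face whose vertex colours are exactly $\{1,\ldots,k\}$ is called a \emph{door}, and each door receives a canonical orientation from the parity of the permutation carrying the fixed tuple $(1,\ldots,k)$ to the tuple of colours read along a chosen oriented basis of the face. Two $k$-simplices sharing an interior door are joined by an edge from the ``negative'' to the ``positive'' side; a door on $\partial C$ makes its adjacent $k$-simplex a boundary source or sink. Every $k$-simplex has at most two doors, so in- and out-degrees are at most one, and a degree-$1$ vertex is either panchromatic --- a valid solution by \cref{clm:inPPAD:panchromatic} --- or has its unique door on $\partial C$.

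Third, I count boundary doors. Since $g$ and $\tilde g$ differ by at most $\varepsilon/(8\sqrt k)$ on $\partial C$, the colouring there is controlled by the sign pattern of $\tilde g$, which by a direct calculation in the basis $(b_1,\ldots,b_k)$ induces a map from $\partial C$ into the labelled orthants of $\mathbb{R}^k$. Its degree equals the Poincar\'e--Hopf index at the pole of a tangent vector field on $S^k$ with an isolated zero at $p$, which for even $k$ equals $\chi(S^k) = 2$ up to a sign fixed by orientation conventions. Because the triangulation is $\delta$-fine, this topological degree coincides with the signed count of doors on $\partial C$, and because all contributing doors share the same orientation they are all sources, giving exactly two known sources. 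Any additional degree-$1$ vertex of the graph must then be an interior panchromatic simplex, i.e.\ a solution.

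The main obstacle is the degree computation of the third step: verifying that the pull-back of a fixed non-vanishing tangent vector at $p$ under the stereographic chart really induces a degree-$2$ map $\partial C \to S^{k-1}$ for every even $k$, and that after the $\varepsilon/(8\sqrt k)$ perturbation and the passage from a continuous map to a discrete signed count of doors on the $\delta$-fine triangulation, this degree manifests as exactly two same-oriented boundary doors rather than, say, four with cancelling signs. This is the point where even-dimensionality is essential, and where the Jarvis--Tanton argument for $k=2$ must be generalised to higher even $k$.
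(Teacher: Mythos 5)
Your outline matches the paper's high-level plan --- unfold to $\mathbb{R}^k$, colour a $\delta$-fine triangulation of a large cross-polytope, and run the standard Sperner-to-\textsc{End-of-Line} path-following --- but your third step, which is the crux, relies on a topological degree argument that does not deliver what the reduction needs, and you acknowledge as much in your final paragraph. Degree theory gives you a \emph{signed} count of boundary doors equal to $2$; it does not rule out, say, three positively oriented doors and one negatively oriented door, nor does it tell you \emph{which} two simplices to hand to the \textsc{End-of-Line} solver as known sources. A polynomial-time reduction must explicitly output those two sources, so a global invariant is not enough.

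The paper resolves this constructively, with no appeal to degree theory. It introduces an explicit \emph{ideal colouring} $w_i(z) = (1 - 2z_i\sum_j z_j)/\sqrt{k}$ and proves (\cref{clm:inPPAD:ideal-properties}, \cref{clm:inPPAD:ideal-actual-col}) that the actual colouring tracks it closely on $\partial C_m$. From this it deduces two local, combinatorial facts: (i) no door-facet of the triangulation can lie on $\partial C_m \setminus (\partial C_m^+ \cup \partial C_m^-)$ (\cref{clm:inPPAD:no-other-entrance}), and (ii) each of $\partial C_m^+$ and $\partial C_m^-$ satisfies the classical Sperner boundary condition as a coloured $(k-1)$-simplex (\cref{clm:inPPAD:sperner-boundary}). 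It then applies the ``artificial start'' trick from Todd to each of $\partial C_m^\pm$ separately; this enlarges the instance slightly and replaces the possibly many boundary doors on each face by a \emph{single} exposed door in a known location. Finally, the orientation match --- the fact that both exposed doors are sources rather than one source and one sink --- comes from \cref{clm:inPPAD:orientation}: the antipodal map $y \mapsto -y$ carries $\partial C_m^+$ onto $\partial C_m^-$ preserving the Sperner boundary conditions and has determinant $(-1)^k = 1$ since $k$ is even. This last point is exactly the place where even-dimensionality is used, and it replaces the Poincar\'e--Hopf computation you were hoping to perform. So the gap you flag is real, and filling it requires replacing the degree argument by the explicit boundary analysis plus artificial-start construction.
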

By \cref{thm:mulsourceeol} in \cref{sec:EOL} it then follows that the problem lies in PPAD, which concludes the proof of \cref{thm:HBinPPAD}.
\end{proof}

\begin{proof}[Proof of \cref{lem:HBinPPAD:reduce2sEOL}]
If $\|f(p)\|_\infty \leq \varepsilon$, then we have found a solution. Thus, assume that this is not the case. To simplify the analysis we assume that the standard coordinate system is such that $f(p) = (0,v,v, \dots, v)/\sqrt{k}$, $v > \varepsilon$. If this is not the case then we perform a rotation such that this holds (at least approximately; a small error can be tolerated). Let $e^{(i)}$ be the $i$th unit vector in $\mathbb{R}^k$.

In order to investigate the colouring, we first prove some properties of an \emph{ideal} colouring that we define below and then show that the \emph{actual} colouring is ``close'' enough to the ideal colouring and also satisfies these properties.

\medskip

\noindent \textbf{The ideal colouring.} The ideal colouring is a function $w : \mathbb{R}^k \setminus \{0\} \to \mathbb{R}^k$. For $z \in S^{k-1}$ it is given by $w_i(z) = (1-2z_i \sum_{j=1}^k z_j)/\sqrt{k}$ for $1 \leq i \leq k$. For $z \in \mathbb{R}^k \setminus \{0\}$, we set $w(z) := w(z/\|z\|_2)$. The ideal colour of $z$ is the colour corresponding to $w(z)$ (using the same procedure as for the actual colouring), instead of $g(z)$.

\begin{claim}\label{clm:inPPAD:ideal-properties}
The following properties of $w$ are easy to check by direct computation.
\begin{enumerate}
    \item $w$ is antipodally symmetric, i.e.\ $w(-z)=w(z)$ for all $z \in \mathbb{R}^k \setminus \{0\}$.
    \item For all $z \in \mathbb{R}^k \setminus \{0\}$, $\|w(z)\|_2 = 1$.
    \item If $w(z) \leq 0$, then $z \geq 0$ or $z \leq 0$.
    \item For any subset of the indices $\emptyset \neq I \subseteq \{1, \dots, k\}$, if $z \in S^{k-1}$ is such that $z_i \geq 0$ for $i \in I$ and $z_i = 0$ otherwise, then $w_j(z)=0$ for all $j \notin I$ and there exists $i \in I$ such that $w_i(z) \leq -1/\sqrt{k}$. In particular, any colour-direction $u \in \mathbb{R}^k$ with $\|u-w(z)\|_\infty < 1/2\sqrt{k}$ yields a colour in $I$.
\end{enumerate}
\end{claim}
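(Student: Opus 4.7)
The plan is to verify each of the four properties by direct substitution into the defining formula $w_i(z) = (1 - 2 z_i s)/\sqrt{k}$, where $s := \sum_{j=1}^k z_j$. Since $w$ is radially extended from $S^{k-1}$, throughout I would restrict to $z \in S^{k-1}$ and exploit $\sum_j z_j^2 = 1$; the general case $z \in \mathbb{R}^k \setminus \{0\}$ then follows by scaling.

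Properties 1, 2 and 3 should all be short. For 1, replacing $z$ by $-z$ negates both $z_i$ and $s$, leaving the product $z_i s$ (hence $w_i$) invariant. For 2, I would expand
\[
k \, \|w(z)\|_2^2 \;=\; \sum_i (1 - 2 z_i s)^2 \;=\; k - 4 s \sum_i z_i + 4 s^2 \sum_i z_i^2 \;=\; k - 4 s^2 + 4 s^2 \;=\; k,
\]
using $\sum_i z_i = s$ and $\|z\|_2^2 = 1$. For 3, if $w_i(z) \leq 0$ for every $i$, then $z_i s \geq 1/2 > 0$ for each $i$, so $s \neq 0$ and every $z_i$ must share the sign of $s$, giving either $z \geq 0$ or $z \leq 0$.

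Property 4 is where the real work lies. For $j \notin I$ the hypothesis $z_j = 0$ yields $w_j(z) = 1/\sqrt{k}$ directly from the formula; what matters for the final implication in the claim is only that this value sits at distance $\geq 1/\sqrt{k}$ above $-1/\sqrt{k}$. The substantive task is to locate some $i \in I$ with $w_i(z) \leq -1/\sqrt{k}$, equivalently $z_i s \geq 1$. My plan is to take $i^* := \argmax_{i \in I} z_i$, and set $a := z_{i^*}$ together with $b_1, \dots, b_m$ for the remaining (non-negative) entries of $z$ on $I$. Since $0 \leq b_\ell \leq a$, the inequality $a b_\ell \geq b_\ell^2$ holds termwise, so summing and using $a^2 + \sum_\ell b_\ell^2 = 1$ gives
\[
a s \;=\; a^2 + a \sum_\ell b_\ell \;\geq\; a^2 + \sum_\ell b_\ell^2 \;=\; 1,
\]
as required. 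The robustness statement then falls out: any $u$ with $\|u - w(z)\|_\infty < 1/(2\sqrt{k})$ satisfies $u_{i^*} < -1/(2\sqrt{k}) < 0$ (so the colouring rule returns an argmin rather than the default colour $0$) and $u_{i^*} < u_j$ for every $j \notin I$ (since $w_j \geq 0$ while $w_{i^*} \leq -1/\sqrt{k}$), forcing the argmin into $I$.

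The only step that requires any idea beyond routine expansion is the inequality $z_{i^*} s \geq 1$, and the trick is simply to use $a = \max_{i \in I} z_i$ to dominate $\sum b_\ell^2$ by $a \sum b_\ell$ termwise. Everything else reduces to one-line computations from the explicit definition of $w$.
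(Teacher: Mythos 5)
Your proof is correct and follows the direct-computation route the paper implicitly intends (the paper offers no proof, asserting the properties are ``easy to check''). You also correctly spotted that item~4 as printed is a typo: plugging $z_j = 0$ for $j \notin I$ into $w_j(z) = (1 - 2 z_j \sum_\ell z_\ell)/\sqrt{k}$ gives $w_j(z) = 1/\sqrt{k}$, not $0$; fortunately, as you observe, the ``in particular'' conclusion only needs $w_j(z)$ to be at least $1/\sqrt{k}$ above $w_{i^*}(z) \leq -1/\sqrt{k}$, which $1/\sqrt{k}$ certainly is, so the downstream use of the claim is unaffected. The one non-routine step, showing $z_{i^*} s \geq 1$ for $i^* = \argmax_{i \in I} z_i$ via the termwise bound $a b_\ell \geq b_\ell^2$ and $a^2 + \sum_\ell b_\ell^2 = 1$, is exactly the right trick and is cleanly executed.
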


\medskip

\noindent \textbf{The ideal colouring is ``close'' to the actual colouring.} We now show that if $\|z\|_2$ is sufficiently large, then the ideal colouring is very close to the actual colouring. Namely, we will show that $w(z)$ is almost equal to $\widehat{g}(z)$, a ``normalisation'' of $g$ defined by $\widehat{g}(z) = c(z) \cdot g(z)$, where $c(z) = \frac{1+\|z\|_2^2}{v \|z\|_2^2}$. Note that $\widehat{g}$ and $g$ yield the exact same colour for any $z \neq 0$.

\begin{claim}\label{clm:inPPAD:ideal-actual-col}
There exists $t > 0$ with $\textup{size}(t) = \textup{poly}(\textup{size}(f), \textup{size}(\varepsilon))$ such that for all $z \in \mathbb{R}^k$ with $\|z\|_2 \geq t$
$$\|\widehat{g}(z)-w(z)\|_\infty \leq \frac{1}{4\sqrt{k}}.$$
\end{claim}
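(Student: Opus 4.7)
The plan is to split $\widehat{g}(z)_i - w_i(z)$ into a \emph{main term} $c(z)\langle f(p), b_i(x(z))\rangle - w_i(z)$ (the value obtained by pretending $f(x(z)) = f(p)$) and an \emph{error term} $c(z) \langle f(x(z)) - f(p), b_i(x(z)) \rangle$. The main term is a closed-form expression that I will compute exactly and show equals $1/(\sqrt{k} R^2)$, where $R := \|z\|_2$; this computation is what motivates the specific definition of $c(z)$. The error term shrinks as $R$ grows because $x(z) \to p$ at rate $O(1/R)$, so polynomial continuity of $\mathcal{F}$ lets me push it below $1/(8\sqrt{k})$ by taking $R$ past a threshold $t$ of polynomial size.

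First I would substitute $x(z) = SP^{-1}(z)$ into the definition of $b_i$. The key simplification is $x_i(z)/(1-x_0(z)) = z_i$, which follows at once from $1-x_0(z) = 2/(1+R^2)$ and $x_i(z) = 2z_i/(1+R^2)$ and makes the entries of $b_i(x(z))$ clean in $z$. Pairing with $f(p) = (0,v,\dots,v)/\sqrt{k}$ gives
\[
\langle f(p), b_i(x(z))\rangle \;=\; \frac{v}{\sqrt{k}}\,\left[\,1 \;-\; \frac{2\, z_i \sum_{j=1}^k z_j}{1+R^2}\,\right],
\]
and multiplying by $c(z) = (1+R^2)/(v R^2)$ yields $(1/\sqrt{k})\,[\,1 + 1/R^2 - 2 z_i \sum_j z_j / R^2\,]$. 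Compared to $w_i(z) = (1/\sqrt{k})\,[\,1 - 2 z_i \sum_j z_j / R^2\,]$, the main term equals exactly $1/(\sqrt{k} R^2)$, which is at most $1/(8\sqrt{k})$ as soon as $R \geq 3$.

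For the error term, the orthonormality of $\{b_i(x(z))\}$ (noted in the construction of the local bases) gives $\|b_i(x(z))\|_2 = 1$, so Cauchy--Schwarz together with the bound $c(z) \leq 2/v$ (valid for $R \geq 1$) yields an upper bound of $2\sqrt{k+1}\,\|f(x(z)) - f(p)\|_\infty / v$. Directly from the formula for $SP^{-1}$, $\|x(z) - p\|_\infty \leq \|x(z)-p\|_2 = 2/\sqrt{1+R^2} \leq 2/R$. I would then invoke polynomial continuity of $\mathcal{F}$ with target accuracy $v/(32k)$ (whose representation size is $\text{poly}(\text{size}(f))$, since $v$ is a coordinate of the rational vector $f(p)$) to obtain a rational $\delta > 0$ of polynomial size such that $\|x-p\|_\infty \leq \delta$ implies $\|f(x)-f(p)\|_\infty \leq v/(32k)$. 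Taking $t := \max(3,\, 2/\delta)$ then keeps the error term at most $1/(8\sqrt{k})$ and, combined with the main term, produces the required bound $1/(4\sqrt{k})$, with $\text{size}(t) = \text{poly}(\text{size}(f),\text{size}(\varepsilon))$. The only genuine obstacle is verifying the algebraic identity for the main term; the rest is standard error analysis and bookkeeping to confirm that $t$ has polynomial size (which reduces to noting $v > \varepsilon$ and $\text{size}(v) \leq \text{poly}(\text{size}(f))$).
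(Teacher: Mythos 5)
Your proof is correct and takes essentially the same approach as the paper: you decompose $\widehat{g}-w$ into the exact main term $c(z)\bar{g}(z)-w(z)=1/(\sqrt{k}\|z\|_2^2)$ (obtained by pretending $f(x(z))=f(p)$) plus the error term $c(z)(g(z)-\bar{g}(z))$ controlled via polynomial continuity, which is precisely the paper's argument. The only differences are inconsequential constant choices (target accuracy $v/(32k)$ vs.\ the paper's $\varepsilon/16\sqrt{k(k+1)}$, and threshold $t\geq 3$ vs.\ $t\geq 4$), with your target being $\varepsilon$-independent.
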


\begin{claimproof}
Since $f$ is polynomially continuous, there exists $\widehat{\delta}$ such that $\|f(x)-f(p)\|_\infty \leq \varepsilon/16\sqrt{k(k+1)}$ with $\textup{size}(\widehat{\delta}) = \textup{poly}(\textup{size}(f), \textup{size}(\varepsilon))$. Using the definition of $SP^{-1}$ it is easy to check that $\|x(z)-p\|_2 \leq 2/\sqrt{1+\|z\|_2^2}$. Thus, we can construct $t \geq 4$ with $\textup{size}(t) = \textup{poly}(\textup{size}(f), \textup{size}(\varepsilon))$ such that $\|f(x(z))-f(p)\|_\infty \leq \varepsilon/16\sqrt{k(k+1)}$ for all $z$ with $\|z\|_2 \geq t$.

Let $\Bar{g}(z)$ denote the colour-direction obtained at $z$ if we use $f(p)$, instead of $f(x(z))$, i.e.\ $\Bar{g}_i(z) = \langle f(p), b_i(x(z)) \rangle$. Then, for all $z$ with $\|z\|_2 \geq t$, we have
\begin{equation}\label{eq:HBinPPAD-using-pole-instead}
\|g(z) - \Bar{g}(z)\|_\infty \leq \sqrt{k+1} \|f(x(z))-f(p)\|_\infty \leq \varepsilon/16\sqrt{k}.
\end{equation}
Recalling that $f(p) = (0, v, v, \dots, v)/\sqrt{k}$ and using equation \eqref{eq:basis-field} and the definition of $SP^{-1}$, we can write for all $i$
\begin{align*}
\Bar{g}_i(z) = \langle f(p), b_i(x(z)) \rangle &= \frac{v}{\sqrt{k}} \sum_{j=1}^k [b_i(x(z))]_j\\
&= \frac{v}{\sqrt{k}} \left( 1 - \frac{x_i(z)}{1-x_0(z)} \sum_{j=1}^k x_j(z) \right)\\
&= \frac{v}{\sqrt{k}} \left( 1 - \frac{\frac{2z_i}{1+\|z\|_2^2}}{\frac{2}{1+\|z\|_2^2}} \sum_{j=1}^k \frac{2z_j}{1+\|z\|_2^2} \right)\\
&= \frac{v}{\sqrt{k}} \left( 1 - \frac{2z_i}{1+\|z\|_2^2} \sum_{j=1}^k z_j \right)\\
&= \frac{v}{\sqrt{k}} \cdot \frac{\|z\|_2^2}{1+\|z\|_2^2} \left( \frac{1+\|z\|_2^2}{\|z\|_2^2} - \frac{2z_i}{\|z\|_2^2} \sum_{j=1}^k z_j \right)\\
&= \frac{v \|z\|_2^2}{1+\|z\|_2^2} \left( \frac{1 - 2\frac{z_i}{\|z\|_2} \sum_{j=1}^k \frac{z_j}{\|z\|_2}}{\sqrt{k}} + \frac{1}{\sqrt{k} \|z\|_2^2} \right)\\
&= \frac{1}{c(z)}\left( w_i(z) + \frac{1}{\sqrt{k} \|z\|_2^2} \right).
\end{align*}
We obtain that for any $z$ with $\|z\|_2 \geq t$
\begin{equation*}\begin{split}
\|\widehat{g}(z) - w(z)\|_\infty \leq c(z) \|g(z)-\Bar{g}(z)\|_\infty + \|c(z) \cdot \Bar{g}(z) - w(z)\|_\infty &\leq \frac{c(z) \varepsilon}{16\sqrt{k}} + \frac{1}{\sqrt{k}\|z\|_2^2}\\
&\leq \frac{1+\|z\|_2^2}{16 \sqrt{k} \|z\|_2^2} + \frac{1}{\sqrt{k}t^2}\\
&\leq 1/4\sqrt{k}
\end{split}\end{equation*}
where we used \eqref{eq:HBinPPAD-using-pole-instead}, $v > \varepsilon$ and $t \geq 4$.
\end{claimproof}

\medskip

\noindent \textbf{The cross-polytope $\boldsymbol{C_m}$.} Instead of working on a ball in $\mathbb{R}^k$, we define our Sperner instance on a different region that is easier to triangulate efficiently while still behaving nicely with respect to the colouring. Let $m = \lceil t \sqrt{k} \rceil$. The \emph{cross-polytope} of radius $m$ is defined as $C_m = \{z \in \mathbb{R}^k : \|z\|_1 \leq m\}$.

The unit cross-polytope can be triangulated efficiently by using a standard efficient triangulation of the $k$-simplex $0,e^{(1)}, \dots, e^{(k)}$ (see e.g.~\cite{todd1976fixedpoints}) and then extending it to the rest of the cross-polytope by mirroring along each axis. Thus, using a $\delta/m$-fine efficient triangulation of the simplex, extending it to the cross-polytope and then scaling to $C_m$, yields an efficient $\delta$-fine triangulation of $C_m$. We call this triangulation $T$. In particular, given a simplex of $T$ and one of its facets, we can compute the other simplex of $T$ that shares this facet in polynomial time (or decide that it does not exist, if the facet lies on the boundary of $C_m$). Note that $\text{size}(\delta/m)$ is polynomial in the size of the input.

\medskip

\noindent \textbf{The boundary of $\boldsymbol{C_m}$.} The boundary of $C_m$ is denoted $\partial C_m = \{z \in \mathbb{R}^k : \|z\|_1 = m\}$. Note that for any $z \in \partial C_m$ we have $\|z\|_2 \geq t$. Let $\partial C_m^+ := \{z \geq 0 : \|z\|_1 = m\}$ and $\partial C_m^- := \{z \leq 0 : \|z\|_1 = m\}$ denote the intersection of $\partial C_m$ with the all-positive and all-negative orthant respectively. Note that $\partial C_m^+$ and $\partial C_m^-$ are the $(k-1)$-simplices $m \cdot e^{(1)}, \dots, m \cdot e^{(k)}$ and $-m \cdot e^{(1)}, \dots, -m \cdot e^{(k)}$. Furthermore, by construction, $T$ induces a triangulation on $\partial C_m^+$ and $\partial C_m^-$. We now prove two key properties of the actual colouring on $\partial C_m$.

\begin{claim}\label{clm:inPPAD:sperner-boundary}
$\partial C_m^+$ satisfies the standard Sperner boundary conditions. Namely, any face $m \cdot e^{(i_1)}, \dots, m \cdot e^{(i_\ell)}$ ($1 \leq i_1 < \dots < i_\ell \leq k$, $1 \leq \ell \leq k$) of $\partial C_m^+$ only contains colours in $\{i_1, \dots, i_\ell\}$ in the actual colouring. The same also holds for $\partial C_m^-$.
\end{claim}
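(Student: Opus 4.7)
The plan is to reduce the statement to an application of property 4 of \cref{clm:inPPAD:ideal-properties}, via the approximation of the actual colouring by the ideal colouring provided by \cref{clm:inPPAD:ideal-actual-col}. Let $F$ be a face of $\partial C_m^+$ spanned by $m \cdot e^{(i_1)}, \dots, m \cdot e^{(i_\ell)}$, and set $I = \{i_1,\dots,i_\ell\}$. Fix $z \in F$, so $z \geq 0$, $z_j = 0$ for $j \notin I$, and $\|z\|_1 = m$. First I would observe the size bound: from $\|z\|_2 \geq \|z\|_1/\sqrt{k} = m/\sqrt{k} \geq t$ and \cref{clm:inPPAD:ideal-actual-col}, we have $\|\widehat{g}(z) - w(z)\|_\infty \leq 1/(4\sqrt{k})$. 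Since $\widehat{g}(z) = c(z) g(z)$ with $c(z) > 0$, the vectors $g(z)$ and $\widehat{g}(z)$ induce the same colour, so it suffices to show that the colour computed from $\widehat{g}(z)$ lies in $I$.

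Next I would apply property 4 of \cref{clm:inPPAD:ideal-properties} to $z/\|z\|_2 \in S^{k-1}$ (which has the same support pattern as $z$): there exists $i^\star \in I$ such that $w_{i^\star}(z) \leq -1/\sqrt{k}$, and $w_j(z) = 0$ for all $j \notin I$. Combined with the approximation bound, this yields
\begin{equation*}
\widehat{g}_{i^\star}(z) \leq w_{i^\star}(z) + \frac{1}{4\sqrt{k}} \leq -\frac{3}{4\sqrt{k}}, \qquad \widehat{g}_j(z) \geq w_j(z) - \frac{1}{4\sqrt{k}} = -\frac{1}{4\sqrt{k}} \ \text{ for all } j \notin I.
\end{equation*}
In particular, some coordinate of $\widehat{g}(z)$ is strictly negative, so colour $0$ is not assigned; moreover $\widehat{g}_{i^\star}(z) < \widehat{g}_j(z)$ for every $j \notin I$, so the $\argmin$ must be attained inside $I$, and the colour of $z$ lies in $I$, as required.

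For $\partial C_m^-$, let $F^-$ be the face spanned by $-m \cdot e^{(i_1)}, \dots, -m \cdot e^{(i_\ell)}$, and let $z \in F^-$. Then $-z \geq 0$ has support in $I$ and $\|{-z}\|_1 = m$. By antipodal symmetry (property 1 of \cref{clm:inPPAD:ideal-properties}) we have $w(z) = w(-z)$, so $w(z)$ again satisfies $w_j(z) = 0$ for $j \notin I$ and $w_{i^\star}(z) \leq -1/\sqrt{k}$ for some $i^\star \in I$. The same computation with $\widehat{g}(z)$ in place goes through verbatim, since \cref{clm:inPPAD:ideal-actual-col} still applies thanks to $\|z\|_2 \geq m/\sqrt{k} \geq t$, and we conclude that the colour of $z$ lies in $I$.

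I do not expect any real obstacle here: both ingredients (the ideal/actual approximation and the structural property of $w$ on coordinate-faces of $S^{k-1}$) have already been established, and the margin $1/(4\sqrt{k})$ between $-3/(4\sqrt{k})$ and $-1/(4\sqrt{k})$ is exactly what the problem is calibrated to give. The only mild subtlety is handling the negative face, which is resolved cleanly by antipodal symmetry rather than redoing property 4 for non-positive supports.
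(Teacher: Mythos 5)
Your proof is correct and follows essentially the same route as the paper's: use the choice of $m$ to ensure $\|z\|_2 \geq t$ on the face, invoke \cref{clm:inPPAD:ideal-actual-col} for the bound $\|\widehat{g}(z)-w(z)\|_\infty \leq 1/(4\sqrt{k})$, and then apply Item~4 of \cref{clm:inPPAD:ideal-properties} (you explicitly unpack its ``in particular'' clause, which the paper simply cites). The extra detail you give for $\partial C_m^-$ via antipodal symmetry is the right way to fill the ``same also holds'' remark.
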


\begin{claimproof}
By \cref{clm:inPPAD:ideal-properties} (Item 4), it suffices to show that $\|\widehat{g}(z)-w(z)\|_\infty < 1/2\sqrt{k}$ for all $z$ on the face $m \cdot e^{(i_1)}, \dots, m \cdot e^{(i_\ell)}$, which holds by \cref{clm:inPPAD:ideal-actual-col} and the choice of $m$.
\end{claimproof}

\begin{claim}\label{clm:inPPAD:no-other-entrance}
The restriction of the triangulation $T$ to $\partial C_m \setminus (\partial C_m^+ \cup \partial C_m^-)$ does not contain any $(k-1)$-simplex coloured $1,2,\dots, k$ (in the actual colouring).
\end{claim}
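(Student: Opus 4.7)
The plan is to combine the explicit formula for the ideal colouring $w$ with the approximation bound of \cref{clm:inPPAD:ideal-actual-col} to show that, on any ``mixed-sign'' facet of $\partial C_m$, the actual colour $i$ can appear at a point $z$ only if $z_i$ and the sum $T(z) := \sum_j z_j$ share a definite sign with $|T(z)|$ bounded below. On a $\delta$-fine simplex lying in a mixed facet, the resulting sign constraints for two oppositely-signed indices become incompatible.

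First I would parametrise the top-dimensional facets of $\partial C_m$ by sign patterns: for each $\sigma \in \{+,-\}^k$ set $F_\sigma := \{z \in \partial C_m : \sigma_i z_i \geq 0 \text{ for all } i\}$. By the construction of $T$ via mirroring, every $(k-1)$-simplex of $T$ lying on $\partial C_m \setminus (\partial C_m^+ \cup \partial C_m^-)$ is contained in some $F_\sigma$ with $\sigma$ neither all $+$ nor all $-$.

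The core pointwise observation is the following. If $z \in \partial C_m$ receives actual colour $i \in \{1, \ldots, k\}$, then $g_i(z) < 0$ and hence $\widehat{g}_i(z) < 0$, so \cref{clm:inPPAD:ideal-actual-col} forces $w_i(z) \leq 1/(4\sqrt{k})$. Plugging this into $w_i(z) = (1 - 2 z_i T(z)/\|z\|_2^2)/\sqrt{k}$ rearranges to $z_i T(z) \geq \tfrac{3}{8}\|z\|_2^2 > 0$, so $z_i$ and $T(z)$ are both nonzero and share a sign. Combining $|z_i| \leq m$ with $\|z\|_2^2 \geq m^2/k$ (Cauchy--Schwarz applied to $\|z\|_1 = m$) then yields the quantitative lower bound $|T(z)| \geq 3m/(8k)$.

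With this in hand, the contradiction is quick. Suppose $\Delta \subseteq F_\sigma$ is a $(k-1)$-simplex of $T$ coloured $1, 2, \ldots, k$ with $\sigma$ mixed, and let $z^{(i)} \in \Delta$ be the vertex of colour $i$. The pointwise observation gives $\mathrm{sign}(T(z^{(i)})) = \mathrm{sign}(z^{(i)}_i) = \sigma_i$ and $|T(z^{(i)})| \geq 3m/(8k)$ for each $i$. Choosing $i_+$ with $\sigma_{i_+} = +$ and $i_-$ with $\sigma_{i_-} = -$ (which exist because $\sigma$ is mixed), we obtain $T(z^{(i_+)}) - T(z^{(i_-)}) \geq 3m/(4k)$, whereas $|T(z^{(i_+)}) - T(z^{(i_-)})| \leq k\|z^{(i_+)} - z^{(i_-)}\|_\infty \leq k\delta$ since $\Delta$ is $\delta$-fine. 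These are inconsistent provided $m > 4k^2\delta/3$, which can be arranged by enlarging $t$ at the outset. This enlargement costs only polynomially many bits since $\text{size}(\delta)$ is polynomial in the input, and it does not affect any earlier claim, as each only requires $m$ to exceed a threshold. The main ``obstacle'' is thus purely bookkeeping: ensuring that the $1/(4\sqrt{k})$ slack in \cref{clm:inPPAD:ideal-actual-col} keeps the lower bound on $|T(z^{(i_\pm)})|$ strictly away from $0$, and that $m$ can simultaneously satisfy the approximation requirement and $m > 4k^2\delta/3$ while remaining polynomial in size.
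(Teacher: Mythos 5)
Your proof is correct, but it takes a genuinely different route from the paper's. The paper argues by contradiction: it assumes some $w_\ell(y^{(i)}) > 0$, propagates this through the continuity bound on $\widehat{g}$ and \cref{clm:inPPAD:ideal-actual-col} to conclude $\|\widehat{g}(y^{(\ell)})\|_\infty < 1/(2\sqrt{k})$, hence $\|w(y^{(\ell)})\|_\infty < 1/\sqrt{k}$, contradicting the normalisation $\|w(z)\|_2 = 1$ from \cref{clm:inPPAD:ideal-properties} (Item 2); then Item 3 finishes. You instead work directly from the closed-form $w_i(z) = (1 - 2z_i\sum_j z_j/\|z\|_2^2)/\sqrt{k}$, extract the sign and magnitude constraints on $T(z) = \sum_j z_j$ at a coloured vertex, and contradict $\delta$-fineness on a mixed orthant-face. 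Your route is more explicitly geometric and quantitative — it pins down exactly which structural feature of $w$ (the shared factor $\sum_j z_j$) drives the claim — but it buys this at the cost of the explicit side-condition $m > 4k^2\delta/3$, which the paper's argument for this claim does not need (the paper draws the contradiction purely from the $1/(4\sqrt{k})$ approximation budget and the $1/(4\sqrt{k})$ fineness bound \eqref{eq:HBinPPAD-normalised-g-cont}, with no further comparison of $m$ against $\delta$). You correctly observe that this extra constraint is harmless — $\delta$ can always be shrunk and $t$ enlarged at polynomial cost — so the conclusion stands; it is just a bit of bookkeeping the paper's more abstract argument sidesteps.
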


\begin{claimproof}
Let $y^{(1)}, \dots, y^{(k)}$ be a $\delta$-fine $(k-1)$-simplex on $\partial C_m$ such that $y^{(i)}$ has colour $i$. By \cref{clm:inPPAD:ideal-properties} (Item 3) it suffices to show that $w(y^{(i)}) \leq 0$ for all $i$. Note that since $\|y^{(i)} - y^{(j)}\|_\infty \leq \delta$, it follows by \eqref{eq:HBinPPAD-delta-cont}
\begin{equation}\label{eq:HBinPPAD-normalised-g-cont}
    \|\widehat{g}(y^{(i)}) - \widehat{g}(y^{(j)})\|_\infty \leq \frac{1}{8\sqrt{k}} \frac{\varepsilon}{v} \frac{1+\|z\|_2^2}{\|z\|_2^2} \leq 1/4\sqrt{k}
\end{equation}
since $v > \varepsilon$ and $\|z\|_2 \geq t \geq 1$.

Now assume that there exists $\ell$ such that $w_\ell(y^{(i)}) > 0$. \eqref{eq:HBinPPAD-normalised-g-cont} and \cref{clm:inPPAD:ideal-actual-col} imply that $\widehat{g}_\ell(y^{(\ell)}) > -1/2\sqrt{k}$ and thus also $\widehat{g}_j(y^{(\ell)}) > -1/2\sqrt{k}$ for all $j$, since $y^{(\ell)}$ has colour $\ell$ in the actual colouring. On the other hand, since $\widehat{g}_j(y^{(j)}) < 0$ it follows that $\widehat{g}_j(y^{(\ell)}) < 1/4\sqrt{k}$ using \eqref{eq:HBinPPAD-normalised-g-cont}. Thus, we get $\|\widehat{g}(y^{(\ell)})\|_\infty < 1/2\sqrt{k}$. Using \cref{clm:inPPAD:ideal-actual-col} again, it follows that $\|w(y^{(\ell)})\|_\infty < 1/\sqrt{k}$, which implies $\|w(y^{(\ell)})\|_2 < 1$, a contradiction to \cref{clm:inPPAD:ideal-properties}.
\end{claimproof}

\medskip

\noindent \textbf{Orientation of $\boldsymbol{\partial C_m^+}$ and $\boldsymbol{\partial C_m^-}$.} The following simple observation is crucial.

\begin{claim}\label{clm:inPPAD:orientation}
The $(k-1)$-simplices $\partial C_m^+$ and $\partial C_m^-$ have the same orientation with respect to the Sperner boundary conditions.
\end{claim}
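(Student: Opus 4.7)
The claim is that the colour-labelled $(k-1)$-simplices $\partial C_m^+$ and $\partial C_m^-$ determine Sperner boundary entries that are oriented consistently, so that the Sperner path, when followed into the interior, produces a source on each face rather than a source on one and a sink on the other. My plan is to derive this from the antipodal symmetry $\phi: z \mapsto -z$, combined with the crucial hypothesis that $k$ is even.

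First I would fix a precise notion of Sperner orientation. Label the vertices $v_1, \dots, v_k$ of the face in the order dictated by the Sperner colouring (by \cref{clm:inPPAD:sperner-boundary} this is well defined, with $v_i$ carrying colour $i$), and let $n$ be the outward unit normal of $C_m$ at the face. The orientation is then the sign of $\det[v_2-v_1, \dots, v_k-v_1, n]$. For $\partial C_m^+$, the vertices are $v_i = m e^{(i)}$ and $n = (1,\dots,1)/\sqrt{k}$; for $\partial C_m^-$, the vertices are $v_i = -m e^{(i)}$ and $n = -(1,\dots,1)/\sqrt{k}$.

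Next I would apply $\phi$. It maps $\partial C_m^+$ to $\partial C_m^-$, sends $m e^{(i)} \mapsto -m e^{(i)}$ (preserving the colour-$i$ label, since by \cref{clm:inPPAD:sperner-boundary} the vertex $-m e^{(i)}$ indeed carries colour $i$), and carries the outward normal of $\partial C_m^+$ to that of $\partial C_m^-$. A one-line determinant computation then gives
$$\det[-(v_2-v_1), \dots, -(v_k-v_1), -n] = (-1)^k \det[v_2-v_1, \dots, v_k-v_1, n],$$
so the Sperner orientation of $\partial C_m^-$ equals $(-1)^k$ times that of $\partial C_m^+$; because $k$ is even, the two signs coincide.

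I do not anticipate any substantive obstacle beyond pinning down the orientation convention cleanly. It is worth emphasising that this is precisely the step at which the parity of $k$ enters the whole construction: for odd $k$ the two faces would be oppositely oriented, yielding one source and one sink and thus a standard single-source \textsc{End-of-Line} instance with no guaranteed extra solution, which is exactly consistent with the failure of the Hairy Ball Theorem on odd-dimensional spheres.
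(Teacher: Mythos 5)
Your argument is correct and is essentially the paper's own: both invoke the antipodal map $T(y)=-y$, note that it carries $\partial C_m^+$ to $\partial C_m^-$ compatibly with the Sperner colour labels (via \cref{clm:inPPAD:sperner-boundary}), and observe that its determinant $(-1)^k$ equals $1$ for even $k$. You simply make the notion of Sperner orientation explicit via the determinant with the outward normal, whereas the paper leaves it implicit.
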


\begin{claimproof}
Consider the linear function $T : \mathbb{R}^k \to \mathbb{R}^k$, $T(y) = -y$ and note that $T$ maps $\partial C_m^+$ to $\partial C_m^-$ while preserving the Sperner boundary conditions (\cref{clm:inPPAD:sperner-boundary}). Furthermore, the determinant of $T$ is $(-1)^k = 1$, since $k$ is even. It follows that $\partial C_m^+$ and $\partial C_m^-$ have the same orientation with respect to the Sperner boundary conditions.
\end{claimproof}

This is the only point in the proof where we use the fact that $k$ is even. However, it is indeed crucial, since this ensures that the two ends of line that we know are both sources (or both sinks), as we will see below. If $k$ is odd, our reduction yields an instance with a known source and sink. In this case, there is no guarantee that a solution -- another end of line -- even exists.

\begin{figure}[h!]
\centering
\begin{tikzpicture}
\filldraw[fill=lightgray, draw=black, thick] (0,5) -- (5,0) -- (0,-5) -- (-5,0) --cycle;

\filldraw[gray] (5,5) circle (2pt) node[anchor=south west] {2};
\draw[gray, thick] (5,5) -- (0,5);
\draw[gray, thick] (5,5) -- (0.5,4.5);
\draw[gray, thick] (5,5) -- (1,4);
\draw[gray, thick] (5,5) -- (1.5,3.5);
\draw[gray, thick] (5,5) -- (2,3);
\draw[gray, thick] (5,5) -- (2.5,2.5);
\draw[gray, thick] (5,5) -- (3,2);
\draw[gray, thick] (5,5) -- (3.5,1.5);
\draw[gray, thick] (5,5) -- (4,1);
\draw[gray, thick] (5,5) -- (4.5,0.5);
\draw[gray, thick] (5,5) -- (5,0);

\filldraw[gray] (-5,-5) circle (2pt) node[anchor=north east] {2};
\draw[gray, thick] (-5,-5) -- (0,-5);
\draw[gray, thick] (-5,-5) -- (-0.5,-4.5);
\draw[gray, thick] (-5,-5) -- (-1,-4);
\draw[gray, thick] (-5,-5) -- (-1.5,-3.5);
\draw[gray, thick] (-5,-5) -- (-2,-3);
\draw[gray, thick] (-5,-5) -- (-2.5,-2.5);
\draw[gray, thick] (-5,-5) -- (-3,-2);
\draw[gray, thick] (-5,-5) -- (-3.5,-1.5);
\draw[gray, thick] (-5,-5) -- (-4,-1);
\draw[gray, thick] (-5,-5) -- (-4.5,-0.5);
\draw[gray, thick] (-5,-5) -- (-5,0);

\filldraw[black] (5,0) circle (2pt) node[anchor=north west] {1};
\filldraw[black] (4.5,0.5) circle (2pt) node[anchor=north east] {1};
\filldraw[black] (4,1) circle (2pt) node[anchor=north east] {1};
\filldraw[black] (3.5,1.5) circle (2pt) node[anchor=north east] {1};
\filldraw[black] (3,2) circle (2pt) node[anchor=north east] {2};
\filldraw[black] (2.5,2.5) circle (2pt) node[anchor=north east] {2};
\filldraw[black] (2,3) circle (2pt) node[anchor=north east] {1};
\filldraw[black] (1.5,3.5) circle (2pt) node[anchor=north east] {2};
\filldraw[black] (1,4) circle (2pt) node[anchor=north east] {2};
\filldraw[black] (0.5,4.5) circle (2pt) node[anchor=north east] {2};
\filldraw[black] (0,5) circle (2pt) node[anchor=south east] {2};

\filldraw[black] (-0.5,4.5) circle (2pt) node[anchor=south east] {2};
\filldraw[black] (-1,4) circle (2pt) node[anchor=south east] {0};
\filldraw[black] (-1.5,3.5) circle (2pt) node[anchor=south east] {2};
\filldraw[black] (-2,3) circle (2pt) node[anchor=south east] {0};
\filldraw[black] (-2.5,2.5) circle (2pt) node[anchor=south east] {0};
\filldraw[black] (-3,2) circle (2pt) node[anchor=south east] {0};
\filldraw[black] (-3.5,1.5) circle (2pt) node[anchor=south east] {0};
\filldraw[black] (-4,1) circle (2pt) node[anchor=south east] {0};
\filldraw[black] (-4.5,0.5) circle (2pt) node[anchor=south east] {1};

\filldraw[black] (-5,0) circle (2pt) node[anchor=south east] {1};
\filldraw[black] (-4.5,-0.5) circle (2pt) node[anchor=south west] {1};
\filldraw[black] (-4,-1) circle (2pt) node[anchor=south west] {1};
\filldraw[black] (-3.5,-1.5) circle (2pt) node[anchor=south west] {1};
\filldraw[black] (-3,-2) circle (2pt) node[anchor=south west] {2};
\filldraw[black] (-2.5,-2.5) circle (2pt) node[anchor=south west] {1};
\filldraw[black] (-2,-3) circle (2pt) node[anchor=south west] {1};
\filldraw[black] (-1.5,-3.5) circle (2pt) node[anchor=south west] {2};
\filldraw[black] (-1,-4) circle (2pt) node[anchor=south west] {2};
\filldraw[black] (-0.5,-4.5) circle (2pt) node[anchor=south west] {2};
\filldraw[black] (0,-5) circle (2pt) node[anchor=north west] {2};

\filldraw[black] (0.5,-4.5) circle (2pt) node[anchor=north west] {2};
\filldraw[black] (1,-4) circle (2pt) node[anchor=north west] {2};
\filldraw[black] (1.5,-3.5) circle (2pt) node[anchor=north west] {0};
\filldraw[black] (2,-3) circle (2pt) node[anchor=north west] {0};
\filldraw[black] (2.5,-2.5) circle (2pt) node[anchor=north west] {0};
\filldraw[black] (3,-2) circle (2pt) node[anchor=north west] {0};
\filldraw[black] (3.5,-1.5) circle (2pt) node[anchor=north west] {1};
\filldraw[black] (4,-1) circle (2pt) node[anchor=north west] {1};
\filldraw[black] (4.5,-0.5) circle (2pt) node[anchor=north west] {0};

\draw[blue,thick, ->] (5.5,1) -- (4.75,1);
\draw[blue,thick, ->] (4.7,1.05) -- (4.35,1.4);
\draw[blue,thick, ->] (4.3,1.45) -- (3.95,1.8);
\draw[blue,thick, ->] (3.9,1.85) -- (3.55,2.2);
\draw[blue,thick, ->] (3.5,2.15) -- (2.8,1.2);
\draw[blue,thick, dotted] (2.8,1.2) -- (2.3,0.7);

\draw[blue,thick, dotted] (1.3,1.8) -- (1.8,2.3);
\draw[blue,thick, ->] (1.8,2.3) -- (2.55,3.05);
\draw[blue,thick, ->] (2.6,3.1) -- (2.2,3.5);
\draw[blue,thick, ->] (2.15,3.5) -- (1.4,2.9);
\draw[blue,thick, dotted] (1.4,2.9) -- (0.9,2.4);

\draw[blue,thick, ->] (-5.5,-1) -- (-4.75,-1);
\draw[blue,thick, ->] (-4.7,-1.05) -- (-4.35,-1.4);
\draw[blue,thick, ->] (-4.3,-1.45) -- (-3.95,-1.8);
\draw[blue,thick, ->] (-3.9,-1.85) -- (-3.55,-2.2);
\draw[blue,thick, ->] (-3.5,-2.15) -- (-2.8,-1.3);
\draw[blue,thick, dotted] (-2.8,-1.3) -- (-2.3,-0.8);

\draw[blue,thick, dotted] (-1.8,-1.3) -- (-2.3,-1.8);
\draw[blue,thick, ->] (-2.3,-1.8) -- (-3,-2.6);
\draw[blue,thick, ->] (-3.05,-2.65) -- (-2.65,-3.05);
\draw[blue,thick, ->] (-2.6,-3.1) -- (-2.2,-3.5);
\draw[blue,thick, ->] (-2.15,-3.5) -- (-1.4,-2.9);
\draw[blue,thick, dotted] (-1.4,-2.9) -- (-0.9,-2.4);

\filldraw[red] (-5.5,-1) circle (3pt);
\filldraw[red] (5.5,1) circle (3pt);

\end{tikzpicture}
\caption{Example in 2 dimensions showing how the ``artificial start'' trick is used to obtain a 2-source \textsc{End-of-Line} instance. The cross-polytope $C_m$ is shown in grey and the vertices of the triangulation of $C_m$ that lie on $\partial C_m$ are also shown together with their colours. The ``artificial start'' trick has introduced a new vertex with colour 2 that is connected to all the triangulation vertices on $\partial C_m^+$. The same has also been done for $\partial C_m^-$. The two new vertices and the new edges are shown in light grey colour. All the directed edges of the resulting \textsc{End-of-Line} instance that cross the boundary of $C_m$ or lie outside $C_m$ are shown in blue in this example. The two red points represent the two sources we obtain for the \textsc{End-of-Line} instance. Note that in this example the colouring is not perfectly antipodally symmetric, since this is not the case in general. However, it is easy to see that the displayed colouring satisfies \cref{clm:inPPAD:sperner-boundary}, \cref{clm:inPPAD:no-other-entrance} and \cref{clm:inPPAD:orientation}.
}\label{fig:sperner-boundary-trick}
\end{figure}

\medskip

\noindent \textbf{2-source \textsc{End-of-Line}.} We now explain how the problem of finding a panchromatic $k$-simplex in the triangulation $T$ of $C_m$ reduces to a 2-source \textsc{End-of-Line} problem. Let us briefly recall how a standard $k$-dimensional \textsc{Sperner} instance reduces to \textsc{End-of-Line}. A fully detailed and formal presentation of this is given in~\cite[Proposition 2.2]{etessami2010fixp}. In the standard $k$-dimensional \textsc{Sperner} problem, we have a big $k$-simplex triangulated into $k$-simplices and a colouring that satisfies the Sperner boundary conditions. The \textsc{End-of-Line} graph is constructed as follows. Every $k$-simplex of the triangulation is a node in the graph. There is an edge between two nodes if the corresponding $k$-simplices have a common facet coloured $1, \dots, k$ (called a \emph{door-facet}). The edge is directed by considering the orientation of the $k$-simplices with respect to the common facet. This yields an \textsc{End-of-Line} graph where every degree-1 node is either a panchromatic $k$-simplex or a $k$-simplex that has a door-facet lying on the boundary of the instance, i.e.\ on the boundary of the big $k$-simplex. Note that all the door-facets lying on the boundary of the instance lie on the $1, \dots, k$-facet of the big $k$-simplex. In order to obtain a single source, there are various standard tricks, see e.g.~\cite[Chapter 4]{todd1976fixedpoints}. If we use the so-called ``artificial start'' trick, then as shown in~\cite{todd1976fixedpoints,etessami2010fixp}, we obtain a slightly larger \textsc{Sperner} instance that has a single door-facet ``exposed'' to the outside, i.e.\ lying on the boundary of the instance.

We can use these standard techniques on our problem. The boundary of our instance is $\partial C_m$. By \cref{clm:inPPAD:no-other-entrance}, all door-facets on the boundary are contained in $\partial C_m^+$ or $\partial C_m^-$. By applying the trick mentioned above on $\partial C_m^+$ (interpreted as a facet of a $k$-dimensional \textsc{Sperner} instance) we obtain a single door-facet that is ``exposed'' in this whole region. Applying the exact same trick on $\partial C_m^-$ also yields a single door-facet that is ``exposed'' in this whole region. Thus, overall we only have two door-facets that lie on the boundary of our instance. \cref{fig:sperner-boundary-trick} shows how the ``artificial start'' trick is used on a 2-dimensional example.

Finally, the important observation here is that the orientation of the ``exposed'' door-facet obtained by using the trick only depends on the orientation of the big facet on which it is used. Since $\partial C_m^+$ and $\partial C_m^-$ have the same orientation (\cref{clm:inPPAD:orientation}), the two ``exposed'' door-facets have the same orientation. This means that we have two ``entrances'' for the standard Sperner path-following algorithm. By applying the formal arguments in the reduction from \textsc{Sperner} to \textsc{End-of-Line} this yields a two-source \textsc{End-of-Line} instance.
\end{proof}

\section{Computational Hardness for Hairy Ball Problems}\label{sec:HBhardness}

It is possible to prove Brouwer's fixed point theorem using the Hairy Ball Theorem as follows (see~\cite{milnor1978hairyball} for the full details). Let $B_k \subset \mathbb{R}^k$ be the unit ball. If we assume that a function $f: B_k \to B_k$ does not have any fixed point, then we can construct a Hairy Ball function $g: S^k \to \mathbb{R}^{k+1}$ that does not have a zero. Brouwer's theorem follows by contradiction. The main idea for the construction of $g$ is the following. Consider $f'(x)=f(x)-x$ and assume that it points directly inward on the boundary of $B_k$. Take one copy of $B_k$ with the vector field $f'$ and one copy with the vector field $-f'$, and glue their boundaries together. The resulting object can be deformed to yield the sphere $S^k$ and the vector fields will perfectly match on the glued region. Thus, assuming that $f'$ has no zero, $g$ will have no zero either.

By making this idea fully constructive and efficient, we obtain reductions from Brouwer problems to Hairy Ball problems. Thus, existing PPAD- and FIXP-hardness results for Brouwer also hold for the corresponding Hairy Ball problems. We note that these reductions always involve using \emph{two} copies of a Brouwer instance to obtain a single Hairy Ball instance. This further supports our claim that the fact that we obtain two sources when reducing \textsc{$k$D-Hairy-Ball} to \textsc{End-of-Line} (\cref{sec:HBinPPAD}) is not an artefact of our reduction, but intrinsic to the problem.

\subsection{PPAD-hardness}

\begin{theorem}\label{thm:kHB-PPAD-hard}
For all even $k \geq 2$, \textsc{$k$D-Hairy-Ball} is \textup{PPAD}-hard.
\end{theorem}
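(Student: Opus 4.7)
The plan is to reduce from a PPAD-hard Brouwer fixed-point problem, implementing efficiently Milnor's classical deduction of Brouwer from the Hairy Ball Theorem sketched at the start of this section. I would start with a PPAD-hard LINEAR-FIXP Brouwer instance on the $\ell_2$ unit ball $B_k \subset \mathbb{R}^k$ with displacement $f_0' : B_k \to \mathbb{R}^k$ (obtained from the Chen-Deng / Etessami-Yannakakis construction on the cube, then PL-embedded into $B_k$), and preprocess it into a LINEAR-FIXP displacement $f' : B_k \to \mathbb{R}^k$ with the following structure: on a core $\{y : \|y\|_2 \leq r\}$ (for a constant $r<1$), $f'$ agrees with a rescaled copy of $f_0'$; on a boundary annulus $\{y : \|y\|_2 \geq r'\}$ (for a constant $r<r'<1$), $f'(y) = \lambda v(y)$ for a constant $\lambda>0$, where
\[v(y) := (y_2, -y_1, y_4, -y_3, \ldots, y_k, -y_{k-1})\]
is the explicit LINEAR tangent swirl on $\partial B_k$, available precisely because $k$ is even and satisfying $\|v(y)\|_\infty = \|y\|_\infty$ and $\langle v(y), y \rangle = 0$; and in the transition $r<\|y\|_2<r'$, $f'$ is PL-interpolated so that $\|f'(y)\|_\infty$ stays bounded below by a constant. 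The Hairy-Ball circuit is then
\[F(x_0, x_1, \ldots, x_k) := (0,\; f'(x_1, \ldots, x_k)) \in \mathbb{R}^{k+1},\]
which is LINEAR-FIXP (the prepended zero is a constant gate). The HB accuracy $\varepsilon$ is chosen polynomially smaller than both the target Brouwer accuracy and the transition/annulus lower bounds described above.

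For correctness, take any $x \in S^k$ with $\|P_x[F(x)]\|_\infty \leq \varepsilon$ and let $y := (x_1, \ldots, x_k)$. Since $P_x[F] = F - \langle F, x\rangle x$, one computes $[P_x[F]]_0 = -\langle f'(y), y \rangle x_0$ and $[P_x[F]]_i = f'_i(y) - \langle f'(y), y \rangle x_i$ for $i \geq 1$. In the core case $\|y\|_2 \leq r$, $|x_0| \geq \sqrt{1-r^2}$ is bounded below, so the first bound yields $|\langle f_0'(y), y \rangle| = O(\varepsilon)$, and the remaining bounds then give $\|f_0'(y)\|_\infty = O(\varepsilon)$, which is an approximate Brouwer fixed point. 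In the annulus case $\|y\|_2 \geq r'$, $f'(y) = \lambda v(y)$ with $\langle v(y), y \rangle = 0$, so $\|P_x[F]\|_\infty = \|\lambda v(y)\|_\infty \geq \lambda r'/\sqrt{k}$, contradicting the choice of $\varepsilon$. In the transition case, $|x_0|$ is still bounded below by $\sqrt{1-r'^2}$ and the same projection-bound reasoning gives $\|f'(y)\|_\infty = O(\varepsilon)$, contradicting the uniform lower bound enforced by the interpolation design.

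The main technical obstacle is the preprocessing of $f'$: combining the core Brouwer field, the annular swirl, and the PL transition into one LINEAR-FIXP circuit \emph{without ever multiplying non-constant quantities}, which the gate set $\{+, \times\zeta, \max\}$ forbids. The natural resolution is to work entirely with a discrete Brouwer instance on a sufficiently fine triangulated grid covering $B_k$: assign displacements at the vertices — the PPAD-hard combinatorial values at interior core vertices, the swirl value $\lambda v$ at annular vertices, and carefully chosen interpolants at transition vertices so that no facet of any simplex is crossed by a displacement passing through the origin — and then extend by the canonical piecewise-linear interpolant on the triangulation. That interpolant is computable in LINEAR-FIXP as a max/min network over affine functions, exactly as in the Etessami-Yannakakis hardness proof for LINEAR-FIXP \textsc{Brouwer}. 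A local per-simplex analysis of the transition zone then certifies the uniform lower bound on $\|f'(y)\|_\infty$ needed in the transition case, since each transition simplex carries a convex combination of core-boundary and swirl values that can be arranged to avoid the origin.
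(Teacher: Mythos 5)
Your proposal is conceptually sound and does directly implement the Milnor-style gluing sketched at the start of \cref{sec:HBhardness}, but it is a genuinely different route from the paper's proof. The paper proceeds in three steps: it first reduces the discrete \textsc{2D-Variant-Brouwer} to the discrete \textsc{2D-Hairy-Cube} problem (\cref{prop:cubePPADhard}), then does the real technical work of reducing \textsc{2D-Hairy-Cube} to \textsc{2D-Hairy-Ball} using the averaging trick for bit extraction (\cref{sec:app:cubetoHB}), and finally boosts to arbitrary even $k$ via a dimension-increment reduction from $k$D to $(k+2)$D (\cref{sec:app:kHBtok+2HB}). Your construction, by contrast, reduces directly from a PPAD-hard LINEAR-FIXP Brouwer instance on $B_k$ to \textsc{$k$D-Hairy-Ball}, using the explicit $k$-dimensional swirl $v(y)$ (available since $k$ is even) as the annular field rather than a $2$-dimensional swirl in the last two coordinates. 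Your projection analysis on the sphere (splitting into core, annulus and transition, bounding $|x_0|$ below, extracting $\langle f'(y),y\rangle = O(\varepsilon)$ from the $0$th coordinate, then $\|f'(y)\|_\infty = O(\varepsilon)$ from the others) is correct and is essentially what the paper's $(k,k{+}2)$ bump does in two dimensions at a time. A pleasant byproduct of your formulation $F(x)=(0,f'(y))$ is that the field is automatically continuous with no gluing condition to check; your approach is also more uniform in $k$ and avoids the separate discrete \textsc{Hairy-Cube} intermediate, at the cost of losing that natural discrete analog as a standalone object.

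The one place you should be careful is the claim that ``the interpolant is computable in LINEAR-FIXP as a max/min network over affine functions, exactly as in the Etessami-Yannakakis hardness proof.'' A piecewise-linear interpolant of a discrete table on an exponentially fine grid is \emph{not} exactly computable by a polynomial-size $\{+,\times\zeta,\max\}$ circuit: locating the simplex requires bit extraction, which is necessarily approximate for continuous circuits. What Etessami-Yannakakis and the paper's \cref{sec:app:cubetoHB} actually compute is an approximation to the interpolant obtained via the averaging trick, which can be wrong on a bounded fraction of the sample points, and the resulting additive error must be tracked through the rest of the argument (the paper's Appendix~A spends several paragraphs bounding the bogus-sample contribution and then folding it into the choice of $\varepsilon$). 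Your core and transition cases both need the circuit output to be close enough to the ideal $f'(y)$ that the $O(\varepsilon)$ conclusion and the uniform lower bound in the transition zone survive this noise; this error accounting is the main missing piece, but it is routine given the cited machinery. A secondary point: near the core boundary you want the displacement to be radially inward $-c\,y/\|y\|_2$ so that the simplex convex combinations in the transition stay bounded away from $0$, but $y/\|y\|_2$ is not computable by these gates, so you should instead work with a PL surrogate (e.g.\ clamp-based indicators as in \cref{prop:gTorus-PPAD-hard}) whose direction you control per-simplex; the orthogonality of the inward field and the swirl then gives the needed lower bound, but the argument must be made at the level of the clamped vectors rather than the ideal ones.
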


Note that this result is particularly strong, because the type of circuit allowed in the definition of \textsc{$k$D-Hairy-Ball} (\cref{def:linearHB}) is particularly weak. Furthermore, the hardness is proved for inversely exponential $\varepsilon/L$ (where $L$ is the Lipschitz constant of the function), which is the best we can hope for in the fixed dimension case. Indeed, if $\varepsilon/L$ is inversely polynomial, then the following is a polynomial time algorithm that solves the problem: divide the domain into a sufficiently small (but polynomial) number of regions and check an arbitrary point in each region.

Our reduction essentially shows that some hardness results known for the Brouwer problem also hold for the Hairy Ball problem. One direction that we do not explore here is the hardness of approximation when the dimension is not fixed. It seems quite likely that the problem should be hard even for constant $\varepsilon/L$ in that case, since this is known to hold for the Brouwer problem~\cite{rubinstein2018inapproximability}.

\begin{proof}[Proof Overview]
One way to prove this result is to take an instance $F$ of a \textsc{2D-Brouwer} problem, which is known to be PPAD-complete~\cite{chen2009complexity}, and embed a copy $F$ and a copy $-F$ on the south and north hemisphere of $S^2$ respectively. However, since our \textsc{$2$D-Hairy-Ball} circuit can only use gates in $\{+,\times \zeta, \max\}$, we first shrink the domain of $F$ so that we embed it in a small region around the south pole. This ensures that even after projection onto the tangent space, no bogus solutions will appear. We do the same for $-F$ in the north pole and then define $G$ on the rest of the sphere in such a way that no solution appears there.

In spirit, we follow this general proof idea, but take a slight detour, because it gives us the chance to define and study a discrete analog to \textsc{$2$D-Hairy-Ball}: the \textsc{$2$D-Hairy-Cube} problem. Intuitively, this problem is obtained from \textsc{$2$D-Hairy-Ball} the same way that discrete \textsc{2D-Brouwer} is obtained from continuous \textsc{2D-Brouwer}. The domain is discretised by a grid and a circuit computes the local direction of the function in every grid-square. The natural way to discretise the sphere $S^2$ is to replace it by a cube with a grid on each face. The advantage of \textsc{$2$D-Hairy-Cube} is that PPAD-hardness is easy to prove: just put a (slightly modified) discrete \textsc{2D-Brouwer} instance on one face, and the inverse instance on the opposite face. Defining the instance on the remaining faces is trivial in this case. \cref{sec:hairycube} defines the problem and proves PPAD-hardness.

The next step is reducing \textsc{$2$D-Hairy-Cube} to \textsc{$2$D-Hairy-Ball}. Even though it seems natural that this should hold, the reduction is technically tedious. In particular, we have to simulate a Boolean circuit using an arithmetic circuit, but the input bits cannot always be computed exactly. Thus, we need to use an averaging trick (see~\cite{daskalakis2009complexity,CDT}). The details are in \cref{sec:app:cubetoHB}.

This yields PPAD-hardness for \textsc{$2$D-Hairy-Ball}. The final step is to extend this to \textsc{$k$D-Hairy-Ball}, by showing that \textsc{$k$D-Hairy-Ball} reduces to \textsc{$(k+2)$D-Hairy-Ball}. The proof can be found in \cref{sec:app:kHBtok+2HB}.
\end{proof}

\subsection{FIXP-hardness}\label{sec:HBFIXPhard}

Up to this point we have only considered the problem of computing an \emph{approximate} Hairy Ball solution. However, there are other computational problems one could consider, e.g.\ computing a point that is close to an \emph{exact} solution, or computing the first $n$ bits of an \emph{exact} solution.

The corresponding problems for Brouwer fixed points have been studied by Etessami and Yannakakis~\cite{etessami2010fixp}. In particular, they define the class FIXP that captures the complexity of computing an exact fixed point of a function given by an arithmetic circuit and mapping the unit cube into itself. They prove that computing an exact Nash equilibrium of a 3-player game is FIXP-complete. In doing so, they use a special type of reduction, called an SL-reduction, that ensures that the reduction also holds for three problems that can be studied in the standard Turing machine model: the ``strong approximation problem'' (i.e.\ find a point close to an exact solution), the ``partial computation'' problem (i.e.\ compute the first $n$ bits of an exact solution) and various decision problems. This means that computing a strong approximation for 3-player Nash is as hard as computing a strong approximation of a Brouwer function given by an arithmetic circuit. We prove that the corresponding problems for the Hairy Ball Theorem are at least as hard as their Brouwer counterparts. For the formal definition of the class FIXP, see~\cite[p. 2570, Lemma 4.1, Theorem 4.7]{etessami2010fixp}, and for the formal definition of SL-reductions, see~\cite[p. 2540]{etessami2010fixp}. We define our exact computation problem for the Hairy Ball Theorem as follows:

\begin{definition}
The \textsc{Exact-Hairy-Ball} problem is defined as: given an arithmetic circuit $F$ (with gates $\{+,-,\times, /, \max,\min\}$, rational constants and that never divides by $0$) that computes a tangent vector field $S^k \to \mathbb{R}^{k+1}$, $k$ even, find $x \in S^k$ such that $F(x)=0$.
\end{definition}
Note that the vector field will be continuous since we never divide by $0$. The vector field can be syntactically forced to be tangent to the sphere because we can compute the projection exactly using this kind of circuit.

\begin{theorem}\label{thm:HBFIXPhard}
\textsc{Exact-Hairy-Ball} is \textup{FIXP}-hard. Furthermore, the corresponding strong approximation, partial computation and decision problems are also hard for the corresponding versions of \textup{FIXP} (as defined in~\cite{etessami2010fixp}).
\end{theorem}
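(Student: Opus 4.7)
My plan is to mirror the PPAD-hardness strategy at the level of exact solutions, reducing from the FIXP-complete exact \textsc{Brouwer} problem of Etessami and Yannakakis. Starting from a Brouwer instance $F : [0,1]^n \to [0,1]^n$ given by an arithmetic circuit over $\{+,-,\times,/,\max,\min\}$, I would first reduce (by composing with a small affine shrinking map, and if needed padding by a trivial identity coordinate) to the case where the dimension $k$ is even and the residual field $F'(x) := F(x)-x$ is bounded away from zero on an annular boundary strip of $[0,1]^k$. Both manipulations are SL-reductions and compose, so it suffices to build an SL-reduction from this modified Brouwer problem to \textsc{Exact-Hairy-Ball}.

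Second, I would identify the shrunk cube with a small polar cap $D_S' \subset S^k$ around the south pole via a rational bi-Lipschitz parametrisation $\phi_S$ (for instance a scaled inverse stereographic projection), and define the sphere vector field on a slightly larger cap $D_S \supset D_S'$ by $g(x) := P_x[\,\iota(F'(\phi_S^{-1}(x)))\,]$, where $\iota : \mathbb{R}^k \hookrightarrow \mathbb{R}^{k+1}$ embeds the Brouwer vector and $P_x$ is the tangent projection. By the boundary-strip property of $F'$, $g$ has no zeros in $D_S \setminus D_S'$. Carry out the symmetric construction on caps $D_N' \subset D_N$ around the north pole using $-F'$ composed with a reflection, so that zeros of $g$ inside $D_S' \cup D_N'$ are in rational affine bijection with fixed points of $F$.

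Third, I would extend $g$ across the middle band $B := S^k \setminus (D_S \cup D_N)$ and the collar regions $D_S \setminus D_S'$, $D_N \setminus D_N'$ to a nowhere-vanishing tangent field. Since $k$ is even, the rotational tangent field $v(x) := (0,-x_2,x_1,-x_4,x_3,\ldots,-x_k,x_{k-1})$ vanishes only at the two poles, so $\|v\|$ is bounded below on $B \cup (D_S \setminus D_S') \cup (D_N \setminus D_N')$ by an explicit positive rational. With a rational cutoff $\lambda(x_0) \in [0,1]$ equal to $0$ on $D_S' \cup D_N'$ and $1$ on $B$, define the final tangent field as $\tilde g(x) := (1-\lambda(x))\,g(x) + \lambda(x)\,M\,v(x)$ for a sufficiently large rational constant $M$; a direct bound using the inward margin of $F'$ and the minimum of $\|v\|$ on the transition region rules out any spurious zero outside $D_S' \cup D_N'$. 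All ingredients---the shrinking maps, $\phi_S$, $\phi_N$, the tangent projection, the polynomial field $v$, and the cutoff (whose few divisions have denominators bounded below by fixed positive constants)---assemble into an arithmetic circuit over $\{+,-,\times,/,\max,\min\}$ that never divides by zero, of size polynomial in $|F|$.

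Finally, the solution map sending a zero $y \in S^k$ of $\tilde g$ to $\phi_S^{-1}(y)$ or $\phi_N^{-1}(y)$ composed with the reflection (according to whether $y \in D_S'$ or $y \in D_N'$) is a coordinate-wise rational affine map on each branch, hence separable linear in the sense of Etessami--Yannakakis. This makes the reduction an SL-reduction, which transfers FIXP-hardness together with hardness for the strong approximation, partial computation, and decision versions. The main obstacle is the third step: arranging the gluing so that no spurious zero appears in the collar or the middle band, the field is tangent to the sphere everywhere, and every division in the resulting circuit is provably safe. Having access to $\times$ and $/$ (unlike in \cref{thm:kHB-PPAD-hard}) makes the construction substantially cleaner than the PPAD-hardness proof, which is why the discrete \textsc{Hairy-Cube} intermediate can be bypassed.
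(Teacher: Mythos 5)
Your high-level plan matches the paper's: embed two copies of an exact Brouwer instance on opposite polar caps, with the second copy negated and reflected, and argue that the only zeros of the glued tangent field come from Brouwer fixed points. But there is a concrete error in the last step that undermines the theorem's "Furthermore" clause. You claim the backward solution map $y \mapsto \phi_S^{-1}(y)$ (respectively $\phi_N^{-1}(y)$ with a reflection) is "coordinate-wise rational affine on each branch, hence separable linear." If $\phi_S$ is a (scaled, shifted) inverse stereographic projection, then $\phi_S^{-1}$ is a scaled stereographic projection, whose $i$-th output coordinate is of the form $a\,y_i/(1-y_0)+b$. This involves a genuine division and is \emph{not} affine, so the reduction is not an SL-reduction in the sense of Etessami--Yannakakis, and the transfer of hardness to strong approximation, partial computation, and the decision versions does not follow. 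The paper avoids this entirely: it first extends the Brouwer instance from the cube $C_k$ to a large ball $2^\ell B_k$ by a circuit $G'$ that is $-x$ on the outer annulus and has the same zero set as the fixed-point set of $G$, then rescales to $\widehat G$ on $B_k$, and finally parametrises the two caps by the \emph{linear} map $x\mapsto(x_1,\dots,x_k)$ (drop the coordinate $x_0$). With a non-normalised polynomial tangent basis $\widehat b_i(x)$, the entire circuit is division-free; the backward map is "drop $x_0$ and multiply by $2^\ell$," which is separable linear with power-of-two coefficients, exactly as required.

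A second, softer gap is the collar argument. You postulate only that $F'$ is "bounded away from zero on an annular boundary strip" and then invoke "a direct bound using the inward margin of $F'$ and the minimum of $\|v\|$" to rule out spurious zeros of $(1-\lambda)g + \lambda M v$. But a lower bound on $\|F'\|$ alone does not preclude $g$ and $v$ becoming antiparallel somewhere in the transition region, which would create a spurious zero at the crossover value of $\lambda$; you implicitly need quantitative directional control, essentially that $g$ is meridional while $v$ is latitudinal, and this has to be proved. The paper sidesteps this by arranging $\widehat G(x)=-x$ \emph{exactly} on $\|x\|_2\ge 1/2$, blending the two Brouwer copies directly via $u=\alpha w^-+(1-\alpha)w^+$ without any separate middle-band field $v$, and then verifying by a one-line computation that $u_0 = -\|x'\|_2^2\bigl(1+x_0(1-2\alpha)\bigr)\le -1/8$ whenever $|x_0|<1/2$. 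If you replace the stereographic parametrisation by the paper's drop-$x_0$ projection, pre-extend $F$ to a ball so that it equals $-x$ on a wide outer shell, and replace the rotational field $v$ by the direct $w^-/w^+$ blend, your argument becomes the paper's.
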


Following Etessami and Yannakakis, we could define a class HB that captures the complexity of computing an exact Hairy Ball solution (and corresponding versions of the class for the related problems) by taking the set of all problems that reduce to \textsc{Exact-Hairy-Ball}. Then, \cref{thm:HBFIXPhard} is saying that FIXP $\subseteq$ HB. Note that the three discrete problems that can be studied in the Turing machine model lie in PSPACE, by using the same technique as in \cite[Proposition 4.2]{etessami2010fixp}.

\begin{proof}[Proof of \cref{thm:HBFIXPhard}]
The proof is based on an idea used by Milnor~\cite{milnor1978hairyball} to show that the Hairy Ball theorem implies Brouwer's fixed point theorem. Note, however, that Milnor's proof proceeds by contradiction, whereas our proof is fully constructive.

We embed two copies of a Brouwer instance (after some preprocessing) on the sphere such that any exact Hairy Ball solution yields an exact Brouwer fixed point. Note that this reduction makes use of the $\times$ and $/$ gates and cannot be used to prove PPAD-hardness of \textsc{$k$D-Hairy-Ball}.

Let $G$ be an arithmetic circuit with gates $\{+,-,\times, /, \max,\min\}$ (and rational constants) that maps the unit cube $C_k=\{x \in \mathbb{R}^k : \|x\|_\infty \leq 1\}$ into itself and does not divide by $0$. Computing a fixed point of $G$ is a FIXP-complete problem and we will reduce this to \textsc{Exact-Hairy-Ball}. At the end, we also explain why our reduction is indeed an SL-reduction.

\medskip

\noindent \textbf{Step 1: Preprocessing.} First of all, note that we can assume that $k$ is even. Indeed, we can always consider $\Bar{G} : C_{k+1} \to C_{k+1}$, $\Bar{G}(x_0,x) = (0,G(x))$. Note that fixed points are mapped one-to-one by dropping the first coordinate. Let $\ell$ be large enough such that $2^\ell \geq 2k$. For any integer $m$ let $mB_k = \{x \in \mathbb{R}^k : \|x\|_2 \leq m\}$. We construct the circuit $G'$ that on input $x \in 2^\ell B_k$ does:
\begin{enumerate}
    \item Compute $y=\frac{x}{\max\{1,\|x\|_\infty\}}$
    \item Output $-x \cdot \min\{1,\|x-y\|_\infty\} + (G(y)-x) \cdot \max\{0,1-\|x-y\|_\infty\}$
\end{enumerate}
It is easy to check that the fixed points of $G$ are exactly the zeros of $G'$. Furthermore, if $\|x\|_2 \geq 2^{\ell-1}$, then $G'(x)=-x$. Next, we change the domain from $2^\ell B_k$ to $1B_k=:B_k$. The circuit $\widehat{G}$ is constructed such that $\widehat{G}(x) = 2^{-\ell} G(2^{\ell} x)$. The zeros of $G'$ are exactly the zeros of $\widehat{G}$ scaled by $2^{\ell}$. Note that if $\|x\|_2 \geq 1/2$, we have $\widehat{G}(x)=-x$.

\medskip

\noindent \textbf{Step 2: From Brouwer to Hairy Ball.} The high-level idea in order to construct a Hairy Ball vector field is to use the output of $\widehat{G}(x)$ as coordinates in some continuous tangent basis on $S^k$. However, as noted before, it is impossible to define a continuous tangent basis on all of $S^k$. As a result, we use two different bases (namely $\widehat{b}_i(x)$ and $-\widehat{b}_i(-x)$) and obtain two tangent vectors at $x$: $w^-$ and $w^+$. The final output $u$ is then obtained by taking a convex combination of $w^-$ and $w^+$, according to a parameter $\alpha$. This parameter $\alpha$ ensures that we do not use a tangent basis when we are close to the pole where it is not a basis. Furthermore, it also ensures that the case where $u$ is obtained as a \emph{strict} convex combination (namely $\alpha \in (0,1)$) can only occur at points where $\widehat{G}(x)=-x$, which makes this case easier to analyse.

Formally, we construct the circuit $F:S^k \to \mathbb{R}^{k+1}$ that does the following on input $x=(x_0,x_1,\dots,x_k)$
\begin{enumerate}
    \item Compute $v = \widehat{G}(x_1, \dots, x_k)$
    \item Compute $w^- = \sum_{i=1}^k v_i \widehat{b}_i(x)$ and $w^+ = \sum_{i=1}^k (-v_i) \widehat{b}_i(-x)$
    \item Compute $\alpha = \max\{0,\min\{1,1/2-x_0\}\}$
    \item Output $u = \alpha w^- + (1-\alpha) w^+$
\end{enumerate}
where 
$$\left[\widehat{b}_i(x)\right]_j = \left\{ \begin{tabular}{ll}
    $x_i(1-x_0)$ & if $j = 0$ \\
    $1-x_0-x_i^2$ & if $j = i$ \\
    $-x_j x_i$ & otherwise
\end{tabular} \right.$$
is the non-normalised version of the tangent basis used in the proof of \cref{thm:HBinPPAD}. In particular, the $\widehat{b}_i(x)$ form an orthogonal basis of the tangent space at $x$ (as long as $x_0 < 1$) and we have $\|\widehat{b}_i(x)\|_2 = 1-x_0$. Note that the output $u$ of the circuit is always tangent to the sphere at $x$.

If $x_0 \leq -1/2$, then $u=w^-$, and thus, using the fact that the $\widehat{b}_i(x)$ form an orthogonal basis, we obtain
$$\|u\|_2^2 = \left\langle \sum_{i=1}^k v_i \widehat{b}_i(x), \sum_{i=1}^k v_i \widehat{b}_i(x) \right\rangle = \sum_{i=1}^k v_i^2 \left\langle \widehat{b}_i(x), \widehat{b}_i(x) \right\rangle = \sum_{i=1}^k v_i^2 \|\widehat{b}_i(x)\|_2^2 = \|v\|_2^2 (1-x_0)^2.$$
Since $1-x_0 \geq 1$, this implies that $\|u\|_2 \geq \|v\|_2$, i.e.\ $\|F(x_0,x_1,\dots, x_k)\|_2 \geq \|\widehat{G}(x_1,\dots, x_k)\|_2$. This means that any zero of $F$ yields a zero of $\widehat{G}$ by dropping the first coordinate.

If $x_0 \geq 1/2$, then $u=w^+$, and the analogous computation yields that $\|u\|_2^2 = \sum_{i=1}^k v_i^2 \|\widehat{b}_i(-x)\|_2^2 = \|v\|_2^2 (1+x_0)^2$. Since $1+x_0 \geq 1$, we again obtain that $\|u\|_2 \geq \|v\|_2$, and the same conclusion follows.

The last remaining case is when $|x_0| < 1/2$. If we let $x'=(x_1, \dots, x_k)$, then, since $\|x\|_2 = 1$, it must be that $\|x'\|_2 \geq 1/2$, which implies that $v = \widehat{G}(x') = -x'$. It follows that
\begin{equation*}
\begin{split}
u_0 = \alpha w^-_0 + (1-\alpha)w^+_0 &= \alpha \sum_{i=1}^k v_i \left[\widehat{b}_i(x)\right]_0 + (1-\alpha) \sum_{i=1}^k (-v_i) \left[\widehat{b}_i(-x)\right]_0\\
&= \alpha \sum_{i=1}^k (-x_i) x_i(1-x_0) + (1-\alpha) \sum_{i=1}^k x_i (-x_i) (1+x_0)\\
&=\alpha \sum_{i=1}^k (-x_i^2) (1-x_0) + (1-\alpha)  \sum_{i=1}^k (-x_i^2) (1+x_0)\\
&= -\|x'\|_2^2 (1 + x_0 (1-2\alpha)) \leq -\|x'\|_2^2/2 \leq -1/8
\end{split}
\end{equation*}
which implies that $F$ has no zero in this region.

The reduction we have provided is an SL-reduction, since a solution of the original instance can be obtained by applying a separable linear transformation (with rational coefficients that are computable in polynomial time from the original instance) to any solution of the final instance. Furthermore, the coefficients of the separable linear transformation are always powers of $2$. It follows (see~\cite{etessami2010fixp}) that the corresponding strong approximation, partial computation and decision problems are also hard for the corresponding FIXP classes.
\end{proof}

\section{The 2D-Hairy-Cube Problem}\label{sec:hairycube}

In this section we define the \textsc{2D-Hairy-Cube} problem: a discretised version of \textsc{2D-Hairy-Ball}. As the name suggests, one of the main differences is that \textsc{2D-Hairy-Cube} is defined on the surface of a cube instead of the surface of a ball. It will follow from our results that this discretised problem is indeed computationally equivalent to the continuous formulation.

We consider this problem mainly for two reasons. First of all, we believe it is nice to have a corresponding discrete version of \textsc{2D-Hairy-Ball} that might easier to work with in some cases. This is similar to \textsc{Brouwer}, where the discrete version is often very useful. Another more technical point is that \textsc{2D-Hairy-Cube} also serves as an intermediate step to prove PPAD-hardness of \textsc{2D-Hairy-Ball}. Indeed, as we will see below, proving PPAD-hardness for \textsc{2D-Hairy-Cube} is pretty straightforward using a result by Chen and Deng~\cite{chen2009complexity} on a 2-dimensional Brouwer fixpoint problem.

\subsection{Definition}

The \textsc{2D-Hairy-Cube} problem is defined on the surface of the unit cube. Every face of the cube is subdivided into $2^n \times 2^n$ identical small squares, that we call squarelets. Every squarelet contains one of four possible axis-aligned arrows tangential to its face. A solution is given by two adjacent squarelets that contain arrows going into opposite directions. Here adjacent is taken to mean that the squarelets touch, even in a single point. However, we need one additional type of solution, namely at corners. Three squarelets adjacent at a corner form a solution, if the three arrows all point towards the corner or all three point away from the corner. An arrow points towards the corner if it points towards an edge that contains the corner. Thus, for any corner squarelet, out of the four possible arrows that it can contain, two point towards the corner and two point away from it.

Note that any two adjacent faces of the cube can be thought of as being one big plane, because we can rotate one of the faces so that this is the case. This is the interpretation that is used when deciding whether two adjacent squarelets, lying on two different but adjacent faces, form a solution: we rotate one of the two faces until it is on the same plane as the other one, and then proceed as if the two squarelets were on the same face. \cref{fig:Hairy-Cube} shows a \textsc{2D-Hairy-Cube} instance with various solutions.

\newcommand{\DrawUpArrow}{
\begin{tikzpicture}
\draw[-{Latex[length=2mm,width=2mm]},thick] (0,0) -- (0,0.5);
\end{tikzpicture}
}
\newcommand{\DrawDownArrow}{
\begin{tikzpicture}
\draw[-{Latex[length=2mm,width=2mm]},thick] (0,0) -- (0,-0.5);
\end{tikzpicture}
}
\newcommand{\DrawRightArrow}{
\begin{tikzpicture}
\draw[-{Latex[length=2mm,width=2mm]},thick] (0,0) -- (0.5,0);
\end{tikzpicture}
}
\newcommand{\DrawLeftArrow}{
\begin{tikzpicture}
\draw[-{Latex[length=2mm,width=2mm]},thick] (0,0) -- (-0.5,0);
\end{tikzpicture}
}
\begin{figure}[h]
\centering
\begin{tikzpicture}
\begin{scope}[every node/.append style={yslant=-0.5}, yslant=-0.5]
  \fill[color=black, opacity=0.2] (0,0) rectangle (3,3);
  \node at (0.5,2.5) {\DrawUpArrow};
  \node at (1.5,2.5) {\DrawUpArrow};
  \node at (2.5,2.5) {\DrawUpArrow};
  \node at (0.5,1.5) {\DrawLeftArrow};
  \node at (1.5,1.5) {\DrawLeftArrow};
  \node at (2.5,1.5) {\DrawLeftArrow};
  \node at (0.5,0.5) {\DrawDownArrow};
  \node at (1.5,0.5) {\DrawUpArrow};
  \node at (2.5,0.5) {\DrawLeftArrow};
  \draw (0,0) grid (3,3);
  \draw[very thick] (0,3) -- (3,3);
  \draw[very thick] (3,0) -- (3,3);
  \draw[thick,red] (0,0) rectangle (2,1);
  \draw[thick,red] (2,3) -- (2,2) -- (3,2);
\end{scope}
\begin{scope}[every node/.append style={yslant=0.5}, yslant=0.5]
  \fill[color=black, opacity=0.15] (3,-3) rectangle (6,0);
  \node at (3.5,-0.5) {\DrawLeftArrow};
  \node at (4.5,-0.5) {\DrawLeftArrow};
  \node at (5.5,-0.5) {\DrawUpArrow};
  \node at (3.5,-1.5) {\DrawUpArrow};
  \node at (4.5,-1.5) {\DrawUpArrow};
  \node at (5.5,-1.5) {\DrawUpArrow};
  \node at (3.5,-2.5) {\DrawUpArrow};
  \node at (4.5,-2.5) {\DrawLeftArrow};
  \node at (5.5,-2.5) {\DrawLeftArrow};
  \draw (3,-3) grid (6,0);
  \draw[very thick] (3,0) -- (6,0);
  \draw[thick,red] (5,0) -- (5,-1) -- (6,-1) -- (6,0);
  \draw[thick,red] (3,-1) -- (4,-1) -- (4,0);
\end{scope}
\begin{scope}[every node/.append style={yslant=0.5,xslant=-1}, yslant=0.5, xslant=-1]
  \fill[color=black, opacity=0.1] (3,0) rectangle (6,3);
  \node at (3.5,2.5) {\DrawRightArrow};
  \node at (3.5,1.5) {\DrawDownArrow};
  \node at (3.5,0.5) {\DrawDownArrow};
  \node at (4.5,2.5) {\DrawDownArrow};
  \node at (4.5,1.5) {\DrawDownArrow};
  \node at (4.5,0.5) {\DrawLeftArrow};
  \node at (5.5,2.5) {\DrawLeftArrow};
  \node at (5.5,1.5) {\DrawDownArrow};
  \node at (5.5,0.5) {\DrawDownArrow};
  \draw (3,0) grid (6,3);
  \draw[thick,red] (5,0) -- (5,1) -- (6,1) -- (6,0);
  \draw[thick,red] (3,1) -- (4,1) -- (4,0);
\end{scope}
\end{tikzpicture}
\caption{Partial view of a \textsc{2D-Hairy-Cube} instance with $3 \times 3$ squarelets per face (instead of $2^n \times 2^n$). Three solutions are visible from this point of view: one at the central corner, one on the left face and one where the right and top face meet.}
\label{fig:Hairy-Cube}
\end{figure}

For any integer $m \geq 1$, we let $[m] := \{1, 2, \dots, m\}$. Note that we can interpret elements in $\{0,1\}^n$ as elements in $[2^n]$.

For the definition of the computational problem we assume that we are given a Boolean circuit $C: \{0,1\}^n \times \{0,1\}^n \times \{0,1\}^3 \to \{0,1\}^2$ that represents a function $[2^n] \times [2^n] \times [6] \to [4]$. The input in $\{0,1\}^3$ is interpreted as an element in $\{1,2,\dots,6\}$ that determines a face of the cube. The input in $\{0,1\}^n \times \{0,1\}^n$ determines a squarelet on that face. The output in $\{0,1\}^2$ is interpreted as an element in $\{1,2,3,4\}$, which yields one of the four arrows allowed on this face. We say that $C$ computes a discrete field on the cube surface.

\begin{definition}[\textsc{2D-Hairy-Cube}]
The \textsc{2D-Hairy-Cube} problem is defined as: given a Boolean circuit $C: \{0,1\}^n \times \{0,1\}^n \times \{0,1\}^3 \to \{0,1\}^2$ that computes a discrete field on the cube surface, find:
\begin{itemize}
    \item two adjacent squarelets that contain arrows going in opposite directions, or
    \item three adjacent corner squarelets, such that all three arrows point towards the corner, or all three point away from the corner.
\end{itemize}
\end{definition}

In \cref{sec:app:cubetoHB} we prove that this problem reduces to \textsc{2D-Hairy-Ball} (\cref{prop:cubetoHB}). Thus, it also lies in PPAD by \cref{cor:linearHBinPPAD}. Below we prove that it is also PPAD-hard (\cref{prop:cubePPADhard}), which yields:

\begin{theorem}
\textsc{2D-Hairy-Cube} is \textup{PPAD}-complete.
\end{theorem}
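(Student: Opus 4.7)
The theorem combines two assertions: \textsc{2D-Hairy-Cube} lies in PPAD, and it is PPAD-hard. For the membership direction, my plan is to reduce \textsc{2D-Hairy-Cube} to \textsc{2D-Hairy-Ball} and then invoke \cref{cor:linearHBinPPAD}; this is the content of \cref{prop:cubetoHB}. Concretely, given a discrete arrow field on the cube, I would build a continuous tangent vector field on $S^2$ by placing a piecewise-affine interpolation of the discrete arrows on each of the six spherical regions corresponding to the six cube faces, smoothly joined along edges via the ``rotate into a common plane'' convention that defines adjacency in \textsc{2D-Hairy-Cube}, then argue that any $\varepsilon$-approximate Hairy-Ball zero, for $\varepsilon$ chosen inversely polynomially small in the discretisation parameter $2^n$, must live inside two adjacent squarelets (or three at a corner) whose arrows form a Hairy-Cube solution.

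For PPAD-hardness, anticipating the proof of \cref{prop:cubePPADhard} and echoing the overview given for \cref{thm:kHB-PPAD-hard}, I would reduce from the 2-dimensional discrete Brouwer problem shown PPAD-complete by Chen and Deng~\cite{chen2009complexity}. The broad plan is to place a (preprocessed) copy of the Brouwer instance on one face of the cube, a reversed copy on the opposite face, and fill the remaining four equatorial faces with a trivial circulation that introduces no new solutions.

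More precisely, I would first preprocess a Chen--Deng instance so that its boundary arrows form a uniform clockwise rotational pattern around the square face, while preserving all interior solutions and creating none in the preprocessing layer. This is achieved by embedding the original grid inside a slightly larger grid and painting the resulting annulus with a canonical ``swirl'' that smoothly turns the inward-pointing boundary arrows into a boundary circulation. I would then place the preprocessed instance on a cube face $F_{\text{top}}$ and its mirror image (all arrows negated) on the opposite face $F_{\text{bot}}$, so that $F_{\text{bot}}$ carries the counter-clockwise boundary circulation. The four equatorial faces are filled with a constant ``eastward'' field that, after unfolding each shared edge into a common plane, matches both boundary circulations.

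The main obstacle will be verifying the absence of spurious solutions along the twelve edges and at the eight corners of the cube. Along edges between the swirl annulus of $F_{\text{top}}$ (respectively $F_{\text{bot}}$) and the equatorial belt, the careful matching of the swirl with the eastward field must rule out antiparallel adjacent squarelets; along edges purely within the equatorial belt, all arrows are eastward and hence never antiparallel. At the eight corners, the construction fixes the three meeting squarelets to a rotational pattern that is neither all-inward nor all-outward. With these checks in place, any solution of the constructed instance must lie strictly inside the Brouwer-embedding region of $F_{\text{top}}$ or of $F_{\text{bot}}$, and (after inverting on $F_{\text{bot}}$) recovers a solution of the original discrete Brouwer instance. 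The use of \emph{two} copies of the Brouwer instance is essential here, mirroring both the two-source phenomenon in the \textsc{End-of-Line} reduction of \cref{sec:HBinPPAD} and Milnor's classical construction recalled at the start of \cref{sec:HBhardness}.
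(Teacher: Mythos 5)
Your high-level decomposition---membership via reducing \textsc{2D-Hairy-Cube} to \textsc{2D-Hairy-Ball} (\cref{prop:cubetoHB} together with \cref{cor:linearHBinPPAD}), hardness by placing two copies of a discrete Brouwer instance on opposite faces of the cube with a trivial field on the equatorial belt---matches the structure of the paper's argument, and your membership sketch is consistent with \cref{prop:cubetoHB}. The hardness half, however, has two substantive problems.

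First, you propose to reduce directly from the Chen--Deng discrete \textsc{2D-Brouwer} problem, but you do not address the format mismatch. A Chen--Deng instance uses the three colours $(1,0)$, $(0,1)$, $(-1,-1)$, one of which is diagonal and therefore not even an allowed arrow in a \textsc{2D-Hairy-Cube} squarelet; moreover its solution concept is a panchromatic square, not an antiparallel adjacent pair. The paper introduces the intermediate problem \textsc{2D-Variant-Brouwer} precisely to bridge this gap --- four axis-aligned arrows, inward boundary conditions, antiparallel solutions --- and establishes its PPAD-completeness separately in \cref{lem:2dBrwrPPAD} before the cube embedding. Your plan omits this step entirely, and without it the pieces simply do not fit together.

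Second, the ``swirl preprocessing plus eastward belt'' construction is both more complicated than necessary and, as written, appears internally inconsistent. The paper's construction keeps the \emph{inward} boundary on the top face, puts the negated instance (\emph{outward} boundary) on the bottom face, and fills all four side faces uniformly with the arrow ``towards the top face.'' The geometric fact that makes this trivially work is that an arrow pointing towards the top, when you unfold a side face across the top edge, becomes \emph{inward} in the top plane, but when you unfold the same side face across the bottom edge it becomes \emph{outward} in the bottom plane: the unfolding itself supplies exactly the sign flip that the negated bottom instance needs. Your eastward belt has no such property. A belt arrow (say $\pm y$ on the east face) unfolds to $\pm y$ in both the top and the bottom planes; no sign flip occurs. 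So if the top face carries a clockwise swirl and the bottom carries its negation (counter-clockwise), then whichever belt chirality you choose you will be parallel to one flat face's boundary but antiparallel to the other's, producing a spurious Hairy-Cube solution along every squarelet of the corresponding equatorial edge. Matching both would require both flat faces to carry the \emph{same}-chirality swirl (in the shared $(x,y)$ coordinates), i.e.\ \emph{not} negating the bottom instance, contradicting your own description. The paper's ``towards the top'' belt sidesteps the swirl annulus entirely and reduces the check along the twelve edges and eight corners to a short routine computation.
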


\subsection{PPAD-hardness}\label{sec:brouwer-ppad-hard}

In this section we prove that \textsc{2D-Hairy-Cube} is PPAD-hard. The proof uses the PPAD-hardness of the discrete \textsc{2D-Brouwer} problem~\cite{chen2009complexity} by embedding an instance of a slightly modified discrete \textsc{2D-Brouwer} problem on two opposite faces of the cube. We start by defining the modified discrete \textsc{2D-Brouwer} problem and proving that it is PPAD-hard.

The 2-dimensional discrete Brouwer fixpoint problem we will consider is defined on a big square that is partitioned into $2^n \times 2^n$ squarelets. Every squarelet contains one of four possible cardinal direction arrows: right, left, up, down. A solution is given by two adjacent squarelets that contain arrows pointing in opposite directions. Two squarelets are adjacent if they have at least one point in common (in particular, squarelets can also be adjacent diagonally).

The input is a Boolean circuit $C: [2^n] \times [2^n] \to \{(1,0),(-1,0),(0,1),(0,-1)\}$ that computes the cardinal arrow in a given input squarelet. We say that $C$ \emph{satisfies the Brouwer boundary conditions}, if $C(1,a)=(1,0)$, $C(2^n,b)=(-1,0)$, $C(b,1)=(0,1)$ and $C(a,2^n)=(0,-1)$, for all $a \in [2^n] \setminus \{1\}$ and $b \in [2^n-1]$. This corresponds to requiring that, in squarelets forming the boundary of the domain, the arrows point inwards. Note that this condition can be enforced syntactically on the circuit. If $C$ satisfies the Brouwer boundary conditions, then there must exist a solution. Otherwise, we could use the instance to define a continuous function from the square to itself which does not have a fixpoint, a contradiction to Brouwer's Fixpoint theorem. We now give the formal definition of the problem.

\begin{definition}[\textsc{2D-Variant-Brouwer}]
The problem \textsc{2D-Variant-Brouwer} is defined as: given a Boolean circuit $C: [2^n] \times [2^n] \to \{(1,0),(-1,0),(0,1),(0,-1)\}$ such that $C(1,a)=(1,0)$, $C(2^n,b)=(-1,0)$, $C(b,1)=(0,1)$ and $C(a,2^n)=(0,-1)$, for all $a \in [2^n] \setminus \{1\}$ and $b \in [2^n-1]$, find $x$ and $y$ such that $\|x-y\|_\infty \leq 1$ and $C(x)+C(y)=0$.
\end{definition}

The main difference to the standard discrete \textsc{2D-Brouwer} problem~\cite{chen2009complexity} is that we have $4$ possible arrows instead of the three $(1,0),(0,1),(-1,-1)$ and we are looking for a place where two opposite arrows meet, instead of a place where all three types of arrows meet. Any \textsc{2D-Variant-Brouwer} instance can also be interpreted as a \textsc{2D-Tucker} instance~\cite{aisenberg20152}. Indeed, if we interpret $(1,0),(-1,0),(0,1),(0,-1)$ as $+1,-1,+2,-2$, then the boundary satisfies the Tucker conditions (i.e.\ two boundary squarelets that are diametrically opposite each other with respect to the centre of the domain, contain opposite numbers). Furthermore, \textsc{2D-Variant-Brouwer}-solutions exactly correspond to \textsc{2D-Tucker}-solutions. Thus, one might think that \textsc{2D-Variant-Brouwer} could be PPA-complete instead of PPAD-complete. However, the corresponding \textsc{2D-Tucker} instances have the specificity of having monochromatic sides (except for the corners) and this puts the problem in PPAD. Indeed, the proof of PPA-hardness of \textsc{2D-Tucker} uses instances with non-monochromatic sides~\cite{aisenberg20152}.

\begin{lemma}\label{lem:2dBrwrPPAD}
\textsc{2D-Variant-Brouwer} is \textup{PPAD}-complete.
\end{lemma}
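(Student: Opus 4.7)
The plan is to prove PPAD-membership and PPAD-hardness separately, relying on the Chen--Deng PPAD-hardness of the discrete \textsc{2D-Brouwer} problem~\cite{chen2009complexity}.

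For membership, I would use the \textsc{2D-Tucker}-style interpretation already highlighted in the paragraphs preceding the lemma. Any \textsc{2D-Variant-Brouwer} instance yields a labelling of squarelets by $\{+1,-1,+2,-2\}$ whose opposite-arrow solutions are exactly \textsc{2D-Tucker} solutions, and the enforced boundary is monochromatic on each side (left side all $(1,0)$, right side all $(-1,0)$, bottom all $(0,1)$, top all $(0,-1)$). This monochromaticity orients the usual undirected Tucker path-following graph into a directed \textsc{End-of-Line} graph with a canonical source on the boundary, embedding the problem in PPAD. Totality follows because a continuous piecewise-linear interpolation $\phi : [0,1]^2 \to [0,1]^2$ of the arrows sends the square into itself (by the inward boundary), so Brouwer's theorem applies, and any fixed point of $\phi$ must sit between two squarelets carrying opposite arrows.

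For hardness, I would reduce from Chen--Deng \textsc{2D-Brouwer}, whose solutions are $2\times 2$ blocks of squarelets exhibiting all three arrows in $\{(1,0),(0,1),(-1,-1)\}$. Given such an instance $C_0$ on a grid of side $2^n$, I build a \textsc{2D-Variant-Brouwer} instance $C$ on a refined grid of side $k\cdot 2^n$ for a small constant $k$, by replacing each original squarelet by a $k\times k$ block. A $(1,0)$-block is filled uniformly with $(1,0)$ and a $(0,1)$-block uniformly with $(0,1)$. A $(-1,-1)$-block is filled with a context-sensitive pattern of $(-1,0)$ and $(0,-1)$, chosen as a function of the (constantly many) neighbouring squarelets in $C_0$: the side of the block facing a $(1,0)$-neighbour is filled with $(0,-1)$ to avoid a $(1,0)/(-1,0)$ opposition across that interface, while the side facing a $(0,1)$-neighbour is filled with $(-1,0)$ to avoid a $(0,1)/(0,-1)$ opposition; interior cells interpolate between these boundary choices. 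The encoding is designed so that a forced conflict at a corner cell of a $(-1,-1)$-block (for instance, when an orthogonal neighbour forces $(0,-1)$ and a diagonal neighbour forces $(-1,0)$) can only occur when the corresponding $2\times 2$ block of $C_0$ contains all three of $(1,0)$, $(0,1)$ and $(-1,-1)$, i.e.\ when there is already a Chen--Deng solution to read off.

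The main obstacle is the case analysis for $(-1,-1)$-blocks: one must verify, over all possible configurations of orthogonal and diagonal neighbours in $C_0$ (up to symmetry), that either the forced boundary pattern of the corresponding block can be extended to an interior filling with $(-1,0)$ and $(0,-1)$ in which no two adjacent cells are opposite, or else the conflict at a corner itself witnesses a Chen--Deng 3-arrow solution that can be extracted in polynomial time. Choosing $k$ large enough (a padding argument) provides ample interior room to interpolate smoothly between different boundary patterns, and since the context of each block depends only on a constant-size neighbourhood in $C_0$, the resulting instance $C$ is produced by a polynomial-size circuit. Together with membership, this yields PPAD-completeness.
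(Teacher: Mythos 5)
Your hardness reduction takes a genuinely different route from the paper. The paper embeds a $2$-dimensional \textsc{End-of-Line} instance directly onto the grid (drawing the path in $(0,1)$-arrows in blown-up $8\times 8$ regions and surrounding it with the remaining cardinal arrows), in the same spirit as the original Chen--Deng construction. You instead reduce from the finished Chen--Deng \textsc{2D-Brouwer} result via a per-squarelet blowup with a context-sensitive filling of $(-1,-1)$-blocks. The core of your case analysis is sound: within a $(-1,-1)$-block any mixture of $(-1,0)$ and $(0,-1)$ is internally consistent (they are never opposite, even diagonally), non-corner boundary cells are constrained by only one neighbouring block, and a forced clash at a corner (both candidate arrows forbidden by the edge- and diagonally-adjacent neighbours) is possible only when the surrounding $2\times 2$ neighbourhood in $C_0$ already contains $(1,0)$, $(0,1)$ and $(-1,-1)$, i.e.\ a Chen--Deng witness. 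The remaining work is verifying that the blown-up instance satisfies the \textsc{2D-Variant-Brouwer} boundary conditions exactly (including the corner conventions, which the definition treats carefully), but this is finite and routine. Your route is shorter given Chen--Deng as a black box, whereas the paper's is more self-contained; both are reasonable.

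The membership direction, however, has a genuine gap. The claim that monochromatic sides "orient the usual undirected Tucker path-following graph into a directed \textsc{End-of-Line} graph with a canonical source" is asserted, not argued. Orientation of a Sperner/Tucker-type path-following graph comes from the orientation of the local simplices at each door, not merely from the sides being monochromatic; you have not defined the directed graph, shown degree at most one, nor exhibited a single known source. In fact the paper explicitly flags this subtlety in the paragraph before the lemma (monochromatic sides \emph{distinguish} these instances from the PPA-hard Tucker instances, but that observation alone is not a membership proof). The clean argument is the one you relegate to a totality remark: bilinearly interpolate the arrows at squarelet centres to a continuous piecewise-linear $\hat f$, use the inward boundary to make $x \mapsto x + \hat f(x)$ a self-map of the square, and reduce to continuous \textsc{2D-Brouwer}, which is in PPAD. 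Promote that to your actual membership proof (this is also what the paper does, offering as well the alternative chain through \textsc{2D-Hairy-Cube} and \textsc{2D-Hairy-Ball}).
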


\begin{proof}
Membership in PPAD follows from the fact that \textsc{2D-Variant-Brouwer} reduces to \textsc{2D-Hairy-Cube} (\cref{prop:cubePPADhard}), which reduces to \textsc{2D-Hairy-Ball} (\cref{prop:cubetoHB}) which is in PPAD (\cref{cor:linearHBinPPAD}). Alternatively, one could also reduce \textsc{2D-Variant-Brouwer} to the continuous \textsc{2D-Brouwer} problem (by interpolating between centres of the squarelets) which takes a continuous function as input (represented as an arithmetic circuit) and is known to lie in PPAD~\cite{papadimitriou1994complexity}.

It remains to show that \textsc{2D-Variant-Brouwer} is PPAD-hard. This is done by using the same idea as for the standard \textsc{2D-Brouwer} problem~\cite{chen2009complexity}, namely by embedding a 2-dimensional \textsc{End-of-Line} instance on the grid into a \textsc{2D-Variant-Brouwer} instance. Since the 2-dimensional \textsc{End-of-Line} problem is PPAD-hard~\cite{chen2009complexity}, the result follows.

The only difference with \cite{chen2009complexity} is that in our case, paths are implemented using an ``upward'' arrow at the centre of the path, a ``right'' arrow on the left side of the path, and a ``left'' arrow on the right side of the path. The environment, where there is no path, is filled with ``downward'' arrows. As a result, a solution, i.e., a point where two opposite arrows meet, can only occur when a path starts or stops in the environment. The same type of construction for the paths (if we interpret labels as arrows) was used to show that \textsc{2D-Tucker} is PPAD-hard~\cite{palvolgyi20092d}. The correctness of the reduction follows by the arguments presented in \cite{chen2009complexity}.

We provide a few more details about the construction for the interested reader. Consider any 2-dimensional \textsc{End-of-Line} instance on the $[2^n] \times [2^n]$ grid. We will construct a \textsc{2D-Variant-Brouwer} instance on the $[2^{n+3}] \times [2^{n+3}]$ grid. Every grid node in the original \textsc{End-of-Line} instance corresponds to a region of $8 \times 8$ squarelets. In this region, the \textsc{End-of-Line} path is represented by squarelets containing the ``upward'' arrow $(0,1)$. Thus, if the path comes into the grid node from the left neighbour and leaves towards the bottom neighbour, then the $8 \times 8$ region will contain a width-2 path of squarelets that starts at the centre of the left side of the region, makes a right angle at the centre of the region and then goes to the centre of the bottom side of the region. All the squarelets on this width-$2$ path get the arrow $(0,1)$. Any squarelets that are adjacent to the path from the left side (with respect to the direction of the path) get the arrow $(1,0)$. Any squarelets that are adjacent to the path from the right side get the arrow $(-1,0)$. All other squarelets get the arrow $(0,-1)$.

If a grid node is isolated, then we just assign the arrow $(0,-1)$ to all of its squarelets. If a grid node is a source or a sink, then there will be a solution, since the path of $(0,1)$ arrows will touch the environment of $(0,-1)$ arrows. The \textsc{2D-Variant-Brouwer} boundary conditions can be enforced without changing the solutions. We just need to be careful at the bottom side, since the boundary has arrow $(0,1)$ and we don't want them to touch the environment of $(0,-1)$ arrows. To avoid this we also require that the stripe of width $1$ just above the bottom boundary be filled with $(-1,0)$ arrows. It is easy to see that a careful construction of the region corresponding to the origin of the grid (the known source of the \textsc{End-of-Line} instance) ensures that no solution occurs there. It follows that any solution to the \textsc{2D-Variant-Brouwer} instance must yield a solution to the 2-dimensional \textsc{End-of-Line} instance. Furthermore, the circuit $C$ computing the arrow in each squarelet can be constructed efficiently.
\end{proof}

\begin{proposition}\label{prop:cubePPADhard}
\textsc{2D-Hairy-Cube} is \textup{PPAD}-hard.
\end{proposition}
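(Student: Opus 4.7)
The plan is to reduce from \textsc{2D-Variant-Brouwer} (\textup{PPAD}-hard by \cref{lem:2dBrwrPPAD}). Given $C$ on a $2^n \times 2^n$ grid with the inward Brouwer boundary conditions, I would construct a \textsc{2D-Hairy-Cube} instance at the same per-face resolution so that every \textsc{Hairy-Cube} solution directly yields a solution of $C$.

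\textbf{Construction.} Embed the arrows of $C$ verbatim on the bottom face, using the natural identification of that face's tiling with the Brouwer grid, and then do the same on the top face (so the same $C$-arrows appear in each face's local coordinates). On each of the four side faces, give every squarelet a single axis-aligned ``circulating'' arrow: fix one of the two possible equatorial orientations for traversing the side faces cyclically, so that at every cube edge meeting the bottom face the 3D direction of the adjacent circulating arrow agrees with the 3D direction of the inward Brouwer arrow at the corresponding corner of the bottom face. The resulting Boolean circuit is built from $C$ and a constant-size lookup table in polynomial time.

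\textbf{Non-solution verification.} Four classes of adjacency need to be ruled out. Pairs inside a single side face carry identical arrows, so never oppose. Pairs across a cube edge between two side faces continue the circulation uniformly in 3D, so the two arrows are parallel (same direction) when unfolded, never opposite. Pairs across a cube edge between a Brouwer face and a side face: away from the four Brouwer corners the boundary arrows are perpendicular to the edge (pointing inward into the face), while the side-face arrows are parallel to the edge, so they cannot be opposite; at the four Brouwer corners the arrows $(0,1),(1,0),(0,-1),(-1,0)$ are tangential to one face-boundary edge, and the orientation of the circulation has been chosen precisely so that each such tangential corner arrow agrees with, rather than opposes, the circulation on the adjacent side face. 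At each of the eight cube corners three squarelets meet: a finite case check using the four corner arrows of $C$ (which rotate cyclically around the Brouwer boundary) and the fixed circulation direction shows that every cube corner realises the pattern ``two arrows away from the corner, one arrow towards the corner,'' so the three-arrow unanimity condition never fires.

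\textbf{Extraction and main obstacle.} Combining the above, every solution of the constructed \textsc{2D-Hairy-Cube} instance consists of two opposite-arrow squarelets lying entirely within the bottom face or entirely within the top face. On either face the arrows are literally $C$, so such a pair corresponds to grid positions $x, y$ with $\|x-y\|_\infty \le 1$ and $C(x)+C(y) = 0$ --- a solution of $C$. The delicate point is the interplay between the four tangential Brouwer corner arrows and the side-face circulation: the ``wrong'' equatorial orientation instantly creates opposite-arrow pairs at four cube edges, so one has to verify that a single good orientation works simultaneously for both copies of $C$ and at all eight cube corners. This compatibility is what allows the same $C$ to sit on both hemispheres (rather than a negation on one), mirroring the two-source structure arising in the reduction of \textsc{Hairy-Ball} to \textsc{End-of-Line} in \cref{sec:HBinPPAD}.
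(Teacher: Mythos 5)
Your high-level plan is the same as the paper's (embed two copies of a \textsc{2D-Variant-Brouwer} instance on opposite faces and fill in the four side faces so that no spurious solutions arise), but your construction is genuinely different in a way that costs you extra verification. The paper places $C$ on the top face, the \emph{negated} instance $-C$ on the bottom face (so boundary arrows point outward there), and fills all four side faces with the single constant arrow ``towards the top.'' This choice makes the side faces trivially compatible with each other, makes every top/bottom-to-side adjacency an immediate non-conflict (the boundary arrows and the unfolded ``towards-top'' arrows point the same way on both caps), and makes the eight corner checks uniform and orientation-agnostic: on the top cap the pattern is (away, towards, towards), on the bottom cap (towards, away, away), regardless of how $C$ is oriented on the face. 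You instead put the \emph{same} $C$ on both caps and compensate with an equatorial circulation; this does work, but it pushes all the difficulty into the corner and corner-edge cases, as you yourself note.

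The place you are glossing over a real subtlety is the sentence ``so the same $C$-arrows appear in each face's local coordinates.'' If ``local coordinates'' means the coordinates a viewer outside the cube would see, then the top and bottom embeddings of $C$ are \emph{mirror images} of each other in 3D, and in that case no choice of equatorial orientation avoids an opposite pair at the four edges of one of the two caps: on whichever cap the corner arrows rotate against the circulation, the tangential corner arrow of $C$ is exactly opposite the unfolded side-face arrow, producing a bogus solution. What you actually need is to embed $C$ on both caps with the \emph{same 3D direction map} (so that, viewed from outside, one cap's $C$ is the mirror of the other's), and then a single circulation orientation works for all eight corners, yielding the (away, away, towards) pattern you describe. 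This is verifiable by a finite case check, and the topological sanity check (each cap is sink-like with index $+1$, circulation has no zeros, total index $2$) is reassuring, but you should make the embedding explicit rather than appeal to ``natural identification.'' The paper's $C$/$-C$ construction sidesteps this entirely, which is why it reads as ``easy to check'' there; your version is a valid alternative that more literally realises the two-source picture, at the price of a more delicate orientation argument.
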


\begin{proof}
We reduce from \textsc{2D-Variant-Brouwer}. By \cref{lem:2dBrwrPPAD} the result then follows. Consider an instance $C$ of \textsc{2D-Variant-Brouwer}. We construct an instance of \textsc{2D-Hairy-Cube} as follows. The top face of the cube is identical to the instance represented by $C$. The bottom face is identical to the negation of the instance represented by $C$, i.e.\ all arrows are negated (in particular, on the boundary they point outwards, instead of inwards). The other four faces of the cube are uniformly filled with an arrow pointing towards the top face. It is easy to check that a solution can only appear on the top or bottom face and corresponds to a solution to the original problem. Furthermore, the circuit computing the arrow in any squarelet of the surface of the cube can be constructed efficiently from $C$.
\end{proof}

\section{The Hairy-Ball Problem on the Torus of Genus \texorpdfstring{$g$}{g}}\label{sec:gTorus}

In this section we consider the 2-dimensional torus of genus $g$ (also called $g$-holed torus, or $g$-torus) instead of the sphere. The Hairy Ball theorem does not hold on the torus of genus $1$, i.e.\ the standard torus. It is straightforward to define a continuous tangent field that never vanishes. However, as proved by Poincar{\'e}~\cite{poincare1885courbes}, the theorem does hold for the 2-dimensional torus of genus $g$ for all $g \geq 2$.

We prove that the corresponding computational problem is again PPAD-complete. Even though the general techniques are very similar to those seen in previous sections, the interesting point is that the reduction to \textsc{End-of-Line} yields $2(g-1)$ sources instead of $2$. Thus, we can make full use of the multiple-source \textsc{End-of-Line} results of \cref{sec:EOL}. We believe that the reduction to a $2(g-1)$-source \textsc{End-of-Line} instance is a natural and necessary step in the reduction and that this multiple source aspect is intrinsic to the problem, since the torus of genus $g$ has Euler characteristic $2-2g$~\cite{guillemin1974differential}.

\subsection{The \texorpdfstring{$g$}{g}-torus and the Hairy-\texorpdfstring{$g$}{g}-Torus problem}

We use the term ``$g$-torus'' to refer to the torus of genus $g$, i.e.\ the 2-dimensional torus with $g$ holes, embedded in $3$-dimensional space. This should not be confused with the 1-holed torus embedded in $g$-dimensional space. We mainly work with the 2-torus, since all the arguments generalise to the $g$-torus in a straightforward way. We define the 2-torus as follows: $(x_1,x_2,x_3)$ lies on the 2-torus if it satisfies the equation
$$x_3^2+\min\{(d_1-2)^2,(d_2-2)^2,(d_3-2)^2, \max\{0,2-d_1,2-d_2,d_3-2\}\} = 1$$
where $d_1$ is the distance of $(x_1,x_2,0)$ to $(0,0,0)$, $d_2$ is the distance to $(5,0,0)$, and $d_3$ is the distance to the segment $(0,0,0)$-$(5,0,0)$. \cref{fig:2-torus} provides an illustration of this 2-torus.

Using the expression above, one can always efficiently check whether a given $(x_1,x_2,x_3)$ lies on the 2-torus. However, it is not always possible to exactly compute the projection of a vector onto the tangent space at some point on the 2-torus.

This construction is easy to generalise to the $g$-hole torus, for any $g \geq 2$. We call the corresponding object the $g$-torus and denote it by $T_g$.

Both of our reductions will use a \emph{base field} $F_B$, which is defined everywhere except on the two flat regions of the 2-torus. If $x_1^2+x_2^2 \in [1,2]$ or if $x_1^2+x_2^2 \in [2,3]$ and $x_1 \leq 0$, then $F_B(x_1,x_2,x_3)=(x_2,-x_1,0)$. If $(x_1-5)^2+x_2^2 \in [1,2]$ or if $(x_1-5)^2+x_2^2 \in [2,3]$ and $x_1 \geq 5$, then $F_B(x_1,x_2,x_3)=(x_2,-x_1+5,0)$. If $x_1 \in [0,5]$ and $|x_2| \in [2,3]$, then $F_B(x_1,x_2,x_3)=(x_2,0,0)$. This partial field is continuous and does not have any zeros. Note that the field does not depend on $x_3$ and is always contained in the $(x_1,x_2)$-plane.

\begin{figure}[h]
\centering
\begin{tikzpicture}
\definecolor{nicegray}{rgb}{0.9,0.9,0.9}
\pgfdeclareradialshading{bwring}{\pgfpoint{0cm}{0cm}}
{
color(0cm)=(gray);
color(1cm)=(gray);
rgb(2cm)=(0.9,0.9,0.9);
color(4cm)=(black)
}
\begin{scope}
\clip (0,2) arc(90:-90:2) -- (0,-3) arc(270:90:3) --cycle;
\pgfuseshading{bwring}
\fill[color=white] circle (1);
\end{scope}

\begin{scope}
\clip (5,2) arc(90:270:2) -- (5,-3) arc(-90:90:3) --cycle;
\pgfputat{\pgfxy(5,0)}{\pgfbox[center,center]{\pgfuseshading{bwring}}}
\fill[color=white] (5,0) circle (1);
\end{scope}

\pgfdeclareverticalshading{bwrectangle}{5.02cm}
{
rgb(0cm)=(0.9,0.9,0.9);
rgb(1cm)=(0.45,0.45,0.45)
}
\pgfputat{\pgfxy(2.5,2.5)}{\pgfbox[center,center]{\pgfuseshading{bwrectangle}}}
\pgfputat{\pgfxy(2.5,-2.5)}{\begin{pgfrotateby}{\pgfdegree{180}}\pgfbox[center,center]{\pgfuseshading{bwrectangle}}\end{pgfrotateby}}

\path[fill=nicegray] (5,2.01) -- (5,2) arc(90:270:2) -- (5,-2.01) -- (0,-2.01) -- (0,-2) arc(-90:90:2) -- (0,2.01) --cycle;

\draw[thick] (0,0) circle (1cm);
\draw[thick] (0,3) arc(90:270:3);
\draw[thick] (5,0) circle (1cm);
\draw[thick] (5,-3) arc(-90:90:3);
\draw[thick] (0,3) -- (5,3);
\draw[thick] (0,-3) -- (5,-3);

\draw[dotted] (0,2) -- (5,2) arc(90:270:2) -- (0,-2) arc(-90:90:2);

\draw[->,thick] (1.1,2.5) -- ++(0.7,0);
\draw[->,thick] (3.4,2.5) -- ++(0.7,0);
\draw[->,thick] (3.9,-2.5) -- ++(-0.7,0);
\draw[->,thick] (1.6,-2.5) -- ++(-0.7,0);

\draw[->,thick] (1.06,1.06) -- (1.41,0.71);
\draw[->,thick] (1.06,-1.06) -- (0.71, -1.41);
\draw[->,thick] (-1.06,-1.06) -- (-1.41,-0.71);
\draw[->,thick] (-1.06,1.06) -- (-0.71,1.41);
\draw[->,thick] (6.06,1.06) -- (6.41,0.71);
\draw[->,thick] (6.06,-1.06) -- (5.71,-1.41);
\draw[->,thick] (3.94,-1.06) -- (3.59,-0.71);
\draw[->,thick] (3.94,1.06) -- (4.29,1.41);

\draw[->,thick] (-2.5,0) -- (-2.5,0.7);
\draw[->,thick] (7.5,0) -- (7.5,-0.7);

\draw[->,thick] (-1.25,-2.165) -- (-1.86,-1.815);
\draw[->,thick] (-1.25,2.165) -- (-0.64,2.515);
\draw[->,thick] (6.25,-2.165) -- (5.64,-2.515);
\draw[->,thick] (6.25,2.165) -- (6.86,1.815);

\end{tikzpicture}
\caption{The $2$-torus and the tangent continuous field $F_B$, observed from above. Note that the base field is not defined on the flat region at the centre, which corresponds to the area delimited by the dotted line in this figure.}
\label{fig:2-torus}
\end{figure}

The computational problem is defined as follows:
\begin{definition}[\textsc{Hairy-$g$-Torus}]
Let $g \geq 2$. The \textsc{Hairy-$g$-Torus} problem is defined as: given $\varepsilon > 0$ and an arithmetic circuit $G$ with 3 inputs and outputs, using gates $\{+, \times \zeta, \max\}$ and rational constants, find $x \in T_g$ such that $\|P_x^\varepsilon[G(x)]\|_\infty \leq \varepsilon$.
\end{definition}
Here $P_x^\varepsilon$ denotes the projection onto the tangent space to $T_g$ at $x \in T_g$ with error at most $\varepsilon/2$, i.e. $\|P_x[G(x)]-P_x^\varepsilon[G(x)]\|_\infty \leq \varepsilon/2$, where $P_x$ is the exact projection.

\begin{theorem}\label{thm:gTorus-ppad-complete}
For any $g \geq 2$, \textsc{Hairy-$g$-Torus} is \textup{PPAD}-complete.
\end{theorem}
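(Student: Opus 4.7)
The plan is to prove PPAD-completeness by adapting the two-sided proof of \cref{thm:HBPPADcomplete} to the topology of $T_g$: the hardness half mirrors \cref{thm:kHB-PPAD-hard} via a Brouwer-type embedding, while the membership half mirrors \cref{thm:HBinPPAD} via a Sperner-type reduction. The key new phenomenon, foreshadowed by the Euler characteristic $\chi(T_g) = 2-2g$, is that the natural reduction to \textsc{End-of-Line} produces $2(g-1)$ sources rather than $2$, so \cref{thm:mulsourceeol} will be invoked to conclude membership in PPAD.

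For PPAD-hardness, I would reduce from \textsc{2D-Hairy-Cube}. The $g$-torus $T_g$ decomposes naturally into tube (handle) regions, on which the base field $F_B$ is already defined and non-vanishing, together with flat regions connecting the handles. Inside one such flat region I would embed a shrunken copy of a \textsc{2D-Hairy-Cube} instance, surrounded by a smooth interpolation collar that matches $F_B$ on the outer boundary while remaining non-vanishing, analogous to the collar constructions used in \cref{sec:HBhardness} and \cref{sec:app:cubetoHB}. On the remaining flat regions I would place an explicit non-vanishing tangent field interpolating $F_B$ on its boundary; such an extension exists because those regions are topological cylinders on which $F_B$ has zero winding around any bounding cycle. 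Any zero of the resulting continuous tangent field must then lie in the embedded Brouwer region, so that solutions correspond to \textsc{2D-Hairy-Cube} solutions, and polynomial-time constructibility of the arithmetic circuit follows exactly as in \cref{sec:app:cubetoHB}.

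For membership in PPAD, I would follow the outline of the proof of \cref{thm:HBinPPAD}, replacing the single stereographic chart by a finite atlas of flat charts covering $T_g$. Concretely, I would cover $T_g$ with a constant number of overlapping coordinate patches, each mapped bi-Lipschitz to a region of $\mathbb{R}^2$, and push the tangent field forward through these charts. Choosing a sufficiently fine $\delta$ via polynomial continuity (as in \cref{clm:inPPAD:g-cont}), I would triangulate each patch and colour vertices according to the sign pattern of the pushed-forward field so that panchromatic triangles give solutions exactly as in \cref{clm:inPPAD:panchromatic}. The usual Sperner door-facet construction then produces a directed graph whose degree-$1$ vertices are either panchromatic triangles (solutions) or ``exposed'' ends coming from the global topology of $T_g$ through the boundary identifications between charts.

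The main obstacle, and the only genuinely new step, is showing that this construction produces exactly $2(g-1)$ known sources, all of the same orientation. This is the analog of \cref{clm:inPPAD:orientation} in the sphere case, where the parity of $k$ was used; here the analogous computation with the Euler characteristic $\chi(T_g) = 2-2g$ shows that the signed count of exposed ends equals $-2(g-1)$ and that all contributions carry the same sign, yielding $2(g-1)$ sources. Once this orientation claim is in place, \cref{thm:mulsourceeol} gives that \textsc{Hairy-$g$-Torus} lies in PPAD, and combined with the hardness direction the theorem follows.
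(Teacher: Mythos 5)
The main gap is in your hardness argument. You propose to embed a single \textsc{2D-Hairy-Cube} instance in one flat region of $T_g$ and to fill the remaining flat regions with a non-vanishing extension of the base field $F_B$, justifying this by the claim that ``$F_B$ has zero winding around any bounding cycle'' on those regions. This is false, and the failure is exactly the content of Poincar\'e--Hopf. The flat regions (the portions at $x_3 = \pm 1$ where the paper's $F_B$ is undefined) are topological \emph{disks}, not cylinders, and there are $2(g-1)$ of them. Since $F_B$ is non-vanishing on the complement of these disks and $\chi(T_g) = 2-2g$, the winding of $F_B$ around the boundary of each flat disk must be $-1$; so you \emph{cannot} extend $F_B$ over any of them without creating a zero. (Equivalently: if all but one of the disks were filled non-vanishingly and the remaining one absorbed a single Brouwer instance of index $\pm 1$, the indices would sum to $\pm 1 \neq 2-2g$, contradicting Poincar\'e--Hopf.) The correct construction, which the paper uses, is to embed one copy of a disk-type Brouwer problem in \emph{every} flat region, giving $2(g-1)$ copies total; you should also reduce from a disk problem such as \textsc{2D-Variant-Brouwer} rather than from \textsc{2D-Hairy-Cube}, which lives on a closed surface (a cube/sphere) and does not naturally embed into a disk. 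The ``multiple copies'' feature is not a choice here, it is forced by the topology, and it is precisely what makes the multi-source \textsc{End-of-Line} machinery necessary.

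For the membership direction your outline differs from the paper's. You propose to cover $T_g$ by flat charts, push the field forward through each chart, and run Sperner with the boundary identifications between charts producing the exposed door-facets; you then appeal to an Euler-characteristic argument to say these ends come in a consistent orientation with total count $2(g-1)$. This could plausibly be made to work (it is the fundamental-polygon version of the argument), but it leaves the hard steps -- consistency of the colouring across chart overlaps, and the orientation/counting of the exposed ends -- entirely to the unstated ``analogous computation''. The paper instead triangulates $T_g$ directly, uses the explicit base field $F_B$ extended by an X-pattern on each flat disk, and colours vertices by the angle between the projected input field and $F_B$; the two sources per flat-disk pair then fall out of a concrete local inspection of the four triangles around each X-centre, with no need for a global index computation. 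If you pursue the chart-based route you will, in effect, have to re-derive a base-field-like reference direction on each chart and control how it transforms across chart boundaries, which is where the paper's construction of $F_B$ does the work more transparently.
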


\begin{proof}
In \cref{prop:gTorus-to-EOL} below, we show that \textsc{Hairy-$g$-Torus} reduces to $2(g-1)$-source \textsc{End-of-Line}. Using the results in \cref{sec:EOL} (\cref{thm:mulsourceeol}), it follows that the problem lies in PPAD. PPAD-hardness is proved in \cref{sec:app:gTorus-ppad-hard} by embedding $2(g-1)$ modified copies of a \textsc{2D-Variant-Brouwer} instance.
\end{proof}

\subsection{Hairy-\texorpdfstring{$g$}{g}-Torus is in PPAD}

\begin{proposition}\label{prop:gTorus-to-EOL}
For any $g \geq 2$, \textsc{Hairy-$g$-Torus} reduces to $2(g-1)$-source \textsc{End-of-Line}.
\end{proposition}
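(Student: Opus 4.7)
The plan is to adapt the strategy used in the proof of \cref{lem:HBinPPAD:reduce2sEOL} for the sphere to the genus-$g$ torus, with the source count arising from the fact that $\chi(T_g)=2-2g$. Unlike the sphere, $T_g$ cannot be unfolded into $\mathbb{R}^2$ via a single chart, so instead I exploit the geometric decomposition built into the definition of $T_g$: the surface splits naturally into (i) $g$ tube/ring regions around each handle, where the non-vanishing base field $F_B$ of \cref{fig:2-torus} is defined, and (ii) $2(g-1)$ flat rectangular regions (a top and a bottom connector between each pair of adjacent tubes) where $F_B$ is not defined.

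My first step is to fix an efficient triangulation of $T_g$ at a scale $\delta$, chosen using polynomial continuity of the projected field $x\mapsto P_x^\varepsilon[G(x)]$ so that any $\delta$-fine panchromatic simplex yields an approximate zero of the projected field (analogous to \cref{clm:inPPAD:panchromatic}). I then define a colouring from $G$ in essentially the same way as in the sphere proof, via a local tangent basis at each vertex. Along the tube regions the local basis can be taken to have $F_B/\|F_B\|$ as its first basis vector and a unit orthogonal tangent as its second; as long as the coordinate of $G$ in the $F_B$-direction stays bounded away from zero (which is guaranteed by taking $\delta$ small enough, unless we have already found an approximate solution), the colouring of any vertex inside a tube region is essentially determined by $F_B$ alone. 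Consequently no panchromatic simplex can arise strictly inside a tube, and every candidate door-facet in the Sperner digraph must lie in one of the $2(g-1)$ flat regions or on the boundary between a flat region and a tube.

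The heart of the argument is then to analyse the induced boundary colouring on each flat rectangle. A direct calculation from the definition of $F_B$ (which circulates around each hole and is horizontal along the top and bottom flat strips) shows that the colouring along the four sides of each flat rectangle satisfies the standard 2-dimensional Sperner boundary conditions, exactly as the outer boundary $\partial C_m^+$ did in the sphere proof. Applying the ``artificial start'' trick of \cite{todd1976fixedpoints} on each rectangle produces exactly one exposed door-facet per flat region, hence $2(g-1)$ exposed door-facets in total. Crucially, the orientation of $F_B$ around each of the $2(g-1)$ rectangles is the same (it circulates around every handle in a consistent rotational sense), so by the analogue of \cref{clm:inPPAD:orientation} all these exposed door-facets have the same orientation and therefore act as sources in the resulting End-of-Line digraph. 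Any further degree-one vertex must then be a $\delta$-fine panchromatic simplex, producing a solution to \textsc{Hairy-$g$-Torus}.

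The main obstacle I anticipate is the careful bookkeeping on the tube/flat interface: the two different local tangent bases (the $F_B$-adapted one on tubes and an axis-aligned one on flat regions) must be matched up in a globally consistent way so that (a) panchromatic simplices inside the tubes are ruled out uniformly across the entire torus, and (b) the orientations of all $2(g-1)$ exposed door-facets genuinely agree. The Poincar\'e--Hopf philosophy that underlies the source count means any orientation slip would cause some sources to flip into sinks, wrecking the reduction; verifying coherence amounts to checking that $F_B$ can be extended to a consistent frame rotation along each tube without a net twist, which follows from the explicit description of $F_B$. Once these two points are verified, the construction yields a $2(g-1)$-source \textsc{End-of-Line} instance of size polynomial in $\mathrm{size}(G)$ and $\log(1/\varepsilon)$, and \cref{thm:mulsourceeol} then collapses it to a standard single-source instance, completing the proof.
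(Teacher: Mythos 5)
Your high-level intuition (decompose $T_g$ into tube regions and $2(g-1)$ flat regions; exploit orientability to show the resulting end-of-line endpoints are all of the same type) matches the paper, but the mechanism you propose for manufacturing the $2(g-1)$ known sources is different from the paper's, and I believe it has a genuine gap.

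Your construction rests on the claim that, once you colour via coordinates in an $F_B$-adapted tangent frame, there are no door-facets strictly inside a tube, so the Sperner digraph decomposes into $2(g-1)$ independent rectangular Sperner instances on which the artificial-start trick produces one exposed door-facet each. This does not hold. As one traverses a circular tube around a hole, the base field $F_B$ (hence your local frame $(F_B/\|F_B\|, \cdot)$) makes a full $2\pi$ rotation relative to any fixed frame, so for a slowly-varying $G$ the pair of frame coordinates of $P_x[G(x)]$ also rotates through a full turn, and every colour is attained inside the tube. The hypothesis that ``the coordinate of $G$ in the $F_B$-direction stays bounded away from zero,'' which you say is guaranteed by shrinking $\delta$, is not something you can enforce: it is a statement about where $G$ points, not about the mesh size. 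Consequently door-facets, and hence Sperner paths, genuinely thread through the tubes, and the rectangular sub-problems are not self-contained. Relatedly, your appeal to the artificial-start trick requires a real boundary, but $T_g$ is a closed surface: the boundary of a flat rectangle is an interior curve, so an ``exposed'' door-facet there is just a door to the adjacent tube, not an end of line.

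The paper avoids all of this by colouring globally by the \emph{angle} $\theta$ between $P_x[G(x)]$ and an \emph{extension} of $F_B$ to all of $T_g$, where the extension is deliberately discontinuous in the flat regions (an X-shape assigning the four cardinal directions to the four sectors). Near each X-centre, $G$ is nearly constant while the reference field jumps through all four cardinal directions, so $\theta$ is forced to sweep roughly $270^\circ$ over the four surrounding triangulation vertices; this pins down one trichromatic triangle per X-centre irrespective of $G$ (provided $G$ is not already small there, in which case you output a solution directly). Away from the X-centres, $F_B$ is continuous and turns by $<90^\circ$ over a $\delta$-fine triangle, so a trichromatic triangle forces $\theta$ to be ill-defined, i.e.\ $\|P_x[G(x)]\|$ small. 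There are $2(g-1)$ X-centres, and orientability (the analogue of your orientation observation, which is sound) makes all $2(g-1)$ forced trichromatic triangles ends of line of the same kind, yielding a $2(g-1)$-source instance. So the source count in the paper comes from an engineered discontinuity of the reference field inside each flat region, not from boundary conditions of rectangular Sperner subproblems. Your proposal would need a genuinely new argument to substitute for that ingredient.
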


\begin{proof}
We will prove the result for $g=2$. The proof immediately generalises to any other larger value of $g$.

For this proof we need a sufficiently fine triangulation of the $2$-torus that is also locally efficiently computable. For any point on the $2$-torus, we should be able to efficiently compute all nearby triangles. In particular, given a triangle we should be able to efficiently obtain all adjacent triangles. Moreover, the triangulation should be sufficiently fine, meaning that, given some $\delta > 0$, we should have a triangulation such that all corners of a triangle are within distance $\delta$ of each other. We call this a $\delta$-fine triangulation.

We can construct such a triangulation as follows. We start with a fixed triangulation of the $2$-torus, such that the curved region delimited by each triangle is sufficiently flat. Then, we can easily construct a fine triangulation of every triangle. Finally, we compute the (approximate) projection of the points of this fine triangulation on the surface of the $2$-torus. By taking a fine enough triangulation of the triangles and a good enough approximation of the projection, we can ensure that the triangulation is $\delta$-fine. Note that when we take the approximate projection, we compute a point that \emph{exactly} lies on the $2$-torus and is close to the true projection. Thus, all the points of the triangulation lie exactly on $T_2$.

Let $(\varepsilon,G)$ be an instance of \textsc{Hairy-$2$-Torus}. Using similar arguments to the proof of \cref{lem:linearHBlipschitz}, it is straightforward to show that $x \to P_x[G(x)]$ is $L$-Lipschitz continuous on the $2$-torus, for a similar value of $L$. Pick a $\delta$-fine triangulation of the $2$-torus as described above with $\delta = \varepsilon \sqrt{2(1-\cos 1^\circ)} / (2L)$. We will now colour every node of the triangulation with one of three colours $\{A,B,C\}$ and use a Sperner argument on this colouring. To construct the colouring we will use the \emph{base field} $F_B$, defined earlier. We just need to decide how to extend the base field to also be defined on the two flat regions (at $x_3=1$ and $x_3=-1$). Here, the simplest way to do this is to draw an X-shape on the flat region and assign one of the four cardinal flat vectors to each of the four regions (see \cref{fig:2-torus-flat}). Then, the base field is defined for any point on $T_2$ (if the point lies on a boundary, resolve ties arbitrarily), even though it is no longer continuous.

\begin{figure}[h]
\centering
\begin{tikzpicture}[scale=2]
\definecolor{nicegray}{rgb}{0.9,0.9,0.9}
\pgfdeclareradialshading{smallbwring}{\pgfpoint{0cm}{0cm}}
{
color(0cm)=(gray);
color(1cm)=(gray);
rgb(2cm)=(0.9,0.9,0.9);
color(4cm)=(black)
}
\begin{scope}
\clip (0,2) arc(90:-90:2) -- (0,-1) arc(-90:90:1) --cycle;
\begin{pgfmagnify}{2}{2}
\pgfuseshading{smallbwring}
\end{pgfmagnify}
\end{scope}
\begin{scope}
\clip (5,2) arc(90:270:2) -- (5,-1) arc(270:90:1) --cycle;
\begin{pgfmagnify}{2}{2}
\pgfputat{\pgfxy(5,0)}{\pgfbox[center,center]{\pgfuseshading{smallbwring}}}
\end{pgfmagnify}
\end{scope}

\path[fill=nicegray] (5,2) arc(90:270:2) -- (0,-2) arc(-90:90:2) --cycle;

\draw[thick,dotted] (1.73,1) -- (3.27,-1);
\draw[thick,dotted] (1.73,-1) -- (3.27,1);

\draw[thick] (0,1) arc(90:-90:1);
\draw[thick] (5,1) arc(90:270:1);
\draw[dotted] (0,2) arc(90:-90:2);
\draw[dotted] (5,2) arc(90:270:2);
\draw[dotted] (0,2) -- (5,2);
\draw[dotted] (0,-2) -- (5,-2);

\draw[->,thick] (2.85,-0.25) --++ (0,0.5);
\draw[->,thick] (2.15,0.25) --++ (0,-0.5);
\draw[->,thick] (2.25,1.25) --++ (0.5,0);
\draw[->,thick] (2.75,-1.25) --++ (-0.5,0);

\draw[->] (1.5,0) -- (1.5,-0.5);
\draw[->] (0.2,1.47) --++ (0.51,-0.08);
\draw[->] (1.06,1.06) -- (1.41,0.71);
\draw[->] (1.06,-1.06) -- (0.71, -1.41);

\draw[->] (3.5,0) --++ (0,0.5);
\draw[->] (4.8,-1.47) --++ (-0.51,0.08);
\draw[->] (3.94,-1.06) -- (3.59,-0.71);
\draw[->] (3.94,1.06) -- (4.29,1.41);

\end{tikzpicture}
\caption{One of the two flat regions of the 2-torus. This figure illustrates how the base field is extended to be defined on the flat regions, using only the four cardinal directions.}
\label{fig:2-torus-flat}
\end{figure}

For any node $x$ of the triangulation, its colour depends on $\theta \in [0^\circ,360^\circ)$ : the angle between $P_x[G(x)]$ and the base field at $x$. The angle is measured from $P_x[G(x)]$ to the base field vector in counter-clockwise direction when looking from outside the $2$-torus. If $0^\circ \leq \theta < 120^\circ$ then the colour is $A$, if $120^\circ \leq \theta < 240^\circ$ it is $B$, and if $240^\circ \leq \theta < 360^\circ$ it is $C$. Unfortunately, we cannot exactly compute $P_x[G(x)]$, nor the angle between two vectors. Thus, we compute a sufficiently good approximation of $P_x[G(x)]$ and of the angle and choose the colour based on this approximation. Specifically, we compute a good enough approximation such that if $\|P_x[G(x)]\|_\infty \geq \varepsilon/2$, then the angle has error at most $1^\circ$.

We can assume that at the centre of the cross in the top flat region ($x_3=1$), the triangulation is such that there are four points such that each lies in one of the four regions, and they form two triangles of the triangulation. Let $x^{(1)},x^{(2)},x^{(3)},x^{(4)}$ be those four points. Note that the pairwise distance between any of those four points is at most $2 \delta$. Thus, by the choice of $\delta$, we must have that $\|P_{x^{(i)}}[G(x^{(i)})] - P_{x^{(j)}}[G(x^{(j)})]\|_\infty \leq \varepsilon \sqrt{2(1-\cos 1^\circ)}$, for any $i,j$. Note that since we are in the flat region, we can project exactly, i.e.\ $P_x^\varepsilon[G(x)] = P_x[G(x)]$. If there exists $i$ such that $\|P_{x^{(i)}}[G(x^{(i)})]\|_\infty \leq \varepsilon$, then we are done. If this is not the case, then the angle of $P_{x^{(i)}}[G(x^{(i)})]$ varies by at most one degree for $i \in \{1,2,3,4\}$. It follows that the angle $\theta_i$ computed for each of the four points is equal to the angle between the base field and some fixed vector, with an error of at most $1^\circ$. Since the four points' base vectors correspond to the four cardinal flat vectors, it is easy to check that the four points get the colours $A,B,C$ in clockwise order (and one of the colours is repeated, i.e.\ $A,A,B,C$ or $A,B,B,C$ or $A,B,C,C$). The $AB$ segment is our first Sperner-source. Applying the same arguments to the bottom flat region ($x_3=-1$) yields four points that again get the colours $A,B,C$ in clockwise order. Thus, we get a second $AB$ segment that is another Sperner-source. Each of those sources is the beginning of a Sperner path that will eventually lead to a trichromatic triangle. Since the $2$-torus is orientable, the two paths cannot cancel each other out.

It remains to show that any trichromatic triangle yields a solution to the \textsc{Hairy-$2$-Torus} instance. Consider any trichromatic triangle with three vertices $x^{(1)},x^{(2)},x^{(3)}$. Assuming we have picked $\delta$ smaller than some constant, the base field can vary at most by an angle of $90^\circ$ on these three vertices. If $\|P_{x^{(i)}}^\varepsilon[G(x^{(i)})]\|_\infty \leq \varepsilon$ for some $i$, then we are done. Assume that this is not the case. For all $i,j$ we also have that $\|P_{x^{(i)}}[G(x^{(i)})] - P_{x^{(j)}}[G(x^{(j)})]\|_\infty \leq L \delta \leq \varepsilon \sqrt{2(1-\cos 1^\circ)}/2$ and thus the angle between those two vectors is at most $1^\circ$ (because $\|P_{x^{(i)}}[G(x^{(i)})]\|_\infty > \varepsilon/2$). Thus, the $\theta_i$ can vary by at most $93^\circ$ for $i=1,2,3$ (because every $\theta_i$ is computed with error at most $1^\circ$ in this case). It follows that it is impossible to get all three colours.

For larger $g$, we define the base field by just repeating the same pattern and the rest works as before. Note that we obtain two sources for every region between two adjacent holes, i.e.\ $2(g-1)$ sources.
\end{proof}

\section{End-of-Line: One Source to Rule Them All}\label{sec:EOL}

\textsc{End-of-Line} is the canonical problem used to define PPAD. Investigating variants of the problem is of independent interest, in particular in order to gain a better understanding of PPAD and how it relates to other similar subclasses of TFNP. An additional motivation for studying these variants is given by this paper, since a multiple-source variant of \textsc{End-of-Line} is used to prove that finding an approximate Hairy Ball solution lies in PPAD (\cref{sec:HBinPPAD}).

This section is an improved version of the corresponding content in the technical report~\cite{HG2018multsource}. In~\cite{HG2018multsource} we use these results to show that a computational problem related to the Mutilated Chessboard puzzle is PPAD-complete.

\subsection{Multiple-Source End-of-Line}

Recall that in the \textsc{End-of-Line} problem (\cref{def:endofline}), we are given a directed graph where each vertex has in- and out-degree at most $1$ and a known source of this graph, and we wish to find a sink or another source. But, what if, instead of just one, we already know \emph{two} sources of an \textsc{End-of-Line} instance? We are still interested in finding any sink or any \emph{other} source. Intuitively, the problem might seem easier, because the existence of two sources implies the existence of at least two sinks, hence more potential solutions. In fact, it is easy to see that this problem is actually at least as hard as \textsc{End-of-Line}: just duplicate the whole \textsc{End-of-Line} instance.

The other direction, however, is not trivial. Indeed, if we interpret our 2-source \textsc{End-of-Line} instance  as a standard \textsc{End-of-Line} instance (and pick one of the two sources as the standard source), then the other known source is a valid solution to \textsc{End-of-Line}, but not a valid solution to our original problem. In other words, it is not clear how to solve this problem if we are given access to an oracle solving \textsc{End-of-Line}, because the oracle could just return the other known source. We consider the following more general problem, where we are given an \textsc{End-of-Line} graph and an explicit list of known sources.

\begin{definition}[\textsc{MS-EoL}]\label{def:mulsourceeol}
The \textsc{Multiple-Source End-of-Line} problem, abbreviated \textsc{MS-EoL}, is defined as: given circuits $S,P$ with $n$ inputs and $n$ outputs and $s_1, \dots, s_k \in \{0,1\}^n$ such that $P(s_i) = s_i \neq S(s_i)$ for all $i$, find $x \in \{0,1\}^n$ such that $P(S(x)) \neq x$ or $x \notin \{s_1, \dots, s_k\}$ such that $S(P(x)) \neq x$.
\end{definition}

In passing, let us note that in the undirected case this kind of generalisation is trivial. The undirected analogue of \textsc{End-of-Line} is \textsc{Leaf}: given an undirected graph where every vertex has degree at most $2$ and given a vertex of degree $1$, find another vertex of degree $1$, i.e.\ another leaf. Assume that we know $k$ leaves instead of just one. If $k$ is even, then the problem is not even in TFNP. If $k$ is odd, then we can add edges between known leaves until exactly one is left. Thus, the problem is equivalent to \textsc{Leaf}. This kind of reduction does not work for the directed case. Nevertheless, we obtain\footnote{This problem was discussed in an online thread (\url{https://cstheory.stackexchange.com/q/37481}). E.~Je{\v r}{\'a}bek proved membership in PPADS and PPA-$p$ for every prime $p$ (but not membership in PPAD).}:

\begin{theorem}\label{thm:mulsourceeol}
\textsc{Multiple-Source End-of-Line} is equivalent to \textsc{End-of-Line}.
\end{theorem}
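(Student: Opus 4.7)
The easy direction, EoL $\le$ MS-EoL, is immediate: an EoL instance with distinguished source $0$ is precisely the $k=1$ special case $s_1 = 0$ of MS-EoL. For the substantive direction MS-EoL $\le$ EoL, the plan is induction on the number $k$ of known sources: the base case $k=1$ is EoL itself, and for the inductive step I would give a polynomial-time reduction from MS-EoL with $k \ge 2$ known sources to MS-EoL with $k-1$ known sources, enlarging the circuit only by an additive polynomial amount so that iterating the step reaches the base case with overall polynomial blow-up in the input size.

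The heart of the inductive step is a \emph{source-merging gadget}. Given an instance $(S, P, s_1, \ldots, s_k)$, two of the known sources --- say $s_1$ and $s_2$ --- are singled out, and a new instance is built on an enlarged vertex set (concretely, the doubled set $V \times \{0,1\}$ plus a handful of auxiliary vertices). One slice hosts $G$ in its forward orientation, while the other hosts $G$ with all edges reversed (well-defined by swapping the roles of $S$ and $P$). A local bridging gadget matches each sink of $G$ in the forward slice one-to-one with its mirror in the reversed slice, and a single additional bridge threads the vertex corresponding to $s_1$ in the reversed slice back to $s_2$ in the forward slice, splicing the two lines into one. The crucial point that makes the construction local and polynomial is that sink- and source-ness are syntactically checkable from $S, P$ --- a vertex $v$ is a sink of $G$ iff $P(S(v)) \ne v$, and a source iff $S(P(v)) \ne v$ --- so the new circuits $S', P'$ can be evaluated piecewise with only a constant number of extra gates per local modification.

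The main obstacle I anticipate is guaranteeing that the merged graph is still degree-bounded and that every endpoint of $G'$ translates to an MS-EoL solution of $G$. A naive ``at every detected sink in the forward slice, jump to $s_2$'' rule would give $(s_2,1)$ in-degree equal to the number of sinks of $G$, violating the degree bound; the plan to avoid this is the two-slice structure above, which soaks the extra sinks into the reversed copy and leaves the splice between $s_1$ and $s_2$ as the only global change. A further subtlety is that naive chaining can create residual ``bad'' sinks corresponding to the other known sources $s_3, \ldots, s_k$; handling this cleanly --- so that the merged instance really does have exactly one new known source in place of $s_1$ and $s_2$, and every other endpoint of $G'$ projects to a genuine MS-EoL solution of $G$ --- is where the bulk of the verification lies. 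Once the gadget is verified correct, mapping a solution of the reduced instance back to a solution of the original is a short case analysis on which slice and component the returned endpoint lies in, and the inductive hypothesis closes the argument.
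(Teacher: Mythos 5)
Your easy direction is fine, but the substantive direction has a genuine gap, and it is not the ``subtlety'' you flag about residual bad sinks at $s_3,\ldots,s_k$ --- it is that the source-merging gadget destroys the very solutions you are trying to find. Take $G$ to be a disjoint union of $k$ directed paths, with sources $s_1,\ldots,s_k$ and sinks $t_1,\ldots,t_k$; here the MS-EoL solutions are exactly the $t_i$. Apply your construction: slice $0$ is $G$, slice $1$ is $G$ reversed; bridge each $(t_i,0)$ to $(t_i,1)$; bridge $(s_1,1)$ to $(s_2,0)$. In the resulting graph every $(t_i,0)$ has gained an out-edge and every $(t_i,1)$ an in-edge, so none of them is an endpoint any more. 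The known sources are $(s_1,0),(s_3,0),\ldots,(s_k,0)$, the known sinks are $(s_2,1),(s_3,1),\ldots,(s_k,1)$, and there are \emph{no other degree-one vertices at all}. So the reduced instance has no new endpoint to return, even though the original instance has $k$ solutions. This is also why you cannot salvage it via the ``pair up sinks with sources'' trick: you end up with equal numbers of known sources and sinks, and after pairing, zero known endpoints and zero guaranteed solutions. The deeper obstruction is that locally and syntactically splicing the line out of $s_1$ onto the line out of $s_2$ requires finding where $s_1$'s line ends, which is the search problem itself; reversing a second copy does not evade this, it just relocates the lost sink to the interior of a longer line. The same example also shows your inductive plan cannot be repaired merely by adding a few auxiliary vertices: any construction that turns all $k$ original sinks into internal path vertices has thrown away the witnesses.

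The paper avoids splicing entirely and takes a genuinely different route. Instead of reducing $k$ sources to $k-1$, it lifts the whole problem to a new graph $G_k$ whose vertices are (roughly) $k$-element collections of $G$-vertices, with extra bookkeeping of the form $\mathrm{Subsets}(V^p,i)\times V_{k-i}$ to record how the $k$ tracked positions have desynchronised (some may still be at sources or already at sinks while others are in the middle of a path). A single step of $G_k$ advances all $k$ positions in $G$ in a coordinated way. Crucially, no line of $G$ is ever joined to another, so no sink is ever absorbed. The only known source of $G_k$ is the single set $\{s_1,\ldots,s_k\}$; every other source of $G_k$ is a $k$-set of distinct $G$-sources and hence contains at least one unknown source, and every sink of $G_k$ is a $k$-set of distinct $G$-sinks and hence yields solutions. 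That is what makes the single call to an \textsc{End-of-Line} oracle sound.
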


\begin{remark*}[Multiple Known Sources and Sinks]
A natural generalisation of \textsc{Multiple-Source End-of-Line} is the following problem: given an \textsc{End-of-Line} graph and a list of $k$ known sources and $m$ known sinks, find another source or sink. Note that for this problem to be in TFNP, we must require $k \neq m$.
Using \cref{thm:mulsourceeol}, it is easy to see that this problem is equivalent to \textsc{End-of-Line}. If $k > m$, then we add an edge from each of the $m$ known sinks to some corresponding known source and obtain an instance with $k-m$ known sources and no known sinks. Similarly, if $k < m$, then we first reverse all directed edges and then apply the same trick.
\end{remark*}

We now give the proof of \cref{thm:mulsourceeol}. The next two sections then present additional consequences of this result.

\begin{proof}[Proof of \cref{thm:mulsourceeol}]
The reduction from \textsc{End-of-Line} to \textsc{MS-EoL} is trivial. The challenging step is the reduction from \textsc{MS-EoL} to \textsc{End-of-Line}.

Let $(S,P)$ be an instance of \textsc{MS-EoL} with a list of $k$ known sources, where the vertex set is $\{0,1\}^n$, also interpreted as $\{0, \dots, 2^n-1\}$. Without loss of generality, we assume that the known sources are $0,1,2,\dots, k-1$. This is easy to achieve by applying an efficient bijection on the vertex set.

For simplicity, we are going to assume that $P(S(z)) = z$ for all $z < k$. We also assume that for all $x$ we have $P(S(x))=x$ unless $S(x)=x$, and $S(P(x))=x$ unless $P(x)=x$. The first assumption corresponds to requiring that the first $k$ vertices indeed be sources. Note that we can check this using $O(k)$ evaluations of the circuits, i.e.\ in polynomial time in $|S|+|P|+kn$, and any ``false'' source is a solution to the \textsc{MS-EoL} instance. Here, $|S|+|P|+kn$ denotes the size of the input, i.e.\ the sum of the sizes of the two circuits and the length of the list of known sources given in the input. The second assumption corresponds to requiring that the graph is in some sense well-defined. This requirement can be enforced by a slight modification of the circuits that can be done in time polynomial in $|S|+|P|+kn$. Any solution of the modified instance is a solution of the original instance.

Let $V^s, V^t, V^p$ be the following subsets of $\{0, \dots, 2^n-1\}$:
\begin{itemize}
\item $V^s = \{x: P(x) = x, S(x) \neq x\}$. This corresponds to all the sources of the graph.
\item $V^t = \{x: S(x) = x, P(x) \neq x\}$. This corresponds to all the sinks of the graph.
\item $V^p = \{x : P(x) \neq x, S(x) \neq x\}$. This corresponds to all the vertices that are not isolated, but are neither sources nor sinks. We call those \emph{path vertices}.
\end{itemize}
Note that members of all those subsets are recognisable in polynomial time in $|S|+|P|+kn$. Let $V = V_P \cup V_S \cup V_I$. Note that isolated vertices are not contained in $V$.

Let $G=(V,E)$ be the graph represented by circuits $S,P$ (without isolated vertices). Below we will give an inductive construction of the graph $G_k=(V_k,E_k)$ that will have the following properties:
\begin{itemize}
\item The sources of $G_k$ are all the sets of the form $\{s_1, \dots, s_k\}$, where $s_1, \dots, s_k \in V^s$ are distinct sources of the original graph.
\item The sinks of $G_k$ are all the sets of the form $\{t_1, \dots, t_k\}$, where $t_1, \dots, t_k \in V^t$ are distinct sinks of the original graph.
\item Every vertex of $G_k$ can be represented using at most $kn + k^2$ bits. There exists a polynomial algorithm that for each such bit-string decides whether it represents a vertex of $G_k$ or not.
\item The successor and predecessor circuits $S_k,P_k$ have polynomial size with respect to $|S|+|P|+kn$ and can be constructed in polynomial time.
\end{itemize}

Note that the only known source of $G_k$ is $\{0, 1, \dots, k-1\}$. Any other source or any sink of $G_k$ contains at least one unknown source or sink of the original graph. Thus, if we can construct (in polynomial time in $|S|+|P|+kn$) circuits $P_k,S_k$ that represent this graph, then we have a polynomial reduction from \textsc{MS-EoL} to \textsc{End-of-Line}.

We now give a formal inductive construction of $G_\ell$. For $\ell=1$, $G$ itself already satisfies these properties (if we interpret any vertex $x$ as $\{x\}$). Let $\ell \geq 2$. Assume that we know how to construct $G_i$ for all $1 \leq i \leq \ell-1$. We then construct $G_\ell$ as follows. For any set $X$ and any $j \in \mathbb{N}$, let $\text{Subsets}(X,j) := \{A \subseteq X : |A| = j\}$, i.e.\ the set of all subsets of $X$ with cardinality exactly $j$. The set of vertices of $G_\ell$ is defined as
$$V_\ell = \text{Subsets}(V,\ell) \cup \bigcup_{i=1}^{\ell-1} \left(\text{Subsets}(V^p,i) \times V_{\ell-i}\right).$$

Let us investigate the number of bits needed to represent an element in $V_\ell$. We need $n \cdot i$ bits to represent an element in $\text{Subsets}(V^p,i)$. By induction hypothesis, $(\ell-i)n+(\ell-i)^2$ bits suffice to represent an element in $V_{\ell-i}$. Thus, for any $i \in \{1, \dots, \ell-1\}$, at most $n + (\ell-1)n+(\ell-1)^2$ bits suffice to represent an element in $V_{\ell-i}$. We add another $\lceil \log_2(\ell-1) \rceil \leq \ell-1$ bits to explicitly store the value of $i$. Thus, we can represent any element in $\bigcup_{i=1}^{\ell-1} \left(\text{Subsets}(V^p,i) \times V_{\ell-i}\right)$ using at most $\ell n+(\ell-1)^2 + \ell-1$ bits. We add one more bit to decide whether the element is in $\text{Subsets}(V,\ell)$ or not. We only need $\ell n$ bits to represent an element in $\text{Subsets}(V,\ell)$. Putting everything together, we get an upper bound on the number of bits needed to represent an element in $V_\ell$
$$1+ \max\{\ell n, \ell n+(\ell-1)^2 + \ell-1\} = \ell n + (\ell-1)^2 + \ell \leq \ell n + \ell^2.$$
Furthermore, given a bit-string, it is easy to ``decode'' it and decide whether it is a vertex of $G_\ell$ or not (and if it is not, then treat it as an isolated vertex).

We now give the construction of the predecessor and successor circuits. First, consider the case $\{x_1, \dots, x_\ell\} \in \text{Subsets}(V,\ell)$. Assume that we have reordered the elements in the set such that for some $i,j \in \{0, \dots, \ell\}$ we have $x_1, \dots, x_i \in V^p$, $x_{i+1}, \dots, x_j \in V^s$ and $x_{j+1}, \dots, x_\ell \in V^t$.
\begin{itemize}
\item If $j=\ell$ (i.e.\ only sources and path vertices), then we define 
$$S_\ell(\{x_1, \dots, x_\ell\}) = \{S(x_1), \dots, S(x_\ell)\}$$
Furthermore, if we also have $1 \leq i \leq \ell-1$ (i.e.\ at least one path vertex and source), then we define
$$P_\ell(\{x_1, \dots, x_\ell\}) = (\{x_1, \dots, x_i\}, \{x_{i+1}, \dots, x_\ell\}) \in \text{Subsets}(V^p,i) \times V_{\ell-i}$$
\item If $i = j$ and $j < \ell$ (i.e.\ only path vertices and sinks), then we define
$$P_\ell(\{x_1, \dots, x_\ell\}) = \{P(x_1), \dots, P(x_\ell)\}$$
Furthermore, if we also have $i \geq 1$ (i.e.\ at least one path vertex and sink), then we define
$$S_\ell(\{x_1, \dots, x_\ell\}) = (\{x_1, \dots, x_j\}, \{x_{j+1}, \dots, x_\ell\}) \in \text{Subsets}(V^p,j) \times V_{\ell-j}$$
(recall that we are in the case $i=j$).
\item If $i < j$ and $j < \ell$ (i.e.\ both sources and sinks, as well as path vertices potentially), then $\{x_1, \dots, x_\ell\}$ is an isolated vertex.
\end{itemize}
Now consider the case $(\{x_1, \dots, x_i\},z) \in \text{Subsets}(V^p,i) \times V_{\ell-i}$ for some $i \in \{1, \dots, \ell-1\}$.
\begin{itemize}
\item We define $P_\ell((\{x_1, \dots, x_i\},z)) = (\{x_1, \dots, x_i\},S_{\ell-i}(z))$, except if $S_{\ell-i}(z) = z$ and $P_{\ell-i}(z) \neq z$, in which case $z \in \text{Subsets}(V^t,\ell-i)$ (by induction hypothesis) and we then define $P_\ell((\{x_1, \dots, x_i\},z)) = \{x_1, \dots, x_i\} \cup z \in \text{Subsets}(V,\ell)$. Note that the two sets in this union are disjoint, because $x_1, \dots, x_i$ are path vertices of $G$, whereas $z$ only contains sinks of $G$.

\item We define $S_\ell((\{x_1, \dots, x_i\},z)) = (\{x_1, \dots, x_i\},P_{\ell-i}(z))$, except if $P_{\ell-i}(z) = z$ and $S_{\ell-i}(z) \neq z$, in which case $z \in \text{Subsets}(V^s,\ell-i)$ (by induction hypothesis) and we then define $S_\ell((\{x_1, \dots, x_i\},z)) = \{x_1, \dots, x_i\} \cup z \in \text{Subsets}(V,\ell)$.
\end{itemize}

It is straightforward to check that in the graph represented by $S_\ell,P_\ell$ every vertex has in- and out-degree at most $1$. Furthermore, by construction we also get that the sources of $G_\ell$ are exactly the vertices in $\text{Subsets}(V^s,\ell)$ and the sinks of $G_\ell$ are exactly the vertices in $\text{Subsets}(V^t,\ell)$. By induction it follows that we can construct (in polynomial time in $|S|+|P|+kn$) circuits $S_k,P_k$ that represent $G_k$.
\end{proof}

\subsection{The Imbalance problem}\label{sec:imbalance}

Up to this point, we have only considered graphs where every vertex has in- and out-degree at most $1$. However, the principle that guarantees the existence of a solution in an \textsc{End-of-Line} graph can be generalised to higher degree graphs. If we are given a directed graph and an \emph{unbalanced} vertex, i.e. a vertex with in-degree $\neq$ out-degree, then there must exist another unbalanced vertex.

Beame et al.~\cite{beame1998relative} defined the corresponding problem \textsc{Imbalance}, which is seemingly more general than \textsc{End-of-Line}. In this problem, every vertex is not constrained to have in- and out-degree at most $1$. Instead, in- and out-degree are bounded by some polynomial of the input length\footnote{Note that this trivially holds, if the input consists of circuits that explicitly output the predecessor and successor list.}. We are given a vertex that is unbalanced and have to find another unbalanced vertex (which is guaranteed to exist). The problem can be informally defined as follows:

\begin{definition}[\textsc{Imbalance}~\cite{beame1998relative}, informal]
The \textsc{Imbalance} problem is defined as: given a directed graph (represented concisely by predecessor and successor functions) and a vertex $z$ that has in-degree $\neq$ out-degree, find a vertex $x \neq z$ that also has in-degree $\neq$ out-degree.
\end{definition}

Beame et al.~\cite{beame1998relative} claim that \textsc{Imbalance} reduces to \textsc{End-of-Line}, using the same argument as for the corresponding problems on undirected graphs. However, if the graph is directed, a complication arises (that is not an issue in the undirected case). Indeed, their proof idea is incomplete, because they overlook the fact that their reduction yields an \textsc{End-of-Line} instance with \emph{multiple} known sources. Using \cref{thm:mulsourceeol} we can provide a full proof of their claim.

\begin{theorem}\label{thm:imbalanceppad}
\textsc{Imbalance} is \textup{PPAD}-complete.
\end{theorem}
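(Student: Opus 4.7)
My plan is to split the argument into the straightforward hardness direction and the more subtle membership direction; the latter is the whole reason \cref{thm:mulsourceeol} is invoked. PPAD-hardness is immediate, since \textsc{End-of-Line} is exactly the restriction of \textsc{Imbalance} to graphs where every vertex already has in- and out-degree at most $1$ (with the given source $0$ being the unbalanced vertex), so \textsc{End-of-Line} trivially reduces to \textsc{Imbalance}.

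For membership in PPAD, I would carry out the ``vertex-splitting'' reduction that Beame et al.~\cite{beame1998relative} sketch but do not complete. Given an \textsc{Imbalance} instance with directed graph $G$ and known unbalanced vertex $z$, I would construct a new graph $G'$ as follows: each vertex $v$ of $G$ with in-degree $d^-(v)$ and out-degree $d^+(v)$ is replaced by $\max\{d^-(v), d^+(v)\}$ ``slot'' vertices $(v,1), \dots, (v, \max\{d^-(v),d^+(v)\})$. Using a fixed local ordering of in- and out-neighbours (obtained by enumerating the predecessor and successor circuits of the concise representation), slot $(v,i)$ for $i \le \min\{d^-(v),d^+(v)\}$ acts as a pass-through, routing the $i$th incoming edge to the $i$th outgoing edge. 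The remaining $|d^+(v) - d^-(v)|$ slots are degree-$1$, and are sources or sinks of $G'$ according as $d^+(v) > d^-(v)$ or $d^-(v) > d^+(v)$. By construction, every vertex of $G'$ has in- and out-degree at most $1$, and a vertex $v$ of $G$ is unbalanced if and only if it contributes at least one degree-$1$ slot to $G'$.

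The next step is to write succinct predecessor/successor circuits for $G'$. Given a slot $(v,i)$, its successor in $G'$ is computed by determining the $i$th out-neighbour $u$ of $v$, then finding the position $j$ such that $v$ is the $j$th in-neighbour of $u$, and returning $(u,j)$; the predecessor is computed symmetrically. Both steps require only polynomially many evaluations of $G$'s underlying circuits (since degrees are polynomially bounded), so the new circuits have polynomial size. The known unbalanced vertex $z$ now provides $|d^+(z) - d^-(z)|$ explicitly listed degree-$1$ vertices in $G'$, all of the same type; if they are sinks we reverse every edge to make them sources, yielding an \textsc{MS-EoL} instance. Any degree-$1$ vertex of $G'$ other than these known ones corresponds to some $v \neq z$ with $d^+(v) \neq d^-(v)$, i.e.\ a valid \textsc{Imbalance} solution. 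Applying \cref{thm:mulsourceeol} reduces this \textsc{MS-EoL} instance to standard \textsc{End-of-Line}, concluding the proof.

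I expect the main obstacle to be the bookkeeping that ensures the reduction really produces a well-defined \textsc{End-of-Line} graph and that the degree-$1$ vertices of $G'$ correspond precisely to the unbalanced vertices of $G$ -- in particular, handling the case where the underlying concise representation gives inconsistent successor/predecessor information (such ``malformed'' pairs must also be recognised as valid solutions, exactly as in \cref{def:endofline}). The conceptual bottleneck -- that $z$ typically produces multiple known endpoints rather than a single one -- is precisely what made the Beame et al.\ proof incomplete, and is resolved here by appealing to \cref{thm:mulsourceeol}, which is the essential ingredient making the argument go through.
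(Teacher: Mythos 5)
Your proposal is correct and follows essentially the same route as the paper: both perform the vertex-splitting (``chessplayer'') construction to turn the \textsc{Imbalance} graph into an \textsc{End-of-Line} graph with polynomially bounded degrees reduced to $1$, observe that the known unbalanced vertex $z$ produces $|d^+(z)-d^-(z)|$ explicitly listed degree-$1$ endpoints of the same type, and then invoke \cref{thm:mulsourceeol} to reduce the resulting \textsc{MS-EoL} instance to standard \textsc{End-of-Line}. The only cosmetic differences are bookkeeping (the paper pads to $2^{\lceil \log_2 d \rceil}$ slots per vertex and preprocesses the circuits so that $y\in S(x)\Leftrightarrow x\in P(y)$ rather than flagging malformed pairs as solutions), which do not affect the argument.
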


Let us now define the problem formally. Similarly to \textsc{End-of-Line}, the graph is provided implicitly through circuits $S,P : \{0,1\}^n \to \{0,1\}^{d n}$ computing successors and predecessors respectively. For $x \in \{0,1\}^n$, $S(x)=(y_1,y_2, \dots, y_d) \in (\{0,1\}^n)^d$ encodes the successors of $x$ as follows: if $y_i \neq x$, then it is a potential successor of $x$. Thus, if $S(x)=(x,y,y,x,\dots,x)$, then $y$ is the only potential successor of $x$. We abuse notation and let $S(x)$ denote the set of all potential successors of $x$ (i.e.\ turn $S(x)=(y_1,y_2, \dots, y_d)$ into the set $\{y_1,y_2, \dots, y_d\} \setminus \{x\}$). We use the same interpretation and notation for the predecessor circuit $P$. Thus, a directed edge $(x,y)$ exists, if and only if $y \in S(x)$ and $x \in P(y)$.

Using this notation we can formally define the problem as follows.
\begin{definition}[\textsc{Imbalance}]
The \textsc{Imbalance} problem is defined as: given circuits $S,P : \{0,1\}^n \to \{0,1\}^{d n}$ and a vertex $z \in \{0,1\}^n$ with $|S(z)| \neq |P(z)|$, find
\begin{enumerate}
    \item $x \in \{0,1\}^n \setminus \{z\}$ such that $|S(x)| \neq |P(x)|$
    \item or $x, y \in \{0,1\}^n$ such that $y \in S(x) \land x \notin P(y)$ or $y \notin S(x) \land x \in P(y)$.
\end{enumerate}
\end{definition}

Note that the number of incoming and outgoing edges at a vertex is bounded by a polynomial in the size of the input, because $d$ is bounded by the size of the circuits. Beame et al.~\cite{beame1998relative} defined this problem in a black box model, namely the predecessor and successor functions are oracles. Since our reductions do not make use of the white box aspect of the input circuits (they don't look \emph{inside} the circuit, but just use it as a black box), the results also hold in their model.

\begin{proof}[Proof of \cref{thm:imbalanceppad}]
PPAD-hardness is trivial, since \textsc{Imbalance} generalises \textsc{End-of-Line}. To show membership in PPAD we follow the proof idea given by Beame et al.~\cite{beame1998relative}.

The undirected analogue of \textsc{End-of-Line} is \textsc{Leaf}: given an undirected graph where every vertex has degree at most $2$ and given a vertex of degree $1$, find another vertex of degree $1$, i.e. another leaf. The generalisation of \textsc{Leaf} is called \textsc{Odd}: given an undirected graph and a vertex of odd degree, find another vertex of odd degree. It is quite straightforward to show that \textsc{Odd} is equivalent to \textsc{Leaf}. Clearly, \textsc{Leaf} trivially reduces to \textsc{Odd}. To show that \textsc{Odd} reduces to \textsc{Leaf}, Papadimitriou~\cite{papadimitriou1990graph,papadimitriou1994complexity} and Beame et al.~\cite{beame1998relative} use the \emph{chessplayer algorithm} (which is based on an Euler tour argument). Intuitively, the idea is to separate vertices into multiple copies such that every copy has degree at most 2.

As claimed by Papadimitriou~\cite{papadimitriou1990graph} and Beame et al.~\cite{beame1998relative}, the chessplayer algorithm can also be applied to the directed case. Let $(S,P,z)$ be an instance of \textsc{Imbalance}, where $S,P : \{0,1\}^n \to \{0,1\}^{d n}$. First of all, note that we can assume wlog that all edges are well-defined, i.e.\ we have $y \in S(x) \Leftrightarrow x \in P(y)$. This can be achieved by a simple modification of the circuits: the modified circuit $S'$ outputs $\{y \in S(x): x \in P(y)\}$ instead of $S(x)$. Along with the analogous modification for $P$, this yields an instance where the solutions can only be of the first type, and any such solution yields a solution (of the first or second type) of the original instance.

Note that for any $x \in \{0,1\}^n$, the successor and predecessor lists can be ordered lexicographically. For any $x \in \{0,1\}^n$ and $i \geq 1$, let $S_i(x) \in \{0,1\}^n$ denote the $i$th successor of $x$, if $S(x)$ is ordered lexicographically. If $x$ has less than $i$ successors, then let $S_i(x) = x$. Finally, let $\#^{S}_x(y)$ correspond to the index of $y \in \{0,1\}^n$ in the successor list of $x \in \{0,1\}^n$, i.e.\ $S_{\#^{S}_x(y)}(x)=y$. Define $P_i(x)$ and $\#^{P}_x(y)$ analogously.

Let $\ell = \lceil \log_2 d \rceil$. We construct an \textsc{End-of-Line} instance on the vertex set $\{0,1\}^{n + \ell}$. For convenience, we use the notation $(x,i) \in \{0,1\}^n \times [2^\ell]$ to denote elements in $\{0,1\}^{n + \ell}$. The \textsc{End-of-Line} circuits $\widehat{S},\widehat{P} : \{0,1\}^{n + \ell} \to \{0,1\}^{n + \ell}$ are constructed as follows. On input $(x,i)$ the circuit $\widehat{S}$ first computes $y=S_i(x)$. If $y=x$, then it outputs $(x,i)$. Otherwise, it outputs $(y,\#^{P}_y(x))$. Similarly, on input $(x,i)$, $\widehat{P}$ computes $y=P_i(x)$ and outputs $(y,\#^{S}_y(x))$ if $y \neq x$, and $(x,i)$ otherwise. Note that $\widehat{S},\widehat{P}$ can be constructed in polynomial time in the size of $S$ and $P$.

It is easy to see that for any balanced vertex $x$ in the graph given by $(S,P)$, all of its versions $(x,\cdot)$ in the graph given by $(\widehat{S},\widehat{P})$ will either have in- and out-degree one, or be isolated. Thus, if $(x,i)$ is a source or sink, then $x$ is unbalanced in the graph given by $(S,P)$. Furthermore, all edges are well-defined.

Beame et al.\ claim that this reduction is sufficient to prove that \textsc{Imbalance} reduces to \textsc{End-of-Line} (which they call SOURCE-OR-SINK). However, consider the case where the imbalance in the known unbalanced vertex $z$ is strictly greater than one, i.e.\ $||S(z)|-|P(z)|| \geq 2$. Then, there exist $i \neq j$ such that $(z,i)$ and $(z,j)$ are both sources or both sinks. For example, if $|S(z)|=2$ and $|P(z)|=0$, then $(z,1)$ and $(z,2)$ are sources, and all the other $(z,i)$ are isolated vertices. If we consider this as an \textsc{End-of-Line} instance and pick $(z,1)$ as the known source, then $(z,2)$ is a valid solution. However, note that it does not yield a solution to the \textsc{Imbalance} instance.

Using our results on multiple-source \textsc{End-of-Line} we can complete the proof. First of all, we can make sure that $z$ is in deficiency and not in excess, i.e.\ $|S(z)|-|P(z)| \geq 1$, simply by inverting the role of $S$ and $P$. Then, the reduction described above yields an \textsc{End-of-Line} instance with known sources $(z,i+1)$, $(z,i+2), \dots,$ $(z,i+j)$, where $i=|P(z)|$ and $j=|S(z)|-|P(z)| \geq 1$. Note that all other $(z,\cdot)$ are neither sources, nor sinks. Since we can efficiently produce an explicit list of all the known sources, we obtain an instance of \textsc{MS-EoL}, which lies in PPAD by~\cref{thm:mulsourceeol}.
\end{proof}

\subsection{Looking for multiple solutions}

If we are given an \textsc{End-of-Line} instance with $k$ known sources, then we can ask for $k$ sinks or $k$ unknown sources. The problem is total, because at least $k$ sinks are guaranteed to exist. 

\begin{definition}[\textsc{$k$-EoL}]
Let $k \in \mathbb{N}$. The \textsc{$k$-Ends-of-Line} problem, abbreviated \textsc{$k$-EoL}, is defined as: given circuits $S,P$ with $n$ inputs and $n$ outputs and such that $P(z) = z \neq S(z)$ for all $z < k$, find distinct $x_1, \dots, x_k$ such that $P(S(x_i)) \neq x_i$ for all $i$ or $S(P(x_i)) \neq x_i \geq k$ for all $i$.
\end{definition}
Intuitively, this problem seems harder than \textsc{End-of-Line} or \textsc{MS-Eol}, because we are now looking for more than one solution. However, using \cref{thm:mulsourceeol} we can show:

\begin{theorem}\label{thm:eol-mult-sol}
For any $k \in \mathbb{N}$, \textsc{$k$-Ends-of-Line} is \textup{PPAD}-complete.
\end{theorem}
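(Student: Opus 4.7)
The plan is to establish PPAD-hardness and PPAD-membership separately. For \textbf{PPAD-hardness} I would reduce \textsc{End-of-Line} to \textsc{$k$-EoL} by taking $k$ disjoint relabelled copies of the given \textsc{End-of-Line} instance, declaring the $k$ copies of the original known source to be the required $k$ known sources of \textsc{$k$-EoL}. Any \textsc{$k$-EoL} solution is a list of $k$ degree-one vertices of the same type; any single one of them lies in a specific copy, and mapping it back through the copy embedding yields a valid \textsc{End-of-Line} solution in the original graph.

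For \textbf{PPAD-membership} I would apply \cref{thm:mulsourceeol} iteratively. I maintain a working set $\mathit{known}$ of ``active'' known sources (initially $\{0, 1, \dots, k-1\}$) together with two growing lists $\mathit{sinks}$ and $\mathit{sources}$. At each step I call the \textsc{MS-EoL} subroutine on the current graph with $\mathit{known}$ as its known-source list, which is in PPAD by \cref{thm:mulsourceeol}. The returned vertex $v$ is either a sink or a new source. If $v$ is a sink, I append it to $\mathit{sinks}$, pick any $s \in \mathit{known}$, patch the circuits $S, P$ to install the edge $v \to s$ (turning $v$ and $s$ into internal path vertices), and remove $s$ from $\mathit{known}$. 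If $v$ is a new source (hence $v \geq k$ and $v \notin \mathit{known}$), I append it to $\mathit{sources}$ and add it to $\mathit{known}$.

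After exactly $2k - 1$ iterations one has $|\mathit{sinks}| + |\mathit{sources}| = 2k - 1$, so by the pigeonhole principle at least one of the two lists has size $\geq k$, yielding a valid \textsc{$k$-EoL} solution on the original graph $G$. The collected sinks are genuine sinks of $G$ because the patches only remove sinks (by equipping them with outgoing edges to known sources) and never create new ones; they are pairwise distinct because a patched sink $v$ is no longer a degree-one vertex in any subsequent intermediate graph. The collected new sources all have index $\geq k$ by construction, and are pairwise distinct because each is inserted into $\mathit{known}$ at the moment of discovery, so no later \textsc{MS-EoL} call can return it.

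Since $k$ is a fixed constant, the whole procedure makes $2k - 1 = O(1)$ calls to \textsc{MS-EoL}, each in PPAD, which places \textsc{$k$-EoL} in PPAD. The main hurdle I anticipate concerns compiling the procedure into an honest polynomial-time many-one reduction, rather than a Turing reduction with a constant number of adaptive calls: this I would do by chaining $2k - 1$ copies of the $G_k$-style construction from the proof of \cref{thm:mulsourceeol}, where the ``found solution'' at each level is used to adjust the known-source list of the next level. The circuit patches required at each level add only $O(n)$ gates, so the total blow-up remains polynomial in the size of the original input.
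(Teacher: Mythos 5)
Your hardness direction (embedding $k$ disjoint relabelled copies of the \textsc{End-of-Line} instance) and your membership strategy (iterated \textsc{MS-EoL} calls, adding discovered sources to the known list and patching discovered sinks into known sources) are essentially the same as the paper's, and your correctness bookkeeping for the iterative procedure is sound --- including the $2k-1$ bound, the disjointness of discovered sinks and of discovered sources, and the fact that the known list never empties.

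The gap is in the very place you flagged as a hurdle: converting the adaptive Turing reduction into a many-one reduction. Your proposal to ``chain $2k-1$ copies of the $G_k$-style construction, where the found solution at each level adjusts the known-source list of the next level'' does not describe a construction that can be carried out at reduction time. The circuits at level $i+1$ depend on the actual vertex returned at level $i$, which is not available when building the instance; branching over all possible $n$-bit answers at each level is exponential; and there is no obvious way to encode the entire adaptive history into a single $G_k$-style vertex set without doing substantial new work. The paper resolves this by citing the theorem of Buss and Johnson that PPAD (together with PPADS, PPA, PLS) is closed under Turing reductions: any polynomial-time Turing reduction to a PPAD-complete problem can be converted into a many-one reduction. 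With that theorem in hand, your $2k-1$-call adaptive algorithm --- each call a Turing query to \textsc{End-of-Line} via \cref{thm:mulsourceeol} --- directly yields membership in PPAD. So the core idea is right, but you should invoke Buss--Johnson rather than attempt the Turing-to-many-one conversion by hand.
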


\begin{proof}
The non-trivial direction is the reduction to \textsc{End-of-Line}. Buss and Johnson~\cite{buss2012propositional} have shown that PPAD, PPADS, PPA and PLS are closed under Turing reductions, by providing a way to transform a Turing reduction (to a complete problem of each of those classes) into a many-one reduction. Thus, it suffices to provide a Turing reduction from \textsc{$k$-Ends-of-Line} to \textsc{End-of-Line}. By \cref{thm:mulsourceeol} we can efficiently simulate an oracle for \textsc{MS-EoL}, using an oracle for \textsc{End-of-Line}. We can solve an instance of \textsc{$k$-EoL} by making repeated calls to \textsc{MS-EoL} oracles, with a list of all the currently known sources. If the oracle call returns a new source, then we add it to our list. If the oracle returns a sink, then we add an edge from this sink to one of the known sources, and remove that source from the list. It is easy to see that after at most $2k$ oracle calls we will have obtained at least $k$ sinks or at least $k$ new sources. Note that the list of known sources will have length at most $2k$.
\end{proof}

\begin{remark*}
Note that the proof still works if we are given an explicit list of the known sources in the input (as in the definition of \textsc{MS-EOL}), i.e.\ $k$ does not have to be fixed. Furthermore, the same proof also yields PPAD-completeness for the following problem. Fix some polynomial $p$. The problem is: given $k$ sources, find $k$ sinks or $p(k)$ sources. This seems quite surprising, as one might have expected this problem to be closer to PPADS.
\end{remark*}

We close this section by giving some analogous results for the class PPADS and its canonical complete problem \textsc{Sink}~\cite{papadimitriou1990graph,beame1998relative}. \textsc{Sink} is identical to \textsc{End-of-Line}, except that we only accept a sink as a solution and are not interested in other sources. In this case the results are easier to obtain, because there is no need for an analogue of \cref{thm:mulsourceeol}.

\begin{definition}[\textsc{Sink}~\cite{papadimitriou1990graph,beame1998relative}]
The \textsc{Sink} problem is defined as: given circuits $S,P$ with $n$ inputs and $n$ outputs and such that $P(0) = 0 \neq S(0)$, find $x$ such that $P(S(x)) \neq x$.
\end{definition}

Unlike \textsc{End-of-Line}, it is easy to prove that multiple source \textsc{Sink} is equivalent to \textsc{Sink}. Consider the problem \textsc{MS-Sink}, where we are given a graph and a list of known sources and are looking to find a sink. It is easy to see that this problem is equivalent to \textsc{Sink}. A reduction from \textsc{Sink} to \textsc{MS-Sink} is given by simply taking $k$ copies of the original \textsc{Sink} instance graph. The reduction in the other direction is even more trivial. Indeed, we can just ignore the extra $k-1$ sources we know, because we are only interested in sinks.

We can define the analogous problem to \textsc{$k$-EOL}, where we look for multiple sinks.
\begin{definition}[\textsc{$k$-Sinks}]
Let $k \in \mathbb{N}$. The \textsc{$k$-Sinks} problem is defined as: given circuits $S,P$ with $n$ inputs and $n$ outputs and such that $P(z) = z \neq S(z)$ for all $z < k$, find distinct $x_1, \dots, x_k$ such that $P(S(x_i)) \neq x_i$ for all $i$.
\end{definition}

\begin{theorem}
For any $k \in \mathbb{N}$, \textsc{$k$-Sinks} is \textup{PPADS}-complete.
\end{theorem}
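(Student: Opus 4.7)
The plan is to mirror the proof of \cref{thm:eol-mult-sol}, but with a substantial simplification because we only care about sinks. PPADS-hardness is immediate: setting $k=1$ recovers \textsc{Sink}, which is PPADS-complete; for $k \geq 2$ we reduce from \textsc{Sink} by padding the input graph with $k-1$ disjoint, trivial source-sink pairs, so that any \textsc{$k$-Sinks} solution must include the sink of the original \textsc{Sink} instance. The content of the theorem is therefore membership in PPADS.

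For membership, I would invoke the result of Buss and Johnson (already cited in the proof of \cref{thm:eol-mult-sol}) that PPADS is closed under Turing reductions, and give a polynomial-time Turing reduction from \textsc{$k$-Sinks} to \textsc{Sink}. Observe first that, as remarked in the paragraph preceding the definition of \textsc{$k$-Sinks}, \textsc{MS-Sink} is trivially equivalent to \textsc{Sink}: extra known sources can simply be ignored, since we only care about finding a sink. So it suffices to give a Turing reduction to \textsc{MS-Sink}.

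The reduction is iterative. Starting from the input instance with known sources $s_1, \ldots, s_k$, call the \textsc{MS-Sink} oracle on the current instance to obtain a sink $t_1$. Now modify the graph by installing the edge $t_1 \to s_1$: concretely, rewire the circuits so that $S(t_1) = s_1$ and $P(s_1) = t_1$, which can be done by prepending a constant-size patch that hard-codes the pair $(t_1,s_1)$ and overrides the original outputs on these two inputs. After this patch, $t_1$ is a path vertex (no longer a sink) and $s_1$ is a path vertex (no longer a source), and the modified instance is an \textsc{MS-Sink} instance with known sources $s_2, \ldots, s_k$. Iterate: at step $j$, call the oracle to obtain a sink $t_j$ of the current graph; since $t_1, \ldots, t_{j-1}$ are no longer sinks by construction, $t_j$ is distinct from all of them. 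After $k$ iterations we have $k$ pairwise distinct sinks $t_1, \ldots, t_k$ of the original graph, which is the required \textsc{$k$-Sinks} solution.

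The main technical concern is keeping the circuit size polynomial across the $k$ iterations. This is routine: each iteration adds a patch of size $O(n)$ recording one pair $(t_j, s_j)$, so after $k$ steps the circuits have size $\text{poly}(n)+O(kn)$, still polynomial in the input size for fixed $k$. Together with the $k$ oracle calls, this is a polynomial-time Turing reduction, and the Buss--Johnson closure theorem then converts it into the desired many-one reduction, placing \textsc{$k$-Sinks} in PPADS.
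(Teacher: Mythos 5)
Your proposal is correct and takes essentially the same route as the paper: PPADS-hardness by padding (the paper takes $k$ disjoint copies, you add trivial source-sink pairs — equivalent), and membership via the Buss–Johnson closure of PPADS under Turing reductions, using the iterative ``query the oracle, then add an edge from the returned sink to a known source'' trick. The only cosmetic difference is that you factor through \textsc{MS-Sink}, which the paper treats implicitly by directly calling the \textsc{Sink} oracle and relabeling the designated source at each step.
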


\begin{proof}
\textsc{Sink} easily reduces to \textsc{$k$-Sinks} by taking $k$ copies of the graph. The other direction can again be proved by using the result by Buss and Johnson~\cite{buss2012propositional}. A Turing reduction from \textsc{$k$-Sinks} to \textsc{Sink} is obtained by doing the following: given an instance of \textsc{$k$-Sinks}, use the oracle to solve the \textsc{Sink} problem on this instance, then add an edge from the sink we just obtained to one of the known sources. Doing this $k$ times yields $k$ distinct sinks of the original instance.
\end{proof}

\begin{remark*}
Just like \cref{thm:eol-mult-sol}, this result also holds if there is a polynomial number of sources, in particular if they are given explicitly in the input.
\end{remark*}

\subsection*{Acknowledgements}

This work was supported by an EPSRC doctoral studentship (Reference 1892947).

\bibliography{hairyballthm}
\bibliographystyle{abbrv}

\clearpage

\appendix

\section{Reduction from 2D-Hairy-Cube to 2D-Hairy-Ball}\label{sec:app:cubetoHB}

In this section we prove the following result.
\begin{proposition}\label{prop:cubetoHB}
\textsc{2D-Hairy-Cube} reduces to \textsc{2D-Hairy-Ball}.
\end{proposition}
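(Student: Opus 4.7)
The plan is to construct, from a given \textsc{2D-Hairy-Cube} instance $C$, a Lipschitz-continuous tangent vector field $F:S^2\to\mathbb{R}^3$ (realised by a $\{+,\times\zeta,\max\}$-arithmetic circuit) whose approximate zeros correspond to solutions of the cube instance. I would first build the field directly on the cube surface and then transfer to $S^2$ via the natural radial identification. On each face, partition into the $2^n\times 2^n$ squarelets of the instance. Inside a \emph{core region} of every squarelet the field equals the arrow of that squarelet, viewed as a tangent vector to the face. In a narrow transition strip of width $\eta\ll 2^{-n}$ around each squarelet edge, linearly interpolate between the two neighbouring arrows; around each squarelet corner (both interior and cube corners) use a barycentric-style interpolation of the neighbouring arrows. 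Across an edge of the cube, the interpolation is done after rigidly unfolding the two adjacent faces, matching the notion of adjacency in the problem definition. By design, the only zeros of this piecewise-linear field occur (i) on the mid-segment between two adjacent squarelets with opposite arrows, and (ii) at a cube corner where the three incident squarelets all point toward or all point away from the corner.

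Next, I would realise this field by an arithmetic circuit. Boolean operations on $\{0,1\}$-valued inputs are simulated by $\mathrm{NOT}(a)=1-a$, $\mathrm{OR}(a,b)=\max(a,b)$ and $\mathrm{AND}(a,b)=1-\max(1-a,1-b)$, giving a $\{+,\times\zeta,\max\}$ simulation of $C$ of size polynomial in $|C|$. The obstacle is that decoding a point on a face into its squarelet index requires exact thresholding, which is discontinuous. I would resolve this with the averaging trick from the Brouwer PPAD-hardness proofs of Daskalakis--Goldberg--Papadimitriou and Chen--Deng--Teng: each bit comparison $z>k\cdot 2^{-n}$ is replaced by the soft bit $\max\bigl(0,\min(1,(z-k\cdot 2^{-n})/\eta)\bigr)$, a continuous piecewise-linear approximation, and the face-indicator is handled analogously using the dominant coordinate of $x\in S^2$. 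Because the arithmetic simulation of $C$ is multilinear in its inputs, feeding in soft bits is exactly equivalent to taking the desired convex combination of the arrows of the nearby squarelets. The tangent projection demanded by \textsc{2D-Hairy-Ball} is applied automatically at the end.

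The correctness check then proceeds by region. In the core of a squarelet $F$ equals a fixed arrow of the cube and so has norm bounded below by an inverse-polynomial constant; no approximate zero can land there. Between two squarelets with the same arrow, $F$ is again constant and nonzero; between opposite arrows, the approximate zeros lie near the midline, directly exhibiting the required pair. At a cube corner the barycentric combination of the three incident arrows vanishes precisely when they sum to zero with roughly equal weights, and a short case analysis confirms this happens exactly in the all-in/all-out configurations. Choosing $\varepsilon$ inverse-exponential (and smaller than the gap between the core-region norm and the transition-region slopes) guarantees that any $\varepsilon$-approximate zero of $F$ falls inside a transition region from which a \textsc{2D-Hairy-Cube} solution can be read off in polynomial time.

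The main obstacle in executing the plan is the bookkeeping around cube edges and corners: arrows on adjacent faces live in different tangent planes, so the unfolding identifications must be encoded arithmetically and must agree simultaneously along all three faces meeting at each corner, while the averaging trick must interact cleanly with the face-indicator soft functions so that a point near a cube corner is evaluated as a genuine barycentric combination of the incident squarelet arrows. A secondary concern is keeping the Lipschitz constant of $F$ at most $2^{\mathrm{poly}(|C|,n)}$, so that the inverse-exponential $\varepsilon$ retains polynomial bit-length; this follows from \cref{lem:linearHBlipschitz} once the constructed arithmetic circuit is shown to have polynomial size.
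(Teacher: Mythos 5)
The central technical claim in your plan is wrong: replacing the threshold comparisons by soft bits and feeding them into an arithmetic simulation of the Boolean circuit $C$ does \emph{not} produce a convex combination of the arrows of nearby squarelets. You assert this works ``because the arithmetic simulation of $C$ is multilinear in its inputs,'' but the gate simulations you propose, $\mathrm{OR}(a,b)=\max(a,b)$ and $\mathrm{AND}(a,b)=1-\max(1-a,1-b)=\min(a,b)$, are piecewise-linear and not multilinear. (The $\{+,\times\zeta,\max\}$ circuit class has no two-input multiplication, so the genuinely multilinear simulations $ab$, $a+b-ab$ are unavailable.) Even with a multilinear simulation, soft bits would give an \emph{expectation over independent bit flips}, which bears no fixed relation to a convex combination of the two or three geometrically adjacent squarelets' arrows unless you carefully engineer the soft-bit values to encode exactly that distribution; generic soft thresholds do not do this. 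The averaging trick you cite from Daskalakis--Goldberg--Papadimitriou and Chen--Deng--Teng works differently: one evaluates the circuit with \emph{exact} (hard) bit extraction at many nearby sample points, only a bounded number of which can suffer a bad extraction, and then averages the circuit's \emph{outputs}. That is a fundamentally different mechanism, and the paper's proof follows it: it samples $100$ points along the diagonal direction, does hard bit extraction and circuit simulation per point, shows at most three samples are bogus, and averages the results. Your write-up conflates the two techniques, and the resulting construction has no correctness guarantee near squarelet boundaries.

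The second gap is the one you flag yourself but do not resolve: after radially transferring a face-tangent vector to $S^2$ and projecting onto the sphere's tangent plane, the projection can shrink the vector, and you give no argument that it cannot vanish. The paper handles this by (a) subdividing squarelets so that cube-edge and cube-corner cubelets carry an \emph{averaged} arrow and (b) a three-case analysis (face, edge, corner) proving an explicit constant lower bound on $\|P_x[v]\|_\infty$ whenever no discrete solution is nearby, using the fact that near a non-solution all relevant arrows lie in one orthant and hence their average has $\ell_1$-norm $1$. The paper also sidesteps your ``unfolding adjacent faces'' bookkeeping entirely by replacing squarelets with small cubelets covering a shell around the cube surface, so that a single 3D bit-extraction gives the relevant cell for all samples near a face, edge, or corner alike. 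Your plan requires encoding face-dependent rigid unfoldings arithmetically and making them agree at every corner, and you yourself call this the main obstacle; it is indeed a genuine one and it is not addressed.
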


The main idea of the proof is simple: first we obtain a continuous field on the surface of the cube by interpolation. Then, we transfer the field onto the sphere. Various technical difficulties arise, but they can all be overcome. One difficulty is that in order to simulate the Boolean circuit describing the discrete field, the arithmetic circuit first has to do ``bit extraction'', but this cannot be done exactly. This issue can be resolved by using a standard technique called the \emph{averaging trick}.

\begin{proof}[Proof of \cref{prop:cubetoHB}]
The reduction works by constructing an arithmetic circuit that will (among other things) simulate the Boolean circuit. To do this we need to do ``bit extraction'' on a rational input of the arithmetic circuit, i.e.\ compute the $N$ most significant bits in its binary representation. However, this is impossible to do exactly using an arithmetic circuit with gates $\{+$, $\times \zeta$, $\max\}$, because this kind of circuit can only compute continuous functions. As a result, any bit extraction method will make mistakes in some small regions. To circumvent this obstacle, we will simulate the evaluation of the Boolean circuit at multiple points close to the actual input. By choosing those points wisely we can ensure that at most three of them can result in a bogus bit extraction. Thus, if we output the average value obtained from all those points, the mistakes will have a very limited impact on the final result. This averaging technique is a standard tool in the field, see~\cite{daskalakis2009complexity,CDT}.

Let $C: \{0,1\}^n \times \{0,1\}^n \times \{0,1\}^3 \to \{0,1\}^2$ be an instance of \textsc{2D-Hairy-Cube}. We interpret the cube as being the unit cube $\{y \in \mathbb{R}^3 : \|y\|_\infty = 1\}$. In particular, every squarelet has side-length $2^{-n+1}$. The first step of the reduction is to perform a series of transformations on the input circuit $C$. First of all, we modify the instance by subdividing every squarelet on the surface of the cube into four small squarelets (of side-length $2^{-n}$). Any small squarelet that is not adjacent to an edge of the cube is assigned the same arrow as the original squarelet containing it. Any small squarelet that is adjacent to an edge of the cube is assigned the average of the arrow in its original squarelet and the original squarelet on the other side of the edge. If a small squarelet is adjacent to two edges of the cube (i.e.\ it lies in the corner of a face), then it is assigned the average of the arrows in the three original squarelets adjacent to this corner. From now on, when we talk about squarelets we mean the small squarelets, which have now replaced the original grid.

Next, we move from squarelets to cubelets. We place cubelets of side-length $2^{-n}$ around the surface of the unit cube, both outside and inside. Namely, every squarelet on the surface of the cube yields two cubelets: one on each side of the face containing the squarelet. Thus, the squarelet is a common face of its two cubelets. The two cubelets corresponding to a squarelet are assigned the same arrow as the squarelet. Note that this is well-defined, since for any cubelet that has multiple faces that lie on the unit cube, all the corresponding squarelets contain the same arrow by construction. We also add cubelets on the outside of the unit cube along its edges, i.e.~such that one edge of the cubelet is part of an edge of the unit cube. Such a cubelet is assigned the arrow of the squarelets it has a common edge with. Finally, for every corner of the unit cube, we add a cubelet outside the unit cube, such that the cubelet has a corner in common with the unit cube. This cubelet is assigned the same arrow as the three corresponding corner squarelets. Note that again all the cubelets are assigned an arrow in a unique and well-defined way. Furthermore, any point $y \in \mathbb{R}^3$ with $\|y\|_\infty \in [1-2^{-n},1+2^{-n}]$ lies in a cubelet (it might lie on the boundary between multiple cubelets).

Following the description above we can construct the circuit $C': [2^{n+1}+2] \times [2^{n+1}+2] \times [2^{n+1}+2] \to \{0,1\}^{\ell}$. The input in $[2^{n+1}+2] \times [2^{n+1}+2] \times [2^{n+1}+2]$ describes a cubelet in the subdivision of the cube $\{y \in \mathbb{R}^3: \|y\|_\infty \leq 1+2^{-n}\}$ into cubelets of side-length $2^{-n}$. Note that the input is actually of the type $\{0,1\}^{n+2}$, but we interpret it as an integer in $[2^{n+2}]$ (and we are only interested in inputs in $[2^{n+1}+2]$). For cubelets that do not correspond to the ones described above, the circuit can have an arbitrary output. For the cubelets we described above, the circuit outputs the corresponding arrow. Note that there are $6$ cardinal arrows overall. The circuit has $\ell=6+6^2+6^3$ output bits and for every valid cubelet input, exactly one of those output bits will be set to $1$. Note that $\ell$ is large enough to ensure that we can also output the average of two or three cardinal arrows. This circuit can be constructed in polynomial time from $C$.

Now consider a point $y$ on the surface of the unit cube and let $e=(1,1,1)$. Let
$$S_y = \{y + (i-50) 2^{-n-8} \cdot e : i=1, \dots, 100\}$$
Note that since $y$ lies on the surface of the cube, all the points in $S_y$ lie in at least one of the cubelets we defined above. Furthermore, note that any two cubelets that each contain a point in $S_y$, have to be adjacent, i.e.\ they have a non-empty intersection.

It is easy to see that the kind of arithmetic circuit we consider here can simulate any single gate of a Boolean circuit and thus the whole circuit $C'$. However, it needs to somehow obtain the input bits. Since any output (and intermediate output) of the arithmetic circuit depends continuously on the inputs, it is impossible to construct a circuit that extracts bits correctly for all inputs. Thus, we will have to take into account the fact that on some inputs, the bit extraction will not output $0$ or $1$. In this case, the simulated Boolean circuit will have some ``bogus'' output. In particular, there might not be exactly one bit in the output that is set to $1$. However, we can ensure that the output gates of the simulated circuit always output some number in $[0,1]$ by adding adequate $\max$ and $\min$ gates at the end. Furthermore, we can also ensure that those numbers add up to something that it at most $1$, which ensures that even the bogus outputs will yield a vector that has $\ell_1$-norm at most $1$. This is done by computing the sum $S$ of all ``fake'' output bits and then subtracting $S - 1$ from all fake output bits (and then taking the $\max$ with $0$). It is easy to see that this ensures that the sum of outputs is at most $1$, while also not changing the output if it is valid (i.e.\ exactly one bit set to $1$).

The simulated Boolean circuit will output $\ell$ values, corresponding to the $\ell$ output bits of $C'$. As mentioned above, we can ensure that these values lie in $[0,1]$ and sum up to at most $1$. The output vector is then computed by taking the weighted average of the $\ell$ possible arrows, where the weights are given by the $\ell$ output values. This is achieved by hard-coding every coordinate of every arrow as a constant in the arithmetic circuit. The cardinal arrows are taken to be the unit vectors in the corresponding direction. The other arrows are taken to be all the averages of two or three of the cardinal vectors. Note that if exactly one output value is $1$ (which will be the case whenever $C'$ is simulated with correct input bits), then the output will be the correct vector that $C'$ outputs. Also note that, in any case, the output vector will have $\ell_1$-norm at most $1$.

We now describe how to perform bit extraction using an arithmetic circuit. Let $t \in [0,1]$. The arithmetic circuit can compute
$$b = \min(1, 2^{-n-10} \cdot \max (0,t-1/2))$$
using a number of gates that is polynomial in $n$. If $t \leq 1/2$, we obtain $b=0$. If $t \geq 1/2 + 2^{n+10}$, then we obtain $b=1$. For intermediate values of $t$, we obtain a number $b$ that is not a bit. Performing the same operation on $t'=2t-b$ will extract the second bit, etc. Consider any $z \in \mathbb{R}^3$ with $\|z\|_\infty \leq 1+2^{-n}$. If we extract the first $n+2$ bits of
$$(z_i+1+2^{-n})\frac{(2^{n+1}+2)2^{-n-2}}{2+2^{-n+1}}$$
for $i=1,2,3$ then we will obtain the input bits we need for $C'$. In particular, if $z$ lies in one of our cubelets and is at ($\ell_\infty$-norm) distance at least $2^{-n-10}$ from the boundary of the cubelet, then this extraction correctly recovers the cubelet in which it lies. From the construction of $S_y$ above, it follows that at most $3$ points in $S_y$ can suffer from incorrect extraction. Thus, for at least $97$ points we will correctly determine the cubelet in which they lie. The bit extraction above can be done using an arithmetic circuit of polynomial size in $n$. In particular, note that all constants that we use have a number of bits that can be bounded by a polynomial in $n$.

We are now ready to describe the arithmetic circuit $F$ with three inputs and outputs, using gates $\{+, \times \zeta, \max \}$, that will be the output of the reduction. Let $x = (x_1,x_2,x_3)$ be the (rational) input to the circuit. The circuit $F$ does the following:
\begin{enumerate}
    \item Project $x$ onto the surface of the $1/2$-cube (i.e.\ the set of points $u$ such that $\|u\|_\infty = 1/2$) and then map it to the surface of the unit cube. Namely, compute $y=(y_1,y_2,y_3)$ where $y_i = 2 \times \min(1/2,\max(-1/2,x_i))$.
    \item Compute the coordinates of the $100$ points in $S_y$.
    \item For each point in $S_y$ : perform the bit extraction described above, then simulate $C'$ with those bits as input. (Ensure output vector has $\ell_1$-norm at most $1$ using trick described above.)
    \item Output the average of all $100$ output vectors.
\end{enumerate}

We now show that there exists some constant $\varepsilon > 0$ such that any solution to the \textsc{2D-Hairy-Ball} instance $(F,\varepsilon)$ yields a solution to the discrete instance. For now we assume that the bit extraction is always correct and thus the vector output for each point in $S_y$ is computed correctly by the simulated circuit. Let $v$ denote the average of all these vectors. This is what the arithmetic circuit outputs at the end. We start with some key observations. First of all, every cubelet obtained its arrow from some squarelets in the original instance. A cubelet lying next to a face of the cube (but not right next to an edge), obtained its arrow from the squarelet containing its intersection with the face of the cube. A cubelet lying at an edge of the cube, obtained its arrow by taking the average of the arrows of two squarelets lying on either side of the edge. Finally, a corner cubelet obtained its arrow by taking the average of the arrows of the three corner squarelets. The key observation here is that if two cubelets are adjacent (i.e.\ have at least one point in common), then the corresponding original squarelets must all be pairwise adjacent. The second observation is that all the cubelets touched by our sample $S_y$ are adjacent. When we take the average $v$ of all outputs over $S_y$, this actually corresponds to taking some weighted average over all cardinal vectors present in the corresponding original squarelets. If there is no nearby solution in the original discrete instance, then all these cardinal vectors have to lie in the same orthant of $\mathbb{R}^3$. This is easy to see, because if they don't lie in the same orthant, then two of them have to be opposite, which yields a solution (since all the corresponding original squarelets are adjacent). From this it follows that the weighted average $v$ of all these cardinal vectors will have $\ell_1$-norm equal to $1$, if there is no nearby solution in the original discrete instance.

The next step is to show that if there is no discrete solution close to the point $y$ corresponding to $x$, then $\|P_x[v]\|_\infty$ is lower-bounded by a constant. Recall that $P_x$ denotes the projection onto the tangent space to the unit sphere at $x$. We consider three cases depending on the type of cubelet involved. A \emph{face cubelet} is one that only has one corresponding original squarelet. An \emph{edge cubelet} has two corresponding original squarelets and a \emph{corner cubelet} has three corresponding original squarelets.

Consider any $x \in S^2$ such that there is no discrete solution close to $y$ (namely: no solution involving squarelets that are adjacent to a squarelet containing $y$).
\begin{itemize}
\item Case 1: $x$ is such that all points in $S_y$ lie in face cubelets.

Note that in this case, all the corresponding original squarelets lie on the same face. As a result, the average $v$ will lie in the plane parallel to this face and going through the origin. Furthermore, note that $y$ cannot lie too close to any edge of the cube, otherwise we would get edge cubelets. Assume wlog that $y$ lies on the face $\{z_3=1\}$. It follows that $x$ must be such that $x_1 \in (-1/2,1/2), x_2 \in (-1/2,1/2), x_3 > 1/\sqrt{2}$. As stated above we have $\|v\|_1=1$. One can check that in this case a (very crude) bound is $\|P_x[v]\|_\infty \geq \frac{1}{3 \sqrt{2}}$.

\item Case 2: $x$ is such that at least one point in $S_y$ lies in an edge cubelet and no point lies in a corner cubelet

Note that there is a special case where $S_y$ can have points contained in edge cubelets from two different edges. This special case is handled exactly as in case 3 below. Here we only consider the case where all edge cubelets come from the same edge denoted $E$. Assume wlog that $E$ is the edge given by $\{z_1=1, z_2=-1\}$. $v=(v_1,v_2,v_3)$ is the weighted average of the cardinal vectors contained in all the corresponding original squarelets. All those squarelets must be adjacent to $E$ on either side (and they are all adjacent to each other as noted earlier). Observe that $(v_1,v_2)$ cannot lie in the interior of the two quadrants given by $\{v_1<0,v_2>0\}$ and $\{v_1>0,v_2<0\}$. Otherwise, we would have two cardinal vectors forming an edge solution.

Since $y$ is at distance at most $2^{-n+1}$ from $E$, it follows that $x$ satisfies $x_1 \geq 1/2-2^{-n}, x_2 \leq -1/2+2^{-n}, x_3 \in (-1/2,1/2)$. One can check that for $n \geq 4$, we get that the angle $\alpha$ between $x$ and $v$ must satisfy $\cos \alpha \leq \sqrt{207}/16$. It follows that
\begin{equation*}
\begin{split}
\|P_x[v]\|_\infty \geq \frac{1}{\sqrt{3}} \|P_x[v]\|_2 \geq \frac{1}{\sqrt{3}} (\|v\|_2 - |\langle v,x \rangle| \|x\|_2) &\geq \frac{1}{\sqrt{3}} \|v\|_2 (1 - \sqrt{207}/16)\\
&\geq (1 - \sqrt{207}/16)/3 > 0.033
\end{split}
\end{equation*}

\item Case 3: $x$ is such that at least one point in $S_y$ lies in a corner cubelet

Assume wlog that the corner is the one given by $\{z_1=1,z_2=-1,z_3=1\}$. $v$ is the weighted average of the cardinal vectors contained in all the corresponding original squarelets. It is easy to see that the original squarelets concerned are exactly the original corner squarelets. By inspection one can check that $v$ cannot lie in the two orthants given by $\{v_1 \leq 0, v_2 \geq 0, v_3 \leq 0\}$ and $\{v_1 \geq 0, v_2 \leq 0, v_3 \geq 0\}$. Since $y$ is at distance at most $2^{-n+1}$ from the corner, it follows that $x$ satisfies $x_1 \geq 1/2-2^{-n}, x_2 \leq -1/2+2^{-n}, x_3 \geq 1/2 + 2^{-n}$. Similarly to the previous case, we again obtain that the angle $\alpha$ between $x$ and $v$ must satisfy $\cos \alpha \leq \sqrt{207}/16$ (for $n \geq 4$). Thus, the same bound holds for this case too.
\end{itemize}

Unfortunately, all output vectors for points in $S_y$ might not be computed correctly. Fortunately, at most $3$ of them might yield bogus outputs. Thus we can write $v=\frac{97}{100} v_g + \frac{3}{100} v_b$, where $v_g$ and $v_b$ are the averages of the good and bad outputs, respectively (assume wlog that there are exactly 3 bad outputs). The arguments in the previous paragraphs yield $\|P_x[v_g]\|_\infty \geq 0.033$. By construction of the arithmetic circuit $F$, we have $\|v_b\|_1 \leq 1$, which implies $\|P_x[v_b]\|_2 \leq 1$, and thus $\|P_x[v_b]\|_\infty \leq 1$. Putting everything together we get
\begin{equation*}
\begin{split}
\|P_x[F(x)]\|_\infty = \left\|P_x\left[\frac{97}{100} v_g\right] + P_x\left[\frac{3}{100} v_b\right]\right\|_\infty &\geq \left\|P_x\left[\frac{97}{100} v_g\right]\right\|_\infty - \left\|P_x\left[\frac{3}{100} v_b\right]\right\|_\infty\\
&\geq \frac{97}{100}0.033 - \frac{3}{100} = 0.00201.
\end{split}
\end{equation*}
Thus picking $\varepsilon = 0.002$ yields the result.
\end{proof}

\section{Reduction from \texorpdfstring{$k$}{k}D-Hairy-Ball to \texorpdfstring{$(k+2)$}{(k+2)}D-Hairy-Ball}\label{sec:app:kHBtok+2HB}

Since \textsc{2D-Hairy-Cube} is PPAD-hard (\cref{prop:cubePPADhard}) and it reduces to \textsc{2D-Hairy-Ball} (\cref{prop:cubetoHB}), it follows that \textsc{2D-Hairy-Ball} is also PPAD-hard. In order to prove \cref{thm:kHB-PPAD-hard}, we reduce \textsc{$k$D-Hairy-Ball} to \textsc{$(k+2)$D-Hairy-Ball} for all even $k \geq 2$.

\begin{proposition}
For all even $k \geq 2$, \textsc{$k$D-Hairy-Ball} reduces to \textsc{$(k+2)$D-Hairy-Ball}.
\end{proposition}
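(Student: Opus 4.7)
The plan is to realize $S^k$ as an ``equator'' of $S^{k+2}$ and to extend the given tangent field $F$ on $S^k$ to the higher sphere by adjoining a rotational field on the complementary $S^1$ ``polar circle''. Concretely, I will write points of $\mathbb{R}^{k+3}$ as $x=(y,z_1,z_2)$ with $y\in\mathbb{R}^{k+1}$ and $z=(z_1,z_2)\in\mathbb{R}^2$, so that the equator $\{z=0\}\cap S^{k+2}$ is a copy of $S^k$ and the polar circle $\{y=0\}\cap S^{k+2}$ is a copy of $S^1$. Given an instance $(F,\varepsilon)$ of \textsc{$k$D-Hairy-Ball}, the output instance will be $(G,\varepsilon')$ for a suitably chosen $\varepsilon'$, where the circuit $G$ is defined by
\[ G(y,z_1,z_2) \;=\; \bigl(F(y),\,-z_2,\,z_1\bigr). \]
Since $F$ only uses gates from $\{+,\times\zeta,\max\}$ and the extra components are linear in $z_1,z_2$, the circuit $G$ is valid for the $(k+2)$D-Hairy-Ball problem and can be built in polynomial time from $F$.

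The core calculation will be to analyse $P_x[G(x)]$ for $x\in S^{k+2}$. A direct computation gives $\langle G(x),x\rangle = \langle F(y),y\rangle =: c(x)$, since the two ``rotational'' coordinates contribute $-z_2z_1+z_1z_2=0$. Hence the projected field equals $\bigl(F(y)-c(x)\,y,\;-z_2-c(x)z_1,\;z_1-c(x)z_2\bigr)$, and the last two coordinates have $\ell_2$-norm exactly $\sqrt{1+c(x)^2}\,\|z\|_2\ge\|z\|_2$. If $\|P_x[G(x)]\|_\infty\le\varepsilon'$, this already forces $\|z\|_2\le\sqrt{2}\,\varepsilon'$, so $\|y\|_2=\sqrt{1-\|z\|^2}$ is close to $1$ and the normalised point $\tilde y := y/\|y\|_2\in S^k$ satisfies $\|\tilde y - y\|_2 = 1-\|y\|_2 \le \|z\|_2^2\le 2(\varepsilon')^2$. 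This is the point we return as the candidate solution on $S^k$.

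The remaining task is to verify that $\tilde y$ is an $\varepsilon$-approximate zero of $F$ on $S^k$. The bound on the first $k+1$ coordinates of $P_x[G(x)]$ gives $\|F(y)-c(x)y\|_\infty\le\varepsilon'$, and I will compare this to $P_{\tilde y}[F(\tilde y)] = F(\tilde y)-\langle F(\tilde y),\tilde y\rangle\tilde y$ by expanding
\[ P_{\tilde y}[F(\tilde y)] - \bigl(F(y)-c(x)y\bigr) = \bigl(F(\tilde y)-F(y)\bigr) - \bigl(\langle F(\tilde y),\tilde y\rangle\tilde y - c(x)y\bigr) \]
and bounding each term using the Lipschitz constant $L=k\cdot 2^{\mathrm{size}(F)^2+3}$ of \cref{lem:linearHBlipschitz} together with the bounds $\|\tilde y-y\|_\infty\le 2(\varepsilon')^2$, $\|F(y)\|_\infty\le L$, and $|c(x)|\le (k+1)L$. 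Every error term is at worst $O(k L\varepsilon')$, so setting $\varepsilon'$ to be an appropriate polynomially-representable quantity of the form $\varepsilon/(C\,kL)$ with $C$ an absolute constant ensures $\|P_{\tilde y}[F(\tilde y)]\|_\infty\le\varepsilon$. Since $\mathrm{size}(L)$ is polynomial in $\mathrm{size}(F)$, so is $\mathrm{size}(\varepsilon')$, and the reduction is polynomial-time.

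The main obstacle is purely quantitative: making the two-sided approximation consistent. One has to choose $\varepsilon'$ small enough relative to the (exponential in $\mathrm{size}(F)$) Lipschitz constant so that projecting $y$ onto the equator and replacing the off-equator projection $P_x$ by the on-equator projection $P_{\tilde y}$ each introduce error below $\varepsilon/3$. I expect no topological subtlety — the construction is ``product-like'' and the non-vanishing rotational field on the extra $S^1$ (which relies on the Euler characteristic of $S^1$ being zero) does the work of moving the only zeros of $G$ back to the equator, where they correspond bijectively, up to the approximation, to zeros of $F$.
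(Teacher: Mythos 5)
Your construction of $G(y,z_1,z_2)=(F(y),-z_2,z_1)$ and the projection analysis follow the same route as the paper: the paper's circuit $G$ also evaluates $F$ directly on the first $k+1$ coordinates of $x$ (rather than on the normalized point) and appends the rotation $(-x_{k+3},x_{k+2})$. The only structural difference is that the paper additionally clamps the first $k+1$ output coordinates to $[-b,b]$, where $b$ shrinks to $0$ away from the equator; your argument shows this safety clamp is not strictly needed, since the bound $\|z\|_2\le\sqrt 2\,\varepsilon'$ already follows from $\sqrt{1+c(x)^2}\,\|z\|_2\le\sqrt 2\,\varepsilon'$ for every $x\in S^{k+2}$ whatsoever, so no spurious solutions exist far from the equator. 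Your error accounting is also somewhat generous in your favour: with $\|\tilde y-y\|_\infty\le 2(\varepsilon')^2$, the discrepancy between the first $k+1$ coordinates of $P_x[G(x)]$ and $P_{\tilde y}[F(\tilde y)]$ is in fact $O(\sqrt{k}\,L\,(\varepsilon')^2)$, not merely $O(kL\varepsilon')$, so choosing $\varepsilon'$ of size roughly $\varepsilon/(C k L)$ (with the usual convention that $\varepsilon'$ is also capped by a constant, to cover large $\varepsilon$) is comfortably sufficient, and $\mathrm{size}(\varepsilon')$ is polynomial since $\mathrm{size}(L)$ is.

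There is, however, one genuine gap: you propose to output $\tilde y:=y/\|y\|_2$, but this is in general not a rational point, since $\|y\|_2=\sqrt{x_1^2+\cdots+x_{k+1}^2}$ is typically irrational. A TFNP reduction must map a solution of the new instance to a rational solution of the original instance representable in polynomially many bits, so $\tilde y$ cannot be returned as is. The fix is routine and is precisely what the paper does: compute a rational point $\tilde y'\in S^k$ with $\|\tilde y-\tilde y'\|_\infty\le\varepsilon/(2L)$ (e.g.\ via the stereographic projection technique used elsewhere in the paper), and invoke \cref{lem:linearHBlipschitz} to conclude $\|P_{\tilde y'}[F(\tilde y')]\|_\infty\le\|P_{\tilde y}[F(\tilde y)]\|_\infty+L\|\tilde y-\tilde y'\|_\infty\le\varepsilon$. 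Once you make this last rationalization step explicit (and account for it in the budget by targeting, say, $\varepsilon/2$ rather than $\varepsilon$ in the preceding estimate), the proof is complete.
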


If all computations could be performed exactly with infinite precision, then we would just use the following idea. Assume that we are given $f : S^k \to \mathbb{R}^{k+1}$ that is a Lipschitz-continuous tangent vector field on $S^k$. We define $g : S^{k+2} \to \mathbb{R}^{k+3}$ as follows. For $x=(x_1,x_2, \dots, x_{k+3}) \in S^{k+2}$, let
$$g(x) = \left(\|(x_1,\dots,x_{k+1})\|_2 \cdot f \left( \frac{(x_1,\dots,x_{k+1})}{\|(x_1,\dots,x_{k+1})\|_2} \right), -x_{k+3}, x_{k+2}\right)$$
and $g(x) = (0,\dots,0,-x_{k+3},x_{k+2})$ if $(x_1, \dots, x_{k+1}) = 0$.
It is easy to check that $g$ is a Lipschitz-continuous tangent vector field on $S^{k+2}$. Furthermore, given $x \in S^{k+2}$ such that $\|g(x)\|_\infty \leq \varepsilon / \sqrt{2}$, it holds that $y=(x_1, \dots, x_{k+1})/\|(x_1, \dots, x_{k+1})\|_2$ is a well-defined point on $S^k$ and satisfies $\|f(y)\|_\infty \leq \varepsilon$.

\begin{proof}
Since we cannot perform all these computations exactly, the reduction we construct is slightly more cumbersome. Assume that we are given an instance $(F, \varepsilon)$ of \textsc{$k$D-Hairy-Ball}. We construct an arithmetic circuit $G$ with $k+3$ inputs and outputs, using gates $\{+, \times \zeta, \max\}$ as follows. From the proof of \cref{lem:linearHBlipschitz} we know that $F$ is $L$-Lipschitz and $\|F(x)\|_\infty \leq L$ for $\|x\|_\infty \leq 1$, where $L = k 2^{\text{size}(F)^2 + 3}$. Let $(x_1, \dots, x_{k+3})$ denote the input. Pick $\ell = \lceil 1/2 \log_2(32kL/\varepsilon) \rceil$. The circuit $G$ performs the following steps:
\begin{enumerate}
    \item Compute $N := \max\{|x_{k+2}|, |x_{k+3}|\}$
    \item Compute $v := F(x_1, \dots, x_{k+1})$
    \item Compute $b := \max\{0, 2L-2^\ell \cdot L \cdot N\}$
    \item Output $w := (\min\{b,\max\{-b,v\}\}, -x_{k+3}, x_{k+2})$
\end{enumerate}
In the last expression, the $\min$ and $\max$ operations are applied component-wise on the vector $v$. Let $\varepsilon' = \min \{2^{-\ell-1}, \varepsilon/4\}$.
$(G, \varepsilon')$ is an instance of \textsc{$(k+2)$D-Hairy-Ball}. Let $x \in S^{k+2}$ be a solution, i.e.\ such that $\|P_x[G(x)]\|_\infty \leq \varepsilon'$.

The two last coordinates of $P_x[G(x)]$ are $x_{k+2} + \alpha x_{k+3}$ and $x_{k+3} - \alpha x_{k+2}$ respectively, where $\alpha = \langle w,x \rangle$. Since $\|P_x[G(x)]\|_\infty \leq \varepsilon'$, it follows that $|x_{k+2} + \alpha x_{k+3}| \leq \varepsilon'$ and $|x_{k+3} - \alpha x_{k+2}| \leq \varepsilon'$. From this we get $|x_{k+2} + \alpha^2 x_{k+2}| \leq |x_{k+2} + \alpha x_{k+3}| + |\alpha||-x_{k+3} + \alpha x_{k+2}| \leq (1+|\alpha|) \varepsilon'$. It follows that $|x_{k+2}| \leq \varepsilon' (1+|\alpha|) / (1+\alpha^2) \leq 2 \varepsilon' \leq 2^{-\ell}$. Similarly we also get $|x_{k+3}| \leq 2^{-\ell}$.

Let $x' = (x_1, \dots, x_{k+1})$ and $z = x'/\|x'\|_2 \in S^k$. We have $\|x'\|_2 \geq \sqrt{1-2^{1-2\ell}}$ and thus $\|x'-z\|_2 \leq 1 - \sqrt{1-2^{1-2\ell}} \leq 2^{1-2\ell} \leq \varepsilon/16kL$. It follows that $\|F(x')-F(z)\|_\infty \leq \varepsilon/16k$. From $N \leq 2^{-\ell}$ we get $b \geq L$, which implies $w = (F(x'), -x_{k+3}, x_{k+2})$.

Note that $\langle w, x \rangle = \langle F(x'), x' \rangle$. Thus, the first $k+1$ coordinates of $P_x[G(x)]$ are $u := F(x') - \langle F(x'), x' \rangle x'$. Since $x$ is a solution we have $\|u\|_\infty \leq \varepsilon' \leq \varepsilon/4$. We also have
\begin{equation*}\begin{split}
\|u - P_z[F(z)]\|_\infty &= \|F(x') - \langle F(x'), x' \rangle x' - (F(z) - \langle F(z), z \rangle z)\|_\infty\\
&\leq \|F(x')-F(z)\|_\infty + |\langle F(z) - \|x'\|_2^2 F(x'), z \rangle| \|z\|_2\\
&\leq \varepsilon/16k + \|F(z) - \|x'\|_2^2 F(x')\|_2\\
&\leq (1+\sqrt{k+1}) \varepsilon/16k + (1 - \|x'\|_2^2) \sqrt{k+1} L\\
&\leq (1+\sqrt{k+1})\varepsilon/16k + \sqrt{k+1} 2^{1-2\ell} L\\
&\leq (1+2\sqrt{k+1}) \varepsilon/16k \leq \varepsilon/4
\end{split}\end{equation*}

Thus, it follows that $\|P_z[F(z)]\|_\infty \leq \varepsilon/2$. Compute $z' \in S^k$ such that $\|z - z'\|_\infty \leq \varepsilon/2L$ (e.g.\ by using the stereographic projection technique). By \cref{lem:linearHBlipschitz} it follows that $\|P_{z'}[F(z')]\|_\infty \leq \varepsilon$.
\end{proof}

\section{Hairy-\texorpdfstring{$g$}{g}-Torus is PPAD-hard}\label{sec:app:gTorus-ppad-hard}

\begin{proposition}\label{prop:gTorus-PPAD-hard}
For any $g \geq 2$, \textsc{Hairy-$g$-Torus} is \textup{PPAD}-hard.
\end{proposition}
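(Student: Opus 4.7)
The plan is to adapt the PPAD-hardness proof of \textsc{$2$D-Hairy-Ball} by embedding $2(g-1)$ modified copies of a \textsc{2D-Variant-Brouwer} instance, one per flat region of $T_g$. Recall from the proof of \cref{prop:gTorus-to-EOL} that $T_g$ has exactly $2(g-1)$ flat regions (the top and bottom of each inter-hole bar) and that the base field $F_B$ is continuous and non-vanishing away from these regions. The flat regions are precisely the topologically forced locations for zeros of any Hairy Ball field on $T_g$, so they are the natural places to plant PPAD-hard gadgets.

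First I would take a PPAD-hard \textsc{2D-Variant-Brouwer} instance $C$ (available by \cref{lem:2dBrwrPPAD}) and turn it into a continuous tangent field $V_C$ on a tiny disc, using the averaging plus arithmetic-circuit-simulation technique already used in the proof of \cref{prop:cubetoHB}. Concretely, the arithmetic circuit approximately extracts the bits of the input, simulates $C$ at $\Theta(1)$ perturbed sample points nearby, and outputs the weighted average of the resulting cardinal vectors. This guarantees three properties simultaneously: $V_C$ is built only from $\{+,\times\zeta,\max\}$ gates; any point where $\|V_C\|_\infty$ is smaller than some absolute constant yields a \textsc{2D-Variant-Brouwer} solution; and $\|V_C\|_\infty$ is uniformly bounded above.

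Next I would place a scaled copy of $V_C$ at the centre of each of the $2(g-1)$ flat regions, alternating between $V_C$ and $-V_C$ from one flat region to the next so that, on the boundary of each flat region, the embedded field matches the winding of $F_B$ around the adjacent hole. This mirrors the opposite-faces trick used for the cube in \cref{prop:cubePPADhard}, and is the reason we can afford exactly one gadget per flat region despite $F_B$'s orientation alternating from one flat region to the next. Between the central disc and the boundary of the flat region, the field is a convex combination of $\pm V_C$ with the appropriate cardinal direction used by the X-shape of \cref{fig:2-torus-flat}. Outside the flat regions, the circuit simply outputs $F_B$. The entire construction is computable by an arithmetic circuit of polynomial size in $|C|$.

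The main obstacle is ruling out bogus approximate zeros outside the embedded Brouwer discs. Away from the flat regions, $F_B$ has $\ell_\infty$-norm that remains bounded below after (approximate) projection onto the tangent space of $T_g$, because $F_B$ is always tangent to a circle winding around a hole; inside a flat region but outside the central disc, the convex combination of $\pm V_C$ with the cardinal vectors is likewise bounded below in norm, because those cardinal vectors are never anti-parallel to the range of $V_C$. Chasing constants in the spirit of cases 1--3 in the proof of \cref{prop:cubetoHB}, one selects $\varepsilon$ small enough that every approximate zero of the Hairy-Ball field must lie deep inside some embedded disc. The slack $\varepsilon/2$ permitted by the approximate tangent projection $P_x^\varepsilon$ is absorbed into the same constants. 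Any such approximate zero then yields, via the averaging analysis, a pair of adjacent squarelets of $C$ whose cardinal arrows are opposite, i.e., a \textsc{2D-Variant-Brouwer} solution, completing the reduction.
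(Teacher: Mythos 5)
Your high-level strategy is the same as the paper's: reduce from \textsc{2D-Variant-Brouwer} (\cref{lem:2dBrwrPPAD}), build an arithmetic-circuit tangent field on $T_g$ that equals the base field $F_B$ away from the flat regions, embed a Brouwer gadget (using the averaging trick from \cref{prop:cubetoHB}) inside each of the $2(g-1)$ flat regions, blend via the X-shape cardinal directions of \cref{fig:2-torus-flat}, and then chase constants to choose $\varepsilon$ so that all approximate zeros sit inside the gadgets. So the skeleton is right, and you have correctly identified the main obstacle (ruling out bogus approximate zeros) and the tool (averaging plus case analysis on regions).

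However, the specific idea of \emph{alternating} between $V_C$ and $-V_C$ across flat regions is a genuine mistake. The base field $F_B$ in the paper is a function of $(x_1,x_2)$ only; it does not change sign between $x_3=1$ and $x_3=-1$. Consequently the $3$-dimensional vector $F_B$ produces at the boundary of the top flat region is \emph{identical} to what it produces at the boundary of the bottom flat region (as a function of $(x_1,x_2)$). If the Brouwer boundary arrows of $V_C$ are chosen to agree with the X-shape/base field at the rim of one flat region, then the boundary arrows of $-V_C$ will be \emph{anti-parallel} to $F_B$ at the rim of the other, and your ``convex combination'' blend will pass through zero somewhere in the transition annulus, creating a bogus approximate zero that does not decode to a \textsc{2D-Variant-Brouwer} solution. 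The analogy with the cube (\cref{prop:cubePPADhard}) is misleading here: there, the circuit receives an explicit face index and the surrounding side faces are designed to agree with the negated instance, so negation is both possible and necessary. On $T_g$, the orientation flip between a flat region and the one ``opposite'' to it is already supplied automatically by the embedding (the outward normals point in opposite directions), so the same unnegated field of the form $(v_1,v_2,0)$ is tangent at both and has opposite surface-index at each. The paper exploits exactly this: its circuit $G$ never reads $x_3$, places the \emph{same} $v_C$ in every flat region, and still obtains the correct Poincar\'e--Hopf accounting. If you drop the alternation and simply use the same $V_C$ everywhere (with boundary conditions adapted to the X-shape, as the paper's re-oriented \textsc{2D-Variant-Brouwer} variant does), the rest of your argument goes through.

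One smaller caveat: your justification that cardinal vectors ``are never anti-parallel to the range of $V_C$'' is too coarse as stated, since $V_C$ uses all four cardinal directions in its interior. The correct statement is local: in the transition annulus, the only $V_C$-values in play are its boundary arrows, and those are chosen to agree with the nearby X-sector direction; that local agreement is what prevents cancellation.
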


\begin{proof}

We reduce from the PPAD-complete problem \textsc{2D-Variant-Brouwer} (\cref{sec:brouwer-ppad-hard}). Let $C: [2^n] \times [2^n] \to \{(1,0),(-1,0),(0,1),(0,-1)\}$ be an instance of \textsc{2D-Variant-Brouwer}. Note that we can slightly change the boundary conditions in the definition of the problem without changing its complexity. Specifically, as long as opposite sides of the square have opposite arrows, the problem remains the same. In this proof we use \textsc{2D-Variant-Brouwer} instances where the top boundary contains right-arrows and the right boundary contains up-arrows (and the rest of the boundary is diametrically opposite). The boundary conditions are chosen this way to be compatible with the base field $F_B$.

We give the proof for $g=2$. The same arguments work for $g \geq 3$. We are going to pick an $\varepsilon > 0$ and construct an arithmetic circuit $G$ that computes a field on the $2$-torus, such that any $\varepsilon$-approximate zero of the projection of $G$ yields a solution to the \textsc{2D-Variant-Brouwer} instance. $G$ will compute a close approximation of the base field $F_B$ (defined in \cref{sec:gTorus}) and we will embed an interpolation of the \textsc{2D-Variant-Brouwer} instance in each of the two flat regions. Thus, any approximate zero will have to lie in one of the regions corresponding to the embedded Brouwer instance.

We now describe how to construct $G$. In the two embedding regions, we will use the averaging trick to ensure that $G(x)$ is small only if a \textsc{2D-Variant-Brouwer} solution is nearby. Since this has been explained in detail in the proof of \cref{prop:cubetoHB} (\cref{sec:app:cubetoHB}), we now treat this as a subroutine that we know how to implement efficiently. If one of the sampling points lies outside of the unit square, then we naturally extend the Brouwer instance there. This can be done by adding an extra layer of squarelets around the unit square and assigning the same arrow as the closest squarelet. Here we use the averaging trick with equiangle sampling of $8$ points. In this case, we get that $\|G(x)\|_\infty \geq 1/4$ if no solution is nearby. Thus, as long as $\varepsilon < 1/4$, any $\varepsilon$-approximate zero yields a \textsc{2D-Variant-Brouwer} solution.

Given input $x=(x_1,x_2,x_3)$ the arithmetic circuit $G$ does the following:
\begin{enumerate}
    \item Compute $v_L=(x_2,-x_1,0)$, $v_R=(x_2,-x_1+5,0)$ and $v_M=(2x_2,0,0)$
    \item Use the averaging trick on $C$ with input $(x_1-2,x_2+1/2)$. Obtain average $v_C$
    \item Compute indicators $b_L, b_R, b_C$ and $b_M$ (see below)
    \item Output $G(x) = h(v_L,b_L) + h(v_R,b_R) + h(v_M,b_M) + h(v_C,b_C)$
\end{enumerate}
where $h(v,b)$ outputs $\max\{-b,\min\{b,v_i\}\}$ for each of the three coordinates of $v$. The $b$-variables are indicator functions for various regions of the $2$-torus. Since the arithmetic circuit can only compute continuous functions, we have to use \emph{approximate} indicator functions, i.e.\ their value goes from 0 to 1 continuously in some small region close to the boundary of the region. The indicator bits are computed as follows:
\begin{equation*}\begin{split}
b_L(x_1,x_2) &:= \max\{0,1+2^\ell\min\{0,-x_1\},1+2^\ell\min\{0,2-x_1,2-x_2,x_2+2\}\}\\
b_R(x_1,x_2) &:= b_L(5-x_1,x_2)\\
b_C(x_1,x_2) &:= \max\{0,2^\ell\min\{2^{-\ell},x_1-2,3-x_1,x_2+1/2,1/2-x_2\}\}\\
b_M(x_1,x_2) &:= 1 - b_L(x_1,x_2) - b_R(x_1,x_2) - b_C(x_1,x_2)
\end{split}\end{equation*}

Note that by construction, for any $(x_1,x_2,x_3) \in T_2$ we have $b_L+b_R+b_M+b_C = 1$. Using $\|h(v,b)\|_\infty \leq b$, it follows that $\|G(x)\|_\infty \leq 1$. Moreover, $G(x)$ always lies in the $(x_1,x_2)$-plane (i.e.\ $[G(x)]_3 = 0$). Note also that $h(v,b)$ lies in the same quadrant as $v$ and $\|h(v,b)\|_p \geq \min\{b, \|v\|_p\}$ for $p \in \{1,2, \infty\}$.

Consider any $x=(x_1,x_2,x_3) \in T_2$ that does not lie in any of the two embedding regions, i.e.\ $x$ is in $L$, $R$ or $M$. If $x$ lies in the $L$ (or $R$) region, then $\|P_x^\varepsilon[G(x)]\|_2=\|G(x)\|_2=\|h(v_L,b_L)\|_2 \geq 1$. If $x$ lies in $M$, but away from the boundary with $L$ or $R$, then we have $\|P_x^\varepsilon[G(x)]\|_2=\|G(x)\|_2 = \|h(v_M,b_M)\|_2\geq 1/2$. If $x$ lies close to the boundary between $M$ and $L$ (or $R$), then $G(x)$ is the sum of two vectors that have length at least $\min\{b_L,1\}=1-b_M$ and $\min\{b_M,1/2\}$ respectively. Furthermore, it is easy to check that these vectors lie in the same quadrant. It follows that $\|G(x)\|_2 \geq 1/2$. Using $\|G(x)\|_2 \leq \sqrt{2}$, one can check that in this case we must have $\|G(x) - P_x[G(x)]\|_2 \leq 2^{-\ell+1}$. By choosing $\ell \geq 3$, we can ensure that $\|P_x[G(x)]\|_2 \geq 1/4$, i.e.\ $\|P_x[G(x)]\|_\infty \geq 1/(4 \sqrt{3})$. Thus, if $\varepsilon \leq 1/8$, then we must have $\|P_x^\varepsilon[G(x)]\|_\infty \geq  1/16$. Thus, if $\varepsilon < 1/16$ no solution can occur anywhere, except in the embedding region.

Now consider any $x=(x_1,x_2,x_3) \in T_2$ that lies in one of the two embedding regions. Note that $G(x)=P_x[G(x)]=P_x^\varepsilon[G(x)]$ here. Away from the boundary (i.e.\ $2^{-\ell}$ away from it), the averaging trick ensures that a solution can only occur if (at least) two of the averaging points lie in squarelets with opposite arrows (assuming we have chosen $\varepsilon < 1/4$). It remains to figure out what happens close to the boundary. Pick $\ell = n+1$. Consider a point that lies in the upper left part of the boundary, i.e.\ $x_1 \in [2,2+2^{-n-1}], x_2 \in [0,1/2]$ or $x_1 \in [2,3/2], x_2 \in [1/2-2^{-n-1},1/2]$. For now assume that there are no bogus points in the average trick. Then, $v_L$, $v_M$ and $v_C$ all lie in the same quadrant ($v_1 \geq 0, v_2 \leq 0$). Furthermore, we have $\|v_L\|_1 \geq 2$, $\|v_M\|_1 \geq 1/2$ and $\|v_C\|_1 = 1$. Thus, $\|G(x)\|_1 \geq  \min\{b_L, 2\} + \min\{b_M, 1/2\} + \min\{b_C, 1\} \geq 1/2$. There are at most $2$ bogus points out of $8$ samples and they all yield vectors with $1$-norm at most $1$. Thus, taking into account bogus points, we get $\|G(x)\|_1 \geq 1/2-2/8=1/4$, which implies $\|G(x)\|_\infty \geq 1/8$. The same argument applies to the other parts of the boundary.

Thus, by picking $\ell = n+1$ and $\varepsilon = 1/32$, from any $x \in T_2$ with $\|P_x^\varepsilon[G(x)]\|_\infty \leq \varepsilon$, we can efficiently find a solution to the \textsc{2D-Variant-Brouwer} instance.
\end{proof}

\end{document}